\def\dOi{12(1:5)2016}
\subjclass{F.4.3}
\setlist[enumerate,1]{label=\hbox{$(\arabic*)$}}
\newcounter{sauvegarde}
\DeclareMathAlphabet\EuScript{U}{eus}{m}{n}
\SetMathAlphabet\EuScript{bold}{U}{eus}{b}{n}
\definecolor{my1}{cmyk}{0,.6,0,0}
\definecolor{my2}{cmyk}{.3,.0,.0,.0}
\newcommand\Ss{\ensuremath{\EuScript{S}}\xspace}
\newcommand\Gs{\ensuremath{\EuScript{G}}\xspace}
\newcommand\Ts{\ensuremath{\EuScript{T}}\xspace}
\newcommand\Rs{\ensuremath{\EuScript{R}}\xspace}
\newcommand\Hs{\ensuremath{\EuScript{H}}\xspace}
\newcommand\Cs{\ensuremath{\EuScript{C}}\xspace}
\newcommand\Kb{\ensuremath{\mathbf{K}}\xspace}
\newcommand\Lb{\ensuremath{\mathbf{L}}\xspace}
\newcommand\Hb{\ensuremath{\mathbf{H}}\xspace}
\newcommand\Fb{\ensuremath{\mathbf{F}}\xspace}
\newcommand{\efgame}{Ehrenfeucht-Fra\"iss\'e\xspace}
\newcommand\nat{\ensuremath{\mathbb{N}}\xspace}
\newcommand\Is{\ensuremath{\mathcal{I}}}
\DeclareMathOperator{\Sat}{Sat}
\DeclareMathOperator{\downclos}{\downarrow}
\newcommand{\siw}{\ensuremath{\Sigma_{2}(<)}\xspace}
\newcommand{\siu}{\ensuremath{\Sigma_{1}(<)}\xspace}
\newcommand{\sict}{\ensuremath{\Sigma_{3}}\xspace}
\newcommand{\ltl}{\ensuremath{\textup{LTL}}\xspace}
\newcommand\fodw{\ensuremath{\textup{FO}^2(<)}\xspace}
\newcommand{\fo}{\ensuremath{\text{FO}}\xspace}
\newcommand{\fow}{\ensuremath{\textup{FO}(<)}\xspace}
\newcommand\foeq[1]{\ensuremath{\equiv_{#1}}\xspace}
\newcommand\kfoeq{\foeq{k}}
\newcommand{\foclos}[2]{\ensuremath{[#1]_{\foeq{k}}}\xspace}
\newcommand\Sep{\ensuremath{\mathsf{Sep}}\xspace}
\newcommand\iword{\ensuremath{\omega}-word\xspace}
\newcommand\ilang{\ensuremath{\omega}-language\xspace}
\newcommand\iwords{\ensuremath{\omega}-words\xspace}
\newcommand\ilangs{\ensuremath{\omega}-languages\xspace}
\newcommand\sisemi{sub-\ensuremath{\omega}-semigroup\xspace}
\newcommand\isemi{\ensuremath{\omega}-semigroup\xspace}
\newcommand\isemis{\ensuremath{\omega}-semigroups\xspace}
\newcommand\Hrel{\ensuremath{{\mathscr{H}}}}
\newcommand{\unclos}[1]{\ensuremath{\llfloor{#1}\rrfloor}\xspace}
\theoremstyle{plain}
\newtheorem{theorem}[thm]{Theorem}
\newtheorem{corollary}[thm]{Corollary}
\newtheorem{proposition}[thm]{Proposition}
\newtheorem{lemma}[thm]{Lemma}
\newtheorem{fct}[thm]{Fact}
\newtheorem{remark}[thm]{Remark}
\newtheorem*{claim}{Claim}
\let\leq\leqslant
\let\geq\geqslant
\let\emptyset\varnothing
\begin{document}

\title[SEPARATING REGULAR LANGUAGES WITH FIRST-ORDER LOGIC]
      {Separating Regular Languages with First-Order Logic}

\author[T.~Place]{Thomas~Place}
\author[M.~Zeitoun]{Marc~Zeitoun}

\address{LaBRI, Bordeaux University, France}
\email{\{thomas.place, marc.zeitoun\}@labri.fr} \thanks{Supported by ANR 2010 BLAN 0202 01
  FREC.\@ This study has also been carried out with financial support from the French
  State, managed by the French National Research Agency (ANR) in the frame of
  the ``Investments for the future'' Programme IdEx Bordeaux -- CPU
  (ANR-10-IDEX-03-02).}

\keywords{Words, Infinite Words, Regular Languages, Semigroups,
  First-Order Logic, Expressive Power, \efgame games, Separation}

\begin{abstract}
  Given two languages, a separator is a third language that contains the first
  one and is disjoint from the second one. We investigate the following
  decision problem, called \emph{separation}: given two regular
  languages of finite words, decide whether there exists a first-order
  definable separator. A more general problem was solved in an algebraic
  framework by Henckell in 1988, although the connection with separation was
  pointed out only in 1996, by Almeida. The result was then generalized
  by Henckell, Steinberg and Rhodes in 2010. In this paper, we present a new,
  self-contained and elementary proof of it, which actually covers the
  original result of Henckell.

  We prove that in order to answer this question, sufficient information can
  be extracted from semigroups recognizing the input languages, using a
  fixpoint computation, similar to that originally proposed by Henckell. Given
  as input a morphism recognizing both languages to be separated, this yields
  an \textsc{Exptime} algorithm for checking first-order separability.
  Moreover, the correctness proof of this algorithm yields a stronger result,
  namely a description of a possible separator. More precisely, one can
  compute a bound on the quantifier rank of potential separators, as well as a
  first-order formula that describes a separator, if there exists one.
  Finally, we prove that this technique can be generalized to answer the same
  question for regular languages of infinite words.
\end{abstract}

\maketitle

\section{Introduction}
\label{sec:intro}
In this paper, we investigate a decision problem on word languages:
the \emph{separation problem}. The problem is parametrized by a class
\Sep of \emph{separator languages} and is as follows: given as input
two regular word languages, decide whether there exists a third
language in \Sep containing the first language while being disjoint
from the second one.

More than the decision procedure itself, the primary motivation for
investigating this type of problem is the insight it gives on the class
\Sep. Indeed, the separation problem is a generalization of the
\emph{membership problem}, which is often considered as the right
approach to understand the expressive power of a class of languages.
In this restricted problem, one only needs to decide whether a single
input regular language already belongs to the class $\Sep$ under
investigation. Intuitively, in order to get such a decision procedure,
one has to consider \emph{all} regular languages simultaneously, which
requires a strong understanding of the \emph{expressive power} of
\Sep. Since regular languages are closed under complement, testing
membership can be achieved by testing whether the input is separable
from its complement. Therefore, membership can be reduced to
separation, which makes separation more general.

It turns out that separation is actually strictly more general than
membership and solving it requires a deeper understanding of the class
\Sep. More than the expressive power, it requires an understanding of
the \emph{discriminating power} of \Sep. This means that while
intrinsically more difficult, solving the separation problem is also
more rewarding than solving the membership problem. In both cases, the
problem amounts to finding a language in \Sep. However, in the
membership case, there is only one candidate, which is already known:
the input. Therefore, we start with a fixed recognizing device for
this unique candidate and powerful tools are available,
\emph{viz}.~the syntactic semigroup of the language, which is now
accepted as the natural tool for solving the membership problem for
word languages. In the separation case, there can be infinitely
many candidates as separators, which means that there is no fixed
recognition device that we can use. An even harder question then is to
actually construct a separator language in~\Sep.

Investigating the deeper separation problem can also be relevant when a pure
membership approach fails. Many natural classes of languages are built on top
of weaker classes. For example, in logic, more powerful classes can be built
on top of weaker ones by adding predicates to the signature. When
investigating the membership problem, a natural approach would be to first
obtain a solution for the weaker class and then to transfer it to the extended
one. However, this approach fails in general. Actually, many extensions of
classes of languages are known not to preserve decidability of
membership~\cite{ABR:Undec-Identity:92,Rhodes99-undec,DBLP:journals/ijac/Auinger10}.
The reason is that such a transfer result requires more information on the original
class than what a solution to membership provides. This makes the deeper
separation problem a more promising setting as already noted
in~\cite{AZ97-J,Steinberg:delay2001}. A recent example is the quantifier
alternation hierarchy of first-order logic: in~\cite{PZ:icalp14}, it was
proved that solving the separation problem for level $i$ in this hierarchy
yields the solution for the membership problem at level~($i+1$).

\smallskip\noindent{\bf First-order logic.} In this paper, we choose
\Sep as the class of \emph{languages definable by first-order
  sentences} (\emph{i.e.}, sets of words that satisfy some first-order
sentence).  In this context, the separation problem can be rephrased
as follows: given two regular languages as input, decide whether there
exists a first-order sentence that is satisfied by all words of the
first language, and by no word of the second one. Thus, such a
formula witnesses that the input languages are disjoint.

Within monadic second order logic, which defines on finite words
all regular languages, first-order logic is often considered as the
yardstick.  It is a robust class having several
characterizations~\cite{Diekert&Gastin:First-order-definable-languages:2008:a}. It
corresponds to star-free languages, and has the same expressive power
as linear temporal logic~\cite{kltl}.  In particular, it was the first
natural class for which the membership problem was proved to be
decidable.  This result, known as Schützenberger's
theorem~\cite{sfo,mnpfo}, served as a template and a starting point of
a line of research that successfully solved the membership problem for
a wide assortment of classes of regular languages. This makes
first-order logic the natural candidate to serve as the example for
devising a general approach to the separation~problem.

Schützenberger's theorem states that first-order definable languages
are exactly those whose syntactic semigroup is aperiodic, \emph{i.e.},
has only trivial subgroups. Since the syntactic semigroup of a
language is computable and aperiodicity is a decidable property, this
yields a decision procedure for membership. Schützenberger's original
proof has been refined over the years. Our own proof for separation by
first-order logic actually generalizes a more recent proof by
Wilke~\cite{wfo}. Similar results~\cite{tfo,pfo} make it possible to decide
first-order definability for languages of infinite words, or finite or
infinite Mazurkiewicz
traces. See~\cite{Diekert&Gastin:First-order-definable-languages:2008:a}
for a survey.

\smallskip\noindent
{\bf Contributions and main ideas.} We obtain our separation algorithm
for first-order logic by relying on a specific framework. A key idea
is that, in order to separate two regular languages with first-order
logic, one needs to consider more languages than just these two. One
has to consider a single morphism from $A^+$ into a finite semigroup
$S$ that recognizes them both and solve the separation problem simultaneously for all
pairs of languages that are recognized by it. Indeed,
the set of all languages recognized by a semigroup morphism has
structure: considering them all as a whole allows us to exploit
this structure.

More precisely, our framework is designed to reduce separation to the
following more general problem. Given a morphism $\alpha$ from $A^+$
into a finite semigroup $S$, we want to construct an \fo-partition of
$A^+$ (a {\bf finite} partition of $A^+$ into first-order definable
languages) that is an  ``optimal approximation'' of the languages
recognized by $\alpha$. The main point is that a necessary condition
for an \fo-partition to be optimal (for $\alpha$) is that any two
recognized languages are \fo-separable \emph{if and only if} they can
be separated by a language built as a union of languages in the
\fo-partition (however this condition is not sufficient, which is why
this problem is more general).

Our solution is presented as follows. First, we obtain a fixpoint
algorithm that, given a morphism $\alpha$ as input, computes an
object that we call the \emph{optimal imprint with respect to \fo on
  $\alpha$}. Intuitively, this ``optimal imprint'' contains
information about the \fo-partitions that are optimal for $\alpha$.
In fact, this information includes which pairs of languages recognized
by $\alpha$ are \fo-separable. In other words, this yields a decision
procedure for the separation problem associated to \fo. This fixpoint
algorithm is complemented by a generic technique for constructing
optimal \fo-partitions by induction (this is actually a byproduct of
the correctness proof of the fixpoint algorithm). This is of particular
interest as this yields an inductive way to build first-order
separators when they exist.

An important observation is that the ``\emph{optimal imprint with
respect to \fo on $\alpha$}'' that our fixpoint algorithm computes
is actually an alternate definition of a previously known notion:
the so-called \emph{aperiodic pointlike sets} (whose original definition is algebraic and very different from the
one we use in this paper). While we never use this fact in the paper,
it connects our results to those of
Henckell~\cite{Henckell:Pointlike-sets:-finest-aperiodic:1988:a} (see
also~\cite{DBLP:journals/ijac/HenckellRS10a,qt} which answers the problem for even more general classes). Indeed,
in~\cite{Henckell:Pointlike-sets:-finest-aperiodic:1988:a}, Henckell does not consider the separation problem: his main objective is to
find an algorithm that computes these aperiodic pointlike sets.
In fact, the connection between the separation
problem and the pointlike sets was only observed later
by Almeida~\cite{MR1709911}\footnote{This connection is the analogue of the equivalence
  $\ref{item:cov2sep1}\Longleftrightarrow\ref{item:cov2sep2}$ in our
  Theorem~\ref{thm:seppart}.}. Hence, our fixpoint algorithm and its
proof can be viewed as a new proof of Henckell's result: one can
compute the aperiodic pointlike sets of a semigroup.

Note however that our approach is vastly different from that of
Henckell. In particular, it is more rewarding with respect to the
separation problem. Indeed, the motivations and the proofs
of~\cite{Henckell:Pointlike-sets:-finest-aperiodic:1988:a,DBLP:journals/ijac/HenckellRS10a}
are purely algebraic and provide no intuition on the underlying logic.
Our contributions differ from those of~\cite{Henckell:Pointlike-sets:-finest-aperiodic:1988:a,DBLP:journals/ijac/HenckellRS10a}
in several~ways.
\begin{itemize}[leftmargin=*]
\item First, we give a new and self-contained proof that the separation
  problem by first-order languages is decidable. It is independent from those
  of~\cite{Henckell:Pointlike-sets:-finest-aperiodic:1988:a,DBLP:journals/ijac/HenckellRS10a},
  and relies on elementary ideas and notions from language theory only, making
  it accessible to computer scientists. We do not use any involved
  construction from semigroup theory: we work directly with the logic
  itself.   As mentioned above, the proof refines Wilke's membership algorithm~\cite{wfo}.

\item Second, when the input languages are separable, our approach makes it
  possible to inductively compute a first-order formula that defines a
  separator: we have a generic way to construct optimal \fo-partitions.
  In addition, we provide a bound on the \emph{expected quantifier rank} of a
  potential separator.

\item Third, as a consequence of our algorithm, we obtain~an \textsc{Exptime}
  upper bound (while complexity is not investigated
  in~\cite{Henckell:Pointlike-sets:-finest-aperiodic:1988:a}, a rough analysis
  yields an \textsc{Expspace} upper~bound).

\item Finally, the techniques
  of~\cite{Henckell:Pointlike-sets:-finest-aperiodic:1988:a,DBLP:journals/ijac/HenckellRS10a}
  are tailored to work with finite words only. We also solve the
  separation problem for languages of \emph{infinite words} by
  first-order definable languages, by a smooth extension of our
  techniques.
\end{itemize}
Since we do not follow the proofs
of~\cite{Henckell:Pointlike-sets:-finest-aperiodic:1988:a,DBLP:journals/ijac/HenckellRS10a},
it is not surprising that we obtain a different algorithm. However, we are
able to derive two variations of it, which allows us to give an alternate
and elementary correctness proof of Henckell's original algorithms.

\medskip\noindent{\bf Related work.} First-order logic has a number of
important fragments. The separation question makes sense when choosing such
natural subclasses as classes of separators. It has already been solved for
the case of local fragments, such as locally testable (LT) and locally
threshold testable languages (LTT), although the problem is already NP-hard
starting from two DFAs as input, while membership is known to be
polynomial~\cite{Beauquier&Pin:Languages-scanners:1991:a}. The algebraic
varieties associated to the classes LT and LTT in Eilenberg's correspondence
are well-known, namely the class \textsf{LSl} of all finite local
semilattices, and the semidirect product $\mathsf{Acom}*\mathsf{D}$ of
commutative and aperiodic semigroups with right zero semigroups. Using these
correspondences and the algebraic interpretation of separation given
in~\cite{MR1709911}, algebraic proofs were given, both for
LT~\cite{Costa&Nogueira:Complete-reducibility-pseudovariety:2009:a,Costa:Free-profinite-locally-idempotent:2001:a}
and for
LTT, via~\cite{Beauquier&Pin:Languages-scanners:1991:a,Straubing:Finite-semigroup-varieties-form:1985:a,Steinberg:98,Steinberg:01}.
Although indirect, these proofs actually provide more information than what is
needed for separation alone. A direct and elementary approach for both classes
was also presented in~\cite{ltltt:2013,PvRZ:LTT:14}.

The separation problem is also decidable for the fragment of first-order logic made of Boolean
combinations of \siu sentences (that is, first order sentences without any
quantifier alternation), as a consequence of~\cite{MR1709911,AZ97-J}, and then
obtained directly and independently in~\cite{sep_icalp13,PvRZ:mfcs}. It has then been shown to be decidable for
the first fragments of first-order logic in the quantifier alternation
hierarchy, namely the ones consisting of \siw\cite{PZ:icalp14}, respectively
of $\sict(<)$ sentences~\cite{pseps3} (\emph{i.e}, first order sentences
of the form $\exists^*\forall^*\varphi$, respectively of the form
$\exists^*\forall^*\exists^*\varphi$, with $\varphi$ quantifier-free). In view
of the aforementioned transfer result, this yields decidability
of membership for the next level, $\Sigma_{4}(<)$. Within this hierarchy,
membership remains open for level 5 and above (hence, separation is open
for level 4 and~above).

Finally, the problem has also been
investigated for the fragment \fodw of first-order logic using 2 variables
only, and again has been proven to be decidable~\cite{PvRZ:mfcs}.

\medskip\noindent {\bf Paper outline.}  We first give the necessary
definitions and terminology: languages and semigroups for finite words are
defined in Section~\ref{sec:prelims} and first-order logic is defined in
Section~\ref{sec:fo}. Section~\ref{sec:main} is devoted to the presentation of
our algorithm solving first-order separation through the computation of sets
that cannot be distinguished by first-order logic. Sections~\ref{sec:correc}
and~\ref{sec:comp} are devoted to proving the soundness and completeness of
this algorithm, respectively. In Section~\ref{sec:altalgo}, we present
alternate versions of our algorithm. In Section~\ref{sec:omega}, we recall the
preliminary definitions for tackling the separation problem in the setting of
infinite words. In Section~\ref{sec:omegasep}, we state a
generalization to infinite words of our algorithm, for which we prove
soundness in Section~\ref{sec:corr-algor} and completeness in
Section~\ref{sec:compl-algor}.

\medskip
This paper is the journal version of~\cite{PZ:lics14}. From the
conference version, the missing proofs have been added, separation is now
presented in a generic, language-theoretic setting, and the proof
of the algorithm has been entirely rewritten so that it now
constructs an actual separator by induction when it exists.

\section{Preliminaries}
\label{sec:prelims}
\newcommand\one{\textup{1}}

In this section, we provide terminology for words, semigroups and
languages. All the definitions are for finite words. We
delay the definitions for infinite words to Section~\ref{sec:omega}.

\medskip
\noindent {\textbf{Semigroups.}} A semigroup is a set $S$ equipped
with an associative operation $s\cdot t$ (often written $st$). A
monoid is a semigroup $S$ having an identity element $1_S$,
\emph{i.e.}, such that $s\cdot1_S=1_S\cdot s=s$ for all $s\in S$.
Finally, a group is a monoid such that every element $s$ has an
inverse $s^{-1}$, \emph{i.e.}, such that $s\cdot s^{-1}=s^{-1}\cdot
s=1_S$.

Given a \emph{finite} semigroup $S$, it is folklore and easy to see that
there is an integer $\omega(S)$ (denoted by $\omega$ when $S$ is
understood) such that for all $s$ in $S$,
$s^\omega$ is idempotent:~$s^\omega=s^\omega s^\omega$.

\medskip
\noindent
{\textbf{Words, Languages, Morphisms.}} We fix a finite alphabet
$A$. We denote by $A^+$ the set of all nonempty finite
words and by $A^{*}$ the set of all finite words over $A$. If $u,v$ are words, we
denote by $u \cdot v$ or by $uv$ the word obtained by the concatenation of $u$ and
$v$. Observe that $A^+$ (resp.\ $A^{*}$) equipped with the
concatenation operation is a semigroup (resp.\ a monoid).

For convenience, we only consider languages that do not contain the
empty word. That is, a language is a subset of $A^+$ (this does not
affect the generality of the argument). We work with
regular languages, \emph{i.e.}, languages definable by
\emph{nondeterministic finite automata}~(NFA).

We shall exclusively work with the algebraic representation of regular
languages in terms of semigroups.  We say that a language \emph{$L$ is
  recognized by a semigroup $S$} if there exists a semigroup
morphism $\alpha : A^+ \rightarrow S$ and a subset $F \subseteq S$
such that $L = \alpha^{-1}(F)$. It is well known that a language is
regular if and only if it can be recognized by a
\emph{finite}~semigroup. Moreover, from any NFA recognizing some
language $L$, one can compute a canonical semigroup recognizing $L$,
called the \emph{syntactic semigroup} of $L$.

When working on separation, we consider as input two regular languages
$L_0,L_1$. It will be convenient to have a single semigroup
recognizing both of them, rather than having to deal with two
objects. Let $S_0,S_1$ be semigroups recognizing $L_0,L_1$ together
with the associated morphisms $\alpha_0, \alpha_1$, respectively. Then, $S_0 \times
S_1$ equipped with the componentwise multiplication $(s_0,s_1) \cdot
(t_0,t_1)=(s_0 t_0,s_1 t_1)$ is a semigroup that recognizes both $L_0$ and $L_1$
with the morphism $\alpha : w \mapsto (\alpha_0(w),\alpha_1(w))$.
From now on, we work with such a single semigroup recognizing both
languages, and we call $\alpha$ the associated morphism.

\medskip
\noindent {\bf Semigroup of Subsets.} As explained in the
introduction, our separation algorithm works by computing special
subsets of a semigroup recognizing both input languages. Intuitively,
these subsets are those that cannot be distinguished by first-order
logic. More precisely, by \emph{special subset}, we mean that any first-order
definable language has an image under $\alpha$ that either contains
\emph{all} elements of the subset, or \emph{none} of them. For this
reason, we work with the semigroup of~subsets.
\smallskip
Let $S$ be a semigroup. Observe that the set $2^S$ of subsets of
$S$ equipped with the operation $$T \cdot T' = \{s \cdot s' \mid s\in
T, \quad s' \in T'\}$$ is a semigroup, that we call the
\emph{semigroup of subsets of $S$}. Note that $S$ can be viewed as a
subsemigroup of $2^S$, since $S$ is isomorphic to the semigroup
$\big\{\{s\} \mid s \in S\big\} \subseteq 2^{S}$.
We denote by $\Rs,\Ss,\Ts,\dots$ subsemigroups of a
semigroup of subsets.

\smallskip\noindent\textbf{Downset $\downclos\Ss$.} Let $\Ss\subseteq
2^S$ be any subset of $2^{S}$. We define the \emph{downset} $\downclos\Ss$ of \Ss as
\[
\downclos \Ss=\{T\in2^S\mid\exists T'\in\Ss,\quad T\subseteq T'\}.
\]
Clearly, we have $\Ss\subseteq\downclos\Ss$. Moreover, if $\Ss$ is a
subsemigroup of $2^S$, it is easy to check that $\downclos\Ss$ is a
subsemigroup as well.

\smallskip\noindent\textbf{Union \unclos{\Ss}.} For $\Ss \subseteq
2^S$ any subset of $2^S$, we define $\unclos{\Ss} \subseteq S$, the
\emph{union} of $\Ss$, as the set
\[
\unclos{\Ss} = \bigcup_{T \in \Ss} T \subseteq S
\]
We call \emph{index} of \Ss the size of its union, \emph{i.e.},
$|\unclos{\Ss}|$. By definition, we have the following
fact.

\begin{fct} \label{fct:uprod}
Set $\Ss \subseteq 2^S$ and $\Ts \subseteq 2^S$, then $\unclos{\Ss
  \cdot \Ts} = \unclos{\Ss} \cdot \unclos{\Ts}$.
\end{fct}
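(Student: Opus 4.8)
The plan is simply to unwind the definitions of the two operations involved and check that both sides describe the same subset of $S$. Recall that an element $x\in S$ lies in $\unclos{\Ss\cdot\Ts}$ exactly when $x\in U$ for some $U\in\Ss\cdot\Ts$, that is, when $U=T\cdot T'$ for some $T\in\Ss$ and $T'\in\Ts$; unfolding the definition of the product of subsets, this happens precisely when $x=s\cdot s'$ for some $s\in T$ and $s'\in T'$. On the other hand, $x$ lies in $\unclos{\Ss}\cdot\unclos{\Ts}$ exactly when $x=s\cdot s'$ with $s\in\unclos{\Ss}$ and $s'\in\unclos{\Ts}$, i.e.\ with $s\in T$ for some $T\in\Ss$ and $s'\in T'$ for some $T'\in\Ts$. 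These two conditions are literally the same, so the two sets coincide.

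Concretely I would present this as a double inclusion. For $\unclos{\Ss\cdot\Ts}\subseteq\unclos{\Ss}\cdot\unclos{\Ts}$, pick $x\in\unclos{\Ss\cdot\Ts}$, choose $T\in\Ss$, $T'\in\Ts$, $s\in T$, $s'\in T'$ with $x=ss'$ as above, and observe that $s\in\unclos{\Ss}$ and $s'\in\unclos{\Ts}$, hence $x=ss'\in\unclos{\Ss}\cdot\unclos{\Ts}$. For the reverse inclusion one reads the same chain of choices backwards. There is essentially no obstacle here; the only mild care needed is to keep the product of subsets (the operation of $2^S$, used inside $\Ss\cdot\Ts$) mentally distinct from the product of $S$ (used in $\unclos{\Ss}\cdot\unclos{\Ts}$), even though the two agree under the embedding of $S$ into $2^S$.
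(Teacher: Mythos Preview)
Your proof is correct and is exactly the definition-unwinding the paper has in mind; the paper itself considers this fact immediate ``by definition'' and does not spell it out.
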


\section{First-Order Logic and Separation}
\label{sec:fo}
This section is devoted to the definition of first-order logic on
words. See~\cite{Thomas:Languages-automata-logic:1997:a,Diekert&Gastin:First-order-definable-languages:2008:a,bookstraub} for details on
this classical notion.

\medskip
\noindent
{\textbf{First-Order Logic.}} We view words  as logical structures
composed of a sequence of positions labeled over $A$. We denote by $<$
the linear order over the positions. We work with first-order logic
\fow using a unary predicate $a(x)$ for each $a \in A$,
which selects
positions $x$ labeled with an $a$, as well as a binary predicate for the
linear order $<$. A language $L$ is said to be \emph{first-order definable} if there
exists an \fow formula $\varphi$ such that $L = \{w \in A^+ \mid w
\models \varphi\}$. We write \fo for the class of all first-order definable
languages.

There are many known characterizations of the class of first-order
definable languages. Kamp's Theorem~\cite{kltl} states that it is
exactly the class of languages definable in linear temporal logic
\ltl.  In~\cite{mnpfo}, it was proven that this is also the class of
languages that can be recognized with \emph{counter-free automata} as
well as the class of star-free languages (\emph{i.e.}, languages
definable by a regular expression that may use complement, but does
not use the Kleene star). This result bridged the gap with
Schützenberger's Theorem~\cite{sfo}, which characterizes star-free
languages as those whose syntactic semigroup is \emph{aperiodic} (\emph{i.e.},
all its elements $s$ satisfy the equality $s^\omega = s^{\omega+1}$).
The separation algorithm that we present in this paper can be viewed
as a generalization of Schützenberger's Theorem. In particular, we
reprove this theorem as a simple corollary of our algorithm.
Note that conversely, using Sch\"utzenberger's result as a black box doesn't seem
to help much to obtain a simpler proof with our approach.

\medskip\noindent{\textbf{Separation}.} Given languages $L,L_0,L_1$,
we say that $L$ \emph{separates} $L_0$ from $L_1$ if
\begin{equation*}
  L_0 \subseteq L \text{ and } L_1 \cap L = \emptyset.
\end{equation*}
Given a class of languages \Cs, the pair $(L_0,L_1)$ is said to be
\emph{\Cs-separable} if some language $L \in \Cs$ separates $L_0$
from~$L_1$. Note that when \Cs is closed under complementation (for
example when $\Cs = \fow$), $(L_0,L_1)$ is \Cs-separable if and only
if $(L_1,L_0)$ is. Therefore, we simply say that $L_0$ and $L_1$ are
\Cs-separable in this case.

In this paper, we present an algorithm that decides whether two
regular languages are \fo-separable. Let us give an example of two
languages that are not \fo-separable.

\begin{exa} \label{ex:main}
Let $K_0=(aa)^*$, $K_1=(aa)^*a$ and
\begin{align*}
    L_0 &= (bK_0bK_1)^+,\\
    L_1 &= (bK_0bK_1)^*bK_0.
\end{align*}

It is well known that $a^{2^k}$ and $a^{2^k-1}$ cannot be
distinguished by any \fo-sentence of quantifier rank~$k$, see
\emph{e.g.}~\cite{bookstraub} (recall here that the \emph{quantifier rank} of a first-order formula $\varphi$ is the length of the largest sequence of nested
quantifiers in $\varphi$ --- the rank is a  usual way to classify first-order formulas).
Therefore, $K_0$ and $K_1$ are not
\fo-separable. Reusing this argument then shows that $L_0$ and $L_1$
are not \fo-separable either. We shall explain below how this is
detected by our algorithm.
\end{exa}

The main result of the paper is the following theorem.

\begin{theorem} \label{th:main}
Let $L_0,L_1$ be two regular languages recognized by a morphism
$\alpha: A^+ \to S$ into a finite semigroup. The two following items
hold.
\begin{enumerate}
\item\label{item:thmain:1} One can decide in {\sc Exptime} with respect to $|S|$ whether
  $L_0$ and $L_1$ are \fo-separable.
\item\label{item:thmain:2} When $L_0$ and $L_1$ are \fo-separable, one can construct an
  actual FO-separator defined by a formula of quantifier rank at most
  $|A|2^{|S|^2}$.
\end{enumerate}
\end{theorem}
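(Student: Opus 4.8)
The plan is to reduce first-order separability to a fixpoint computation on the semigroup of subsets of $S$. The central object is what the authors call the \emph{optimal imprint} with respect to \fo on the morphism $\alpha$: a downward-closed set of subsets of $S$ that captures exactly which ``patterns'' of $S$-elements can be simultaneously realized by words that no small-rank \fo-formula can pull apart. Concretely, for a quantifier rank $k$, say a subset $T \subseteq S$ is \emph{$k$-achievable} if there is a single \foeq{k}-class of words whose $\alpha$-image contains $T$; then $L_0$ and $L_1$ are \fo-separable if and only if no $k$-achievable $T$ meets both $\alpha(L_0)$ and $\alpha(L_1)$, for $k$ large enough. The first step is to show that the collection of such achievable subsets stabilizes and can be computed by closing an initial seed (the singletons $\{\alpha(a)\}$ for $a \in A$) under two operations: \emph{multiplication} in $2^S$ (concatenation of words), and an \emph{\fo-closure operation} reflecting the fact that $u^{n}$ and $u^{n+1}$ become \foeq{k}-indistinguishable once $n \ge 2^k$ — so from a subset $T$ in the idempotent position $e = \unclos{\{T\}}^\omega$ one is allowed to add products $T \cdot U \cdot T$ whenever $U$ sits ``between'' copies of $T$. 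This gives the fixpoint algorithm of Section~\ref{sec:main}.

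Next I would establish the two directions separately, mirroring the paper's split into soundness (Section~\ref{sec:correc}) and completeness (Section~\ref{sec:comp}). For \textbf{soundness} — if the fixpoint detects no bad subset, there is a separator — I would proceed by induction, using the downset structure of the imprint to build an \fo-partition of $A^+$ into languages of bounded quantifier rank such that each block's $\alpha$-image is a subset lying in the optimal imprint. The induction follows the structure of words: the base case handles letters, the concatenation case uses Fact~\ref{fct:uprod} ($\unclos{\Ss \cdot \Ts} = \unclos{\Ss} \cdot \unclos{\Ts}$) to combine partitions, and the delicate case is the iteration/idempotent step, where one shows that a carefully chosen \fo-formula counting occurrences of a factor up to threshold $2^k$ refines the partition appropriately. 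Once this \fo-partition is in hand, a separator is obtained as the union of those blocks meeting $\alpha(L_0)$; it is disjoint from $L_1$ precisely because no block's image meets both $\alpha(L_0)$ and $\alpha(L_1)$.

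For \textbf{completeness} — if the fixpoint detects a bad subset $T$ with $T \cap \alpha(L_0) \ne \emptyset$ and $T \cap \alpha(L_1) \ne \emptyset$, then no \fo-sentence separates $L_0$ from $L_1$ — I would argue by exhibiting, for every quantifier rank $k$, two words $w_0 \in L_0$ and $w_1 \in L_1$ that are \foeq{k}-equivalent. This is done by tracking how $T$ was produced in the fixpoint and replaying that derivation with concrete words, using \efgame arguments (Ehrenfeucht-Fra\"iss\'e games) to glue the indistinguishability across concatenations and, crucially, across the idempotent-pumping step, where the standard fact that $u^{2^k} \equiv_k u^{2^k - 1}$ lets the two sides differ in the number of repetitions of a block while remaining \foeq{k}-equivalent. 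The quantifier-rank bound $|A|2^{|S|^2}$ in item~\ref{item:thmain:2} then falls out of bounding the depth of the fixpoint recursion (at most $|A|$ and the number of subsets is $2^{|S|}$, with the relevant iteration giving the stated exponent) and tracking how rank accumulates through the inductive separator construction. Finally, item~\ref{item:thmain:1}, the {\sc Exptime} bound, is a direct complexity analysis: the semigroup of subsets $2^S$ has size $2^{|S|}$, the fixpoint reaches saturation in polynomially many rounds in $2^{|S|}$, and each round performs polynomially many semigroup operations, each computable in time polynomial in $2^{|S|}$.

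The main obstacle will be the idempotent-pumping step in both directions: formulating the \fo-closure operation on subsets so that it is simultaneously (i) sound — any subset it produces really is realized by a single \foeq{k}-class — and (ii) complete — it captures \emph{all} the merging that \fo of rank $k$ is forced to do around idempotent powers. Getting the \efgame bookkeeping right when many nested idempotent blocks interact, and ensuring the accumulated quantifier rank stays within the claimed bound, is where the real work lies; the concatenation and base cases are routine by comparison.
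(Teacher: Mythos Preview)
Your overall framework is right: compute a fixpoint in $2^S$, characterize separability via pairs in the optimal imprint (Theorem~\ref{thm:seppart}), prove one inclusion by \efgame arguments and the other by constructing an \fo-partition. Two points, however, need correction.

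First, your \fo-closure operation is misstated. The operation the paper uses (and the one that works) is simply: from any $T$ already in the set, add $T^{\omega} \cup T^{\omega+1}$. Your formulation ``from a subset $T$ in the idempotent position $e = \unclos{\{T\}}^\omega$ one is allowed to add products $T \cdot U \cdot T$ whenever $U$ sits between copies of $T$'' is not a well-defined closure rule, and in any case does not coincide with $T^{\omega} \cup T^{\omega+1}$. Getting this right is essential, since both directions of the correctness proof hinge on exactly this operation.

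Second, and more seriously, your plan for the hard direction --- constructing the \fo-partition (what you call soundness; the paper calls it completeness, Section~\ref{sec:comp}) --- does not match the paper and is not fleshed out enough to stand on its own. You propose an induction ``on the structure of words'' with a base case for letters, a concatenation case, and an ``iteration/idempotent step'' using formulas that count occurrences up to threshold $2^k$. The paper does something quite different: it runs a two-parameter induction on $(|\unclos{\Ss}|,\,|B|)$ over an \emph{arbitrary} surjective morphism $\beta : B^+ \to \Ss \subseteq 2^S$ (Proposition~\ref{prop:pumping}), splitting on whether $\beta$ is \emph{tame}. In the tame case one shows $\unclos{\Ss} \in \Sat(\Ss)$ via a group argument (Lemma~\ref{lem:basecase}); in the non-tame case one picks a witness letter $b$, decomposes every word into a prefix in $C^+$, an infix in $(b^+C^+)^+$, and a suffix in $b^+$, and recurses on strictly smaller alphabet or strictly smaller index (Lemmas~\ref{lem:partpref}--\ref{lem:partinf}). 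The rank bound $|B|\cdot 2^{|\unclos{\Ss}|^2}$ comes out of tracking this specific recursion. Your ``counting occurrences of a factor'' idea does not give an obvious inductive invariant that decreases, and it is unclear how it would handle arbitrary nesting of idempotents or yield the stated rank bound. If you want to pursue a route different from Wilke's tameness induction, you would need to specify precisely what parameter decreases and how the partition is assembled at each step.

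A minor point: your labels ``soundness'' and ``completeness'' are swapped relative to the paper's usage (the paper calls the \efgame direction soundness and the partition-construction direction completeness), though you have correctly matched which technique goes with which inclusion.
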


\noindent The proof of Theorem~\ref{th:main} is postponed to
Sections~\ref{sec:main}, \ref{sec:correc} and~\ref{sec:comp}. In
Section~\ref{sec:main} we present our decision procedure and
Section~\ref{sec:correc} and~\ref{sec:comp} are devoted to proving
soundness and completeness of this procedure. Note that most of our
efforts are aimed to obtaining an algorithm for Item~1 of the theorem.
We actually obtain the second item as a byproduct of the completeness
proof of Section~\ref{sec:comp}: this proof is constructive and can be
used to build an actual separator by induction (which turns out to
have rank at most $|A|2^{|S|^2}$), when it exists.

\section{Separation Algorithm}
\label{sec:main}
In this section, we define a general framework which is tailored
to the investigation of the separation problem. We then use it to
obtain a separation algorithm in the special case when the class
of separators is given by first-order logic, \emph{i.e.}, to prove
Theorem~\ref{th:main}. An important remark is that the problem that
we actually consider and solve is slightly more general than
separation. In particular, this problem takes an arbitrary number of
languages as input rather than just two. Let us first explain our
motivation for considering such a generalization.

In the separation problem, we are given a single semigroup morphism
$\alpha$ that recognizes the two input languages $L_0,L_1$ that need
to be separated. However, in general, a single morphism recognizes
several different languages, not just these two. Moreover, while these
other languages are not the ones we aim at separating, $L_0$ and $L_1$
are built-up from them. This makes all these languages relevant when
working with $L_0$ and $L_1$. Therefore, our approach is to consider
them all simultaneously in a problem that generalizes separation:
\emph{computing an \fo-partition that is optimal} for the morphism
$\alpha$.

\medskip
We organize the section in three parts. First, we present our
framework in a general context (\emph{i.e.}, for an arbitrary class
of
separators \Cs) and connect it to the separation problem. In the
second part, we apply this framework to first-order logic and use
it to obtain a separation algorithm and to prove Theorem~\ref{th:main}.
Finally, in the third part, we illustrate this algorithm on
Example~\ref{ex:main}.

\subsection{Definition}

For the definitions, we assume that an arbitrary class of languages
\Cs over our fixed alphabet $A^+$ is fixed. Moreover, we need \Cs to satisfy the three following
properties:
\begin{enumerate}
\item \Cs is nonempty and closed under boolean operations.
\item \Cs is closed under right and left quotients: for any $w \in
  A^+$ and $L \in \Cs$, we have
  \[
    w^{-1}L \stackrel{\text{def}}= \{u\in A^+ \mid wu \in L\} \in \Cs \quad \text{and} \quad Lw^{-1} \stackrel{\text{def}}= \{u\in A^+ \mid uw \in L\} \in \Cs.
  \]
\item \Cs only consists of regular languages.
\end{enumerate}

It is straightforward to verify that \fo satisfies these three
properties. Note that the objects that we define below make sense even
when \Cs does not satisfy  these three properties. However, we will
need these properties to make the connection with the separation problem.

\medskip
\noindent
{\bf \Cs-Partitions and Imprints.} Assume that an alphabet $A$ is
fixed. A \emph{\Cs-partition} (of $A^+$) is a {\bf finite} partition
$\Kb = \{K_1,\dots,K_m\}$ of $A^+$ such that all languages $K_i$ in \Kb
belong to~\Cs. Note that since \Cs is non-empty and closed under
boolean operations, $A^+$ belongs to~$\Cs$. Therefore, there exists at least
one \Cs-partition, namely $\{A^+\}$.

When we have a morphism $\alpha: A^+ \rightarrow S$ and a
\Cs-partition \Kb in hand, our main interest will be to know how good
\Kb is at separating languages recognized by $\alpha$: what are the
languages recognized by $\alpha$ that can be separated by a union of
languages in \Kb ? This information is captured by a new object that
we associate to each \Cs-partition and each morphism, the
\emph{imprint of the partition on the morphism}.

Given a morphism $\alpha: A^+ \rightarrow S$ into a finite semigroup
$S$ and a \Cs-partition \Kb. The \emph{imprint of \Kb on $\alpha$} is
defined as the set
\[
  \Is[\alpha](\Kb) = \{T \in 2^S \mid \text{there exists $K \in \Kb$
    such that $T \subseteq \alpha(K)$}\}.
\]
In other words, $T \in \Is[\alpha](\Kb)$ if and only if there exists a
language in \Kb that intersects $\alpha^{-1}(t)$ for all $t \in T$.
Observe that by definition, an imprint on $\alpha$ is a subset of
$2^S$. Hence, since $S$ is finite, there are finitely many possible
imprints on $\alpha$. We present three simple properties of imprints. The first one states that an imprint always
contains some trivial elements.

\begin{fact}
  \label{fct:trivial}
  Let $\alpha: A^+ \rightarrow S$ be a morphism into a finite semigroup
  $S$ and \Kb be a \Cs-partition. Then, $\{\{\alpha(w)\} \mid w \in
  A^+\} \subseteq \Is[\alpha](\Kb)$.
\end{fact}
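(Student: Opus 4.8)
The statement says that every singleton of the form $\{\alpha(w)\}$, with $w \in A^+$, lies in the imprint $\Is[\alpha](\Kb)$. The plan is to unwind the two definitions involved --- that of a \Cs-partition and that of an imprint --- and observe that the claim is immediate once a word is located inside one of the blocks of the partition.

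Concretely, I would fix an arbitrary $w \in A^+$ and argue as follows. Since $\Kb = \{K_1,\dots,K_m\}$ is a partition of $A^+$, its blocks cover $A^+$, so there exists an index $i$ (equivalently, a block $K \in \Kb$) with $w \in K$. Applying the morphism $\alpha$ gives $\alpha(w) \in \alpha(K)$, hence the singleton $\{\alpha(w)\}$ satisfies $\{\alpha(w)\} \subseteq \alpha(K)$. By the very definition of $\Is[\alpha](\Kb)$ --- the set of all $T \in 2^S$ for which some block $K \in \Kb$ has $T \subseteq \alpha(K)$ --- this witnesses $\{\alpha(w)\} \in \Is[\alpha](\Kb)$. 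Since $w$ was arbitrary, the inclusion $\{\{\alpha(w)\} \mid w \in A^+\} \subseteq \Is[\alpha](\Kb)$ follows.

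There is essentially no obstacle here: the only ingredient is that the blocks of \Kb cover $A^+$ (the ``partition'' hypothesis), and no property of \Cs or finiteness of $S$ is needed. If anything, the single point worth stating carefully is that a singleton $\{s\}$ is a legitimate element of $2^S$, so that it is eligible to belong to an imprint in the first place; this is exactly the identification of $S$ with $\{\{s\} \mid s \in S\} \subseteq 2^S$ already recorded when the semigroup of subsets was introduced.
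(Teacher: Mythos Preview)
Your proof is correct and follows exactly the same approach as the paper: pick $w\in A^+$, use that $\Kb$ is a partition to find $K\in\Kb$ with $w\in K$, and conclude $\{\alpha(w)\}\subseteq\alpha(K)$, hence $\{\alpha(w)\}\in\Is[\alpha](\Kb)$. The only difference is that your write-up is more detailed, but the argument is identical.
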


\begin{proof}
  For any $w \in A^+$, there exists $K \in \Kb$ such that $w \in K$ (\Kb
  is a partition of $A^+$). Hence, $\{\alpha(w)\} \subseteq \alpha(K)$
  and $\{\alpha(w)\} \in \Is[\alpha](\Kb)$.
\end{proof}

The second property states that any imprint is closed under downset.

\begin{fact}
  \label{fct:downclos}
  Let $\alpha: A^+ \rightarrow S$ be a morphism into a finite semigroup
  $S$ and \Kb be a \Cs-partition. Then $\Is[\alpha](\Kb) = \downclos
  \Is[\alpha](\Kb)$.
\end{fact}

\begin{proof}
  By definition, $\Is[\alpha](\Kb) \subseteq \downclos
  \Is[\alpha](\Kb)$. Let us prove the converse inclusion. Set $T \in
  \downclos \Is[\alpha](\Kb)$. By definition, there exists $T' \in
  \Is[\alpha](\Kb)$ such that $T \subseteq T'$. By definition of
  imprints, we obtain $K \in \Kb$ such that $T' \subseteq \alpha(K)$.
  Therefore, $T \subseteq T' \subseteq \alpha(K)$ and $T \in\Is[\alpha](\Kb)$. Note that we have shown that $\Is[\alpha](\Kb)=\downclos\alpha(\Kb)$.
\end{proof}

The third property connects imprints to the separation problem: the
imprint of \Kb on $\alpha$ records which languages recognized by
$\alpha$ can be separated with $\Kb$.

\begin{lemma} \label{lem:imprint}
  Let $\alpha: A^+ \rightarrow S$ be a morphism into a finite semigroup
  $S$ and \Kb be a \Cs-partition. Let $L_1,L_2$ be two languages
  recognized by $\alpha$ and let $T_1,T_2 \subseteq S$ be the
  corresponding accepting sets. The two following conditions are
  equivalent:
  \begin{enumerate}
  \item\label{ita:1} for all $t_1 \in T_1$ and all $t_2 \in T_2$, we have $\{t_1,t_2\}
    \not\in \Is[\alpha](\Kb)$.
  \item \label{ita:2} $L_1$ and $L_2$ can be separated by a union of languages in
    $\Kb$.
  \end{enumerate}
\end{lemma}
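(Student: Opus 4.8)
The plan is to prove the two implications separately, using the definition of imprint together with the elementary observation that a union of languages in $\Kb$ separates $L_1$ from $L_2$ exactly when no single $K \in \Kb$ meets both $L_1$ and $L_2$.

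\medskip

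\noindent\textbf{From \ref{ita:1} to \ref{ita:2}.} Assume that for all $t_1 \in T_1$ and $t_2 \in T_2$ we have $\{t_1,t_2\} \notin \Is[\alpha](\Kb)$. I would build the candidate separator as
\[
  L \;=\; \bigcup_{\substack{K \in \Kb \\ K \cap L_1 \neq \emptyset}} K.
\]
Each such $K$ lies in $\Cs$ and the union is finite, so $L$ is a union of languages in $\Kb$. Clearly $L_1 \subseteq L$, since $\Kb$ partitions $A^+$ and every $K$ meeting $L_1$ is included in $L$. It remains to check $L \cap L_2 = \emptyset$. Suppose not: then some $K \in \Kb$ with $K \cap L_1 \neq \emptyset$ also satisfies $K \cap L_2 \neq \emptyset$. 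Pick $w_1 \in K \cap L_1$ and $w_2 \in K \cap L_2$. Since $L_i = \alpha^{-1}(T_i)$, we have $\alpha(w_1) \in T_1$ and $\alpha(w_2) \in T_2$. But then $\{\alpha(w_1),\alpha(w_2)\} \subseteq \alpha(K)$, so $\{\alpha(w_1),\alpha(w_2)\} \in \Is[\alpha](\Kb)$, contradicting \ref{ita:1} (taking $t_1 = \alpha(w_1)$, $t_2 = \alpha(w_2)$; note this also covers the degenerate case $\alpha(w_1) = \alpha(w_2)$ by Fact~\ref{fct:downclos}, as the singleton is then in the imprint and hence so is the pair).

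\medskip

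\noindent\textbf{From \ref{ita:2} to \ref{ita:1}.} Conversely, suppose $L = \bigcup_{K \in \Kb'} K$ separates $L_1$ from $L_2$ for some $\Kb' \subseteq \Kb$, and suppose for contradiction that $\{t_1,t_2\} \in \Is[\alpha](\Kb)$ for some $t_1 \in T_1$, $t_2 \in T_2$. By definition of the imprint, there is $K \in \Kb$ with $\{t_1,t_2\} \subseteq \alpha(K)$, hence words $w_1, w_2 \in K$ with $\alpha(w_1) = t_1 \in T_1$ and $\alpha(w_2) = t_2 \in T_2$, i.e.\ $w_1 \in L_1$ and $w_2 \in L_2$. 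Now $L_1 \subseteq L$ forces $w_1 \in L$, so $K \cap L \neq \emptyset$; since the $K$'s are disjoint and $L$ is a union of full blocks of $\Kb$, this forces $K \subseteq L$, hence $w_2 \in L$. But $w_2 \in L_2$ and $L \cap L_2 = \emptyset$, a contradiction.

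\medskip

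I do not expect any real obstacle here: the statement is essentially an unfolding of the definitions, the only mild subtlety being to argue cleanly (via disjointness of the partition blocks) that a union of blocks of $\Kb$ either contains a block entirely or misses it entirely, so that ``$K$ meets $L$'' upgrades to ``$K \subseteq L$''.
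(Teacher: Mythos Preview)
Your proof is correct and follows essentially the same route as the paper's: both directions hinge on the observation that a union of blocks of $\Kb$ separates $L_1$ from $L_2$ if and only if no single block meets both, and your candidate separator $L = \bigcup\{K \in \Kb \mid K \cap L_1 \neq \emptyset\}$ is exactly the one the paper uses. The parenthetical about the degenerate case $\alpha(w_1)=\alpha(w_2)$ is unnecessary (the ``pair'' is then literally the singleton, so there is nothing extra to check), but it does no harm.
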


\begin{proof}
  Assume first that Item~\ref{ita:1} holds and set $K = \bigcup_{\{K'
    \in \Kb \mid K' \cap L_1 \neq \emptyset\}} K'$. Since $\Kb$ is a
  partition of $A^+$, we have $L_1 \subseteq K$ by definition. Moreover,
  we know from Item~\ref{ita:1} that no language $K' \in \Kb$
  intersects both $L_1$ and $L_2$. It follows that $K \cap L_2 =
  \emptyset$: $K$ separates $L_1$ from $L_2$ and Item~\ref{ita:2} holds.

  Assume now that Item~\ref{ita:2} holds. Since \Kb is a partition, this
  means that no language $K \in \Kb$ intersects both $L_1$ and $L_2$. It
  follows from the definition of imprints that for all $t_1 \in T_1$ and
  all $t_2 \in T_2$, we have $\{t_1,t_2\} \not\in \Is[\alpha](\Kb)$.
\end{proof}

An important remark is that, in general, the imprint of \Kb on
$\alpha$ contains more than just separation related information. For
example, assume that $S = \{s_1,s_2,s_3\}$ and consider two
\Cs-partitions \Kb and $\Kb'$ having the following imprints on
$\alpha$:
\[
  \begin{array}{lll}
    \Is[\alpha](\Kb) & = &
                           \{\emptyset,\{s_1\},\{s_2\},\{s_3\},\{s_1,s_2\},\{s_1,s_3\},\{s_2,s_3\},\{s_1,s_2,s_3\}\},
    \\
    \Is[\alpha](\Kb') & = &
                            \{\emptyset,\{s_1\},\{s_2\},\{s_3\},\{s_1,s_2\},\{s_1,s_3\},\{s_2,s_3\}\}.
  \end{array}
\]
From the separation point of view, we know from
Lemma~\ref{lem:imprint} that \Kb and $\Kb'$ are equally useless (they
cannot be used to separate any pair of nonempty languages recognized by
$\alpha$). However, we also know from the imprints that $\Kb'$ is
``better'' as it contains no language that intersects
$\alpha^{-1}(s_1),\alpha^{-1}(s_2)$ and $\alpha^{-1}(s_3)$ at the same
time.

In view of Lemma~\ref{lem:imprint}, the smaller the imprint on
$\alpha$ of a \Cs-partition is, the better this \Cs-partition is at
separating languages recognized by $\alpha$. We use this remark to
define the notion of \emph{optimal} \Cs-partition.

\medskip
\noindent
{\bf Optimal \Cs-Partitions.} Given a morphism $\alpha: A^+
\rightarrow S$ into a finite semigroup $S$ and a \Cs-partition \Kb,
we say that \Kb is \emph{optimal for $\alpha$} if for any
\Cs-partition $\Kb'$,
\[
  \Is[\alpha](\Kb) \subseteq \Is[\alpha](\Kb')
\]
We can use the fact that \Cs is closed under intersection to prove
that for any morphism $\alpha$, there always exists a \Cs-partition that is
optimal for $\alpha$.

\begin{lemma} \label{lem:optimal}
  Let $\alpha: A^+ \rightarrow S$ be a morphism into a finite semigroup
  $S$. Then there exists a \Cs-partition that is optimal for $\alpha$.
\end{lemma}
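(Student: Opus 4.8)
The statement to prove is Lemma~\ref{lem:optimal}: for any morphism $\alpha: A^+ \to S$ into a finite semigroup, there is a \Cs-partition optimal for $\alpha$, i.e.\ one whose imprint on $\alpha$ is contained in the imprint of every \Cs-partition. The natural approach is to build this optimal partition as a common refinement of ``enough'' \Cs-partitions. Since imprints live in the finite set $2^{2^S}$ (there are only finitely many possible imprints on $\alpha$), one cannot take the common refinement of \emph{all} \Cs-partitions at once (there may be infinitely many), but one can take finitely many representatives: pick \Cs-partitions $\Kb_1, \dots, \Kb_n$ so that every imprint $\Is[\alpha](\Kb')$ of a \Cs-partition $\Kb'$ contains some $\Is[\alpha](\Kb_i)$ --- concretely, since the set of imprints is finite, it suffices to take one $\Kb_i$ realizing each minimal imprint, or even simpler, one $\Kb_i$ for each of the finitely many imprints that actually occur.

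The key step is then the following monotonicity fact for refinements, which I would isolate as a sub-claim: \emph{if $\Kb$ refines $\Kb'$ (every block of $\Kb$ is contained in a block of $\Kb'$), then $\Is[\alpha](\Kb) \subseteq \Is[\alpha](\Kb')$}. This is immediate from the definition: if $T \subseteq \alpha(K)$ for some $K \in \Kb$, and $K \subseteq K'$ for some $K' \in \Kb'$, then $T \subseteq \alpha(K) \subseteq \alpha(K')$, so $T \in \Is[\alpha](\Kb')$. Next I need that the common refinement of finitely many \Cs-partitions is again a \Cs-partition: given $\Kb_1,\dots,\Kb_n$, the partition $\Kb = \{K_1 \cap \cdots \cap K_n \mid K_i \in \Kb_i\} \setminus \{\emptyset\}$ is a partition of $A^+$ whose blocks are finite intersections of languages in \Cs, hence in \Cs since \Cs is closed under intersection (property~(1)), and it is finite since each $\Kb_i$ is finite. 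By construction $\Kb$ refines each $\Kb_i$.

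Putting it together: let $\Kb$ be the common refinement of $\Kb_1, \dots, \Kb_n$. For any \Cs-partition $\Kb'$, by the choice of the $\Kb_i$ there is some $i$ with $\Is[\alpha](\Kb_i) \subseteq \Is[\alpha](\Kb')$; and since $\Kb$ refines $\Kb_i$, the monotonicity fact gives $\Is[\alpha](\Kb) \subseteq \Is[\alpha](\Kb_i) \subseteq \Is[\alpha](\Kb')$. Hence $\Kb$ is optimal for $\alpha$.

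**Where the difficulty lies.** There is no deep obstacle here --- the only thing that needs care is the finiteness argument that lets us reduce from ``all \Cs-partitions'' to ``finitely many representatives''. The clean way to phrase it: the map $\Kb' \mapsto \Is[\alpha](\Kb')$ takes values in the finite poset $(2^{2^S}, \subseteq)$, so its image is a finite set $\{\Is_1, \dots, \Is_n\}$ of imprints; pick for each $j$ one \Cs-partition $\Kb_j$ with $\Is[\alpha](\Kb_j) = \Is_j$, and take the common refinement of these. One should double-check that \Cs being closed under Boolean operations indeed yields closure under binary (hence finite) intersection, which it does by property~(1); closure under complement and union already gives intersection. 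I would also remark in passing --- as the authors likely do next --- that this optimal imprint $\Is[\alpha](\Kb)$ is then unique as a set (any two optimal partitions have the same imprint), which is what makes ``the optimal imprint of $\alpha$'' a well-defined notion going forward.
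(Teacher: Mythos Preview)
Your proof is correct and follows essentially the same approach as the paper's: both arguments rest on the finiteness of the set of possible imprints (since $2^{2^S}$ is finite), the observation that the common refinement $\{K' \cap K'' \mid K' \in \Kb',\ K'' \in \Kb''\}$ of two \Cs-partitions is again a \Cs-partition by closure under intersection, and the monotonicity of imprints under refinement. The paper states the binary case and then appeals to finiteness, whereas you pick finitely many representatives up front and take their common refinement in one shot; these are two ways of packaging the same idea.
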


\begin{proof}
  We prove that for any two \Cs-partitions $\Kb'$ and $\Kb''$, there
  exists a third \Cs-partition $\Kb$ such that, $\Is[\alpha](\Kb)
  \subseteq \Is[\alpha](\Kb')$ and $\Is[\alpha](\Kb) \subseteq
  \Is[\alpha](\Kb'')$. Since there are only finitely possible imprints
  on $\alpha$, the lemma will follow.

  We set $\Kb = \{K' \cap K'' \mid K' \in \Kb' \text{ and } K'' \in
  \Kb''\}$. Since $\Kb'$ and $\Kb''$ were $\Cs$-partitions and \Cs is
  closed under intersection, \Kb remains a \Cs-partition. Finally, it is
  immediate from the definitions that  $\Is[\alpha](\Kb) \subseteq
  \Is[\alpha](\Kb')$ and $\Is[\alpha](\Kb) \subseteq \Is[\alpha](\Kb'')$.
\end{proof}

Observe that the proof of Lemma~\ref{lem:optimal} is non-constructive.
Given a morphism $\alpha$, computing an actual optimal \Cs-partition
for $\alpha$ is a difficult problem in general. In fact, as seen in
Theorem~\ref{thm:seppart} below, this is more general than solving
\Cs-separability for any pair of languages recognized by $\alpha$.
Before we present this theorem, let us make an important observation
about optimal \Cs-partitions.

By definition, given a morphism $\alpha$, all \Cs-partitions that are
optimal for $\alpha$ have the same imprint on $\alpha$. Hence, this
unique imprint is a canonical object for \Cs and $\alpha$. We call it
the \emph{optimal imprint with respect to\/ \Cs on\/ $\alpha$} and we
denote it by $\Is_\Cs[\alpha]$:
\[
  \Is_\Cs[\alpha] = \Is[\alpha](\Kb) \quad \text{for any optimal \Cs-partition \Kb
    of $\alpha$}.
\]
Note that, as an imprint, $\Is_\Cs[\alpha]$ satisfies
Fact~\ref{fct:trivial} and Fact~\ref{fct:downclos}: $\{\{\alpha(w)\} \mid w \in
A^+\} \subseteq \Is_\Cs[\alpha]$ and $\Is_\Cs[\alpha] = \downclos
\Is_\Cs[\alpha]$. Moreover, using our three hypotheses on \Cs (note
that this is where we need the second and third ones), one can prove
another convenient property: $\Is_\Cs[\alpha]$ is a subsemigroup of
$2^S$ (\emph{i.e.}, it is closed under multiplication).

\begin{lemma} \label{lem:usemi}
  Let $\alpha: A^+ \rightarrow S$ be a morphism into a finite semigroup
  $S$. Then $\Is_\Cs[\alpha]$ is a subsemigroup of $2^S$: for all $R,T \in
  \Is_\Cs[\alpha]$, $RT \in \Is_\Cs[\alpha]$.
\end{lemma}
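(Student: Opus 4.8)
To show that $\Is_\Cs[\alpha]$ is closed under multiplication, I would start by fixing $R,T\in\Is_\Cs[\alpha]$ and an optimal \Cs-partition $\Kb$ for $\alpha$, so that $R,T\in\Is[\alpha](\Kb)$. By definition of the imprint, there are languages $K_R,K_T\in\Kb$ with $R\subseteq\alpha(K_R)$ and $T\subseteq\alpha(K_T)$. By Fact~\ref{fct:uprod}-style reasoning (or directly), $RT\subseteq\alpha(K_R)\alpha(K_T)=\alpha(K_RK_T)$ since $\alpha$ is a morphism; so $RT$ lies in the imprint of the \emph{collection} $\{K_RK_T\}$, but $K_RK_T$ need not be in $\Cs$ nor form part of a \Cs-partition. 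The goal is therefore to manufacture, from $\Kb$, a \emph{genuine} \Cs-partition $\Kb'$ whose imprint still contains $RT$; then by optimality $\Is_\Cs[\alpha]=\Is[\alpha](\Kb)\subseteq\Is[\alpha](\Kb')$, but also $\Is[\alpha](\Kb')\ni RT$ gives nothing new — wait, I need the reverse: I want $RT\in\Is_\Cs[\alpha]=\Is[\alpha](\Kb)$. So the right strategy is to produce a \Cs-partition $\Kb'$ such that $RT\in\Is[\alpha](\Kb')$ is \emph{forced} to hold for \emph{every} \Cs-partition — equivalently, to show $RT$ lies in $\Is[\alpha](\Kb)$ for the \emph{given} optimal $\Kb$, by exhibiting, inside some language of $\Kb$, a word realizing each element of $RT$.

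The cleaner route exploits the fact that $\Is_\Cs[\alpha]=\Is[\alpha](\Kb)$ for \emph{any} optimal $\Kb$, and that optimal partitions can be intersected (Lemma~\ref{lem:optimal}). Concretely: pick for each pair $(K',K'')\in\Kb\times\Kb$ the language $K'K''$; these do not form a partition, but each $K'K''$ is regular (it is a finite union of classes of a suitable right congruence, using closure of \Cs under quotients on the finitely many ``breakpoints''). The standard trick is: since $\Cs$ is closed under left and right quotients and under boolean operations, for any regular language $M$ recognized by $\alpha$ and any $K'\in\Kb$, the languages $\{w : \text{some factorization } w=uv \text{ has } u\in K',\ v\in K''\}$ can be refined by a \Cs-partition $\Kb'$ finer than $\Kb$. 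Then one checks that $\Is[\alpha](\Kb')$ still contains $RT$: given $r\in R$, $t\in T$, pick $u\in K_R$ with $\alpha(u)=r$ and $v\in K_T$ with $\alpha(v)=t$; the word $uv$ lies in some block of $\Kb'$, and the finiteness of $R$ and $T$ plus the refinement argument lets one choose these blocks coherently so a single block of $\Kb'$ meets $\alpha^{-1}(rt)$ for all $(r,t)\in R\times T$. Then optimality of $\Kb$ against $\Kb'$ gives $RT\in\Is[\alpha](\Kb)=\Is_\Cs[\alpha]$.

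The honest version of this uses the \emph{transition congruence}: because \Cs is closed under quotients, the set of \Cs-definable languages is closed under the operation $K\mapsto \{w : uw\in K\}$, and one forms the coarsest \Cs-partition $\Kb'$ refining $\Kb$ and also refining, for every $K\in\Kb$ and every $s\in S$, the language $\{v : \exists u,\ \alpha(u)=s,\ uv\in K\}$ (a finite list since $|S|<\infty$). A word $v$ then determines, for each $K\in\Kb$, which elements $s$ admit a prefix-witness; combined with the analogous right-quotient refinement one gets that whenever $u\in K'$, $v\in K''$ with $u,v$ ranging over prescribed blocks, the concatenation $uv$ stays in a fixed block of $\Kb'$. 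Carrying this out for the finitely many elements of $R$ and $T$ simultaneously (taking a common refinement, which stays a \Cs-partition by Lemma~\ref{lem:optimal}'s argument) yields the block witnessing $RT$.

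\textbf{Main obstacle.} The delicate point is not the algebra but the \emph{coherence of the block choice}: a priori, different pairs $(r,t)\in R\times T$ force their witnessing words $u_rv_t$ into possibly different blocks of $\Kb'$, whereas membership $RT\in\Is[\alpha](\Kb')$ demands a \emph{single} block containing a witness for \emph{every} element of $RT$. Making this work is exactly where the closure of \Cs under quotients is used: it guarantees that ``which block $uv$ falls into'' depends on $u$ only through finitely many quotient-languages (hence, after refining, only through the block of $u$) and symmetrically for $v$, so fixing the blocks $K_R\ni u$ and $K_T\ni v$ pins down the block of $uv$ uniformly. Once that uniformity is in place the rest is bookkeeping. (This is the only place in the lemma where the second and third hypotheses on \Cs are needed, as the text hints.)
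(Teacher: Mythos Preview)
Your overall strategy is right and matches the paper's: refine the given \Cs-partition using quotients, exploit regularity to keep the refinement finite, then use $R,T\in\Is_\Cs[\alpha]$ to place all witnesses $u_i,v_j$ inside single refined blocks. You also correctly isolate the crux --- coherence of the block containing the products $u_iv_j$. But the specific refinement in your ``honest version'' does not deliver that coherence. Refining by the sets $\{v:\exists u,\ \alpha(u)=s,\ uv\in K\}$ records only \emph{existential} information: two words $v,v'$ in the same refined block both admit \emph{some} prefix with image $s$ landing them in $K$, but nothing forces the \emph{same} prefix to work for both, so for your particular $u_i$ one may well have $u_iv\in K$ yet $u_iv'\notin K$. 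The claimed conclusion ``the concatenation $uv$ stays in a fixed block'' therefore does not follow from this refinement, even after adding the symmetric right-hand version.

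The paper refines instead by the actual left quotients $w^{-1}K$ for $K\in\Kb$ and $w\in A^+$ (finitely many, since each $K$ is regular), obtaining a \Cs-partition $\Lb$. Because $T\in\Is_\Cs[\alpha]\subseteq\Is[\alpha](\Lb)$, some single $L\in\Lb$ contains witnesses $v_1,\dots,v_n$; and since $L$ lies inside a unique $w^{-1}K$ for each fixed $w$, the words $wv_1,\dots,wv_n$ share a block of $\Kb$ for \emph{every} $w$. A symmetric construction with right quotients produces $u_1,\dots,u_m$ with the dual property. The paper then does \emph{not} attempt your stronger two-sided uniformity; it combines the two one-sided facts by noting that $u_1v_1,\dots,u_mv_1$ share a block $K$ (first property with $w=v_1$), and for each $i$ the words $u_iv_1,\dots,u_iv_n$ share a block $K_i$ (second property with $w=u_i$), forced to equal $K$ since $u_iv_1\in K\cap K_i$ and $\Kb$ is a partition. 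Finally, note that the paper works with an \emph{arbitrary} \Cs-partition $\Kb$ and proves $RT\in\Is[\alpha](\Kb)$ directly --- optimality is never invoked, so the back-and-forth about ``I need the reverse'' is a detour you can drop.
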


\begin{proof}
  Let $R,T \in \Is_\Cs[\alpha]$ with $R = \{r_1,\dots,r_m\}$ and $T =
  \{t_1,\dots,t_n\}$. We prove that $RT \in \Is_\Cs[\alpha]$. Let \Kb be
  a \Cs-partition of $A^+$. We have to prove $RT \in \Is[\alpha](\Kb)$,
  \emph{i.e.}, that there exists $K \in \Kb$ such that $RT \subseteq \alpha(K)$.
  This is a consequence of the following claim.

  \begin{claim}
    There exist $u_1,\dots,u_m \in A^+$ and $v_1,\dots,v_n\in A^+$ such
    that $\alpha(u_i) = r_i$ for $i \leq m$ and $\alpha(v_j) = t_j$ for
    $j \leq n$ and,
    \begin{itemize}
    \item for any $w \in A^+$, there exists $K \in \Kb$ such that
      $u_1w,\dots,u_mw \in K$.
    \item for any $w \in A^+$, there exists $K \in \Kb$ such that
      $wv_1,\dots,wv_n \in K$.
    \end{itemize}
  \end{claim}
  Before we prove the claim, let us finish the proof of the lemma. Using
  the first item of the claim for $w = v_1$, we obtain a language $K \in
  \Kb$ such that $u_1v_1,u_2v_1,\dots,u_mv_1 \in K$. Similarly, for all
  $i \leq m$, we can use the second item of the claim for $w = u_i$ and
  we obtain a language $K_i \in \Kb$ such that $u_iv_1,u_iv_2,\dots,u_iv_n
  \in K_i$. Note that each language $K_i$ contains the word $u_iv_1$,
  which also belongs to $K$. Hence, since $\Kb$ is a partition of $A^+$,
  we have $K = K_1 = \cdots = K_m$ and $K$ contains the word $u_iv_j$
  for all $i \leq m$ and $j \leq n$. Since $\alpha(u_iv_j) = r_it_j$,
  this exactly says that $RT \subseteq \alpha(K)$, which terminates the
  proof of the lemma.

  \smallskip
  It now remains to prove the claim. We prove the existence of the words $v_1,
  \dots,v_n$. The proof for that of $u_1,\dots,u_m$ is
  symmetric. Observe that for any $w \in A^+$, the set $\Lb_{w} =
  \{w^{-1}K \mid K \in \Kb\}$ is a \Cs-partition of $A^+$ (recall that
  \Cs is assumed to be closed under quotients). Moreover, since \Cs
  contains only regular languages, all $K \in \Kb$ are regular
  languages and by Myhill-Nerode Theorem, they have finitely many left
  quotients. It follows that the set $\{\Lb_{w} \mid w \in A^+\}$ is
  finite. Hence using the fact that \Cs is closed under boolean
  operations, we can construct a new \Cs-partition $\Lb$ that refines
  all partitions $\Lb_{w}$ for $w \in A^+$.

  Since $\Lb$ is a \Cs-partition and $T \in \Is_\Cs[\alpha]$, we know
  that there exists $L \in \Lb$ such that $T \subseteq \alpha(L)$.
  This means that $L$ contains $n$ words $v_1,\dots,v_n \in
  A^+$ such that $\alpha(v_j) = t_j$ for $j \leq n$. We now prove that
  $v_1,\dots,v_n\in A^+$ satisfy the conditions of the claim. Set
  $w \in A^+$, we know that there exists $L' \in \Lb_w$ such that $L
  \subseteq L'$ (\Lb refines $\Lb_w$). This means that $v_1,\dots,v_n
  \in L'$. Finally, by definition, $L' = w^{-1}K$ for some $K \in \Kb$,
  hence $wv_1,\dots,wv_n \in K$.
\end{proof}

\noindent
{\bf From \Cs-Partitions to Separation.} We can now connect
\Cs-partitions and optimal imprints to the separation problem for
\Cs.

\begin{theorem} \label{thm:seppart}
  Let $\alpha: A^+ \to S$ be a morphism into a finite semigroup $S$. Let
  $L_1,L_2$ be two languages recognized by $\alpha$ and let $T_1,T_2
  \subseteq S$ be the corresponding accepting sets. The following
  properties are equivalent:
  \begin{enumerate}
  \item\label{item:cov2sep1} $L_1$ and $L_2$ are \Cs-separable.
  \item\label{item:cov2sep2} for all $t_1 \in T_1$ and all $t_2 \in T_2$, we have $\{t_1,t_2\}
    \not\in \Is_\Cs[\alpha]$.
  \item\label{item:cov2sep3} for any \Cs-partition \Kb that is optimal
    for $\alpha$, $L_1$ and $L_2$ are separable by a union of languages
    in $\Kb$.
  \end{enumerate}

\end{theorem}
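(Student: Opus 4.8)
The plan is to prove the cycle of implications $\ref{item:cov2sep3} \Rightarrow \ref{item:cov2sep2} \Rightarrow \ref{item:cov2sep1} \Rightarrow \ref{item:cov2sep3}$, leaning on the three preceding results (Lemma~\ref{lem:imprint}, Lemma~\ref{lem:optimal} and the definition of optimal \Cs-partition). Actually, the cleanest route is slightly different: I would first observe $\ref{item:cov2sep1} \Rightarrow \ref{item:cov2sep3}$, then $\ref{item:cov2sep3} \Rightarrow \ref{item:cov2sep2}$, then close with $\ref{item:cov2sep2} \Rightarrow \ref{item:cov2sep1}$. Throughout, fix an optimal \Cs-partition \Kb for $\alpha$, which exists by Lemma~\ref{lem:optimal}; by definition $\Is[\alpha](\Kb) = \Is_\Cs[\alpha]$.

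For $\ref{item:cov2sep1} \Rightarrow \ref{item:cov2sep3}$: suppose $L_1$ and $L_2$ are \Cs-separable, say by $L \in \Cs$. Then $\{L, A^+ \setminus L\}$ is a \Cs-partition (using that \Cs is closed under boolean operations and contains $A^+$). Refining \Kb by intersecting with this two-block partition, as in the proof of Lemma~\ref{lem:optimal}, gives a \Cs-partition $\Kb_0$ with $\Is[\alpha](\Kb_0) \subseteq \Is[\alpha](\Kb)$; since \Kb is optimal, the reverse inclusion also holds, so $\Kb_0$ is optimal too, and $\Is[\alpha](\Kb_0) = \Is_\Cs[\alpha]$. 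Now here is the subtle point: optimality only pins down the imprint, not the partition, so I cannot immediately conclude that the \emph{given} optimal \Kb separates $L_1,L_2$. Instead I argue at the level of imprints: since $L$ separates $L_1$ from $L_2$, no block of $\Kb_0$ meets both $L_1$ and $L_2$, so by (the easy direction of) Lemma~\ref{lem:imprint} applied to $\Kb_0$, for all $t_1 \in T_1$, $t_2 \in T_2$ we have $\{t_1,t_2\} \notin \Is[\alpha](\Kb_0) = \Is_\Cs[\alpha] = \Is[\alpha](\Kb)$. Then by the other direction of Lemma~\ref{lem:imprint} applied to the original \Kb, $L_1$ and $L_2$ are separable by a union of blocks of \Kb. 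Since \Kb was an arbitrary optimal partition, this gives~\ref{item:cov2sep3}.

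For $\ref{item:cov2sep3} \Rightarrow \ref{item:cov2sep2}$: assume \ref{item:cov2sep3}. Pick any optimal \Kb (Lemma~\ref{lem:optimal}); by \ref{item:cov2sep3}, $L_1$ and $L_2$ are separable by a union of languages in \Kb, so by Lemma~\ref{lem:imprint} (direction \ref{ita:2} $\Rightarrow$ \ref{ita:1}), $\{t_1,t_2\} \notin \Is[\alpha](\Kb) = \Is_\Cs[\alpha]$ for all $t_1 \in T_1$, $t_2 \in T_2$, which is \ref{item:cov2sep2}. For $\ref{item:cov2sep2} \Rightarrow \ref{item:cov2sep1}$: assume \ref{item:cov2sep2}, again fix optimal \Kb so that $\Is[\alpha](\Kb) = \Is_\Cs[\alpha]$; then \ref{item:cov2sep2} is exactly condition \ref{ita:1} of Lemma~\ref{lem:imprint} for this \Kb, so by the direction \ref{ita:1} $\Rightarrow$ \ref{ita:2} we get that $L_1$ and $L_2$ are separated by a union of languages in \Kb; as a finite union of members of \Cs, this separator lies in \Cs, giving \ref{item:cov2sep1}. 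The main obstacle — and the only place requiring care — is the first implication, precisely because optimality is a statement about the imprint and not the partition; the fix is to shuttle the separation information through the imprint via both directions of Lemma~\ref{lem:imprint}, rather than trying to transport blocks directly between partitions.
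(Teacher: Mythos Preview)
Your proof is correct and uses the same ingredients as the paper (Lemma~\ref{lem:imprint}, Lemma~\ref{lem:optimal}, and the two-block partition $\{L, A^+\setminus L\}$), but the paper's cycle is $\ref{item:cov2sep3} \Rightarrow \ref{item:cov2sep1} \Rightarrow \ref{item:cov2sep2} \Rightarrow \ref{item:cov2sep3}$, and this ordering eliminates precisely the ``subtle point'' you had to work around. In the paper's $\ref{item:cov2sep1} \Rightarrow \ref{item:cov2sep2}$ step, one simply observes that $\{t_1,t_2\}\notin \Is[\alpha](\{L,A^+\setminus L\})$ and then uses $\Is_\Cs[\alpha]\subseteq \Is[\alpha](\{L,A^+\setminus L\})$ directly from the definition of optimality; no refinement $\Kb_0$ is needed, and no arbitrary optimal $\Kb$ enters until the $\ref{item:cov2sep2} \Rightarrow \ref{item:cov2sep3}$ step, where it is handled in one line by Lemma~\ref{lem:imprint}. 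Your route $\ref{item:cov2sep1} \Rightarrow \ref{item:cov2sep3}$ effectively proves $\ref{item:cov2sep1} \Rightarrow \ref{item:cov2sep2} \Rightarrow \ref{item:cov2sep3}$ in disguise (the detour through $\Kb_0$ just recovers $\ref{item:cov2sep2}$), so the argument is sound but slightly longer than necessary.
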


\begin{proof}
  We prove that $\ref{item:cov2sep3} \Rightarrow \ref{item:cov2sep1}
  \Rightarrow \ref{item:cov2sep2} \Rightarrow \ref{item:cov2sep3}$.
  Let us first assume that $\ref{item:cov2sep3}$ holds, \emph{i.e.},
  that for any \Cs-partition \Kb that is optimal for $\alpha$, $L_1$
  and $L_2$ are separable by a union of languages in $\Kb$. Since
  there exists at least one \Cs-partition that is optimal for $\alpha$
  (see Lemma~\ref{lem:optimal}), $L_1$ can be separated from $L_2$
  with a union of languages in \Cs. Since \Cs is closed under union,
  this separator is in \Cs and $\ref{item:cov2sep1}$ holds.

  We now prove that $\ref{item:cov2sep1} \Rightarrow
  \ref{item:cov2sep2}$. Assume that $\ref{item:cov2sep1}$ holds,
  \emph{i.e.}, that $L_1$ is \Cs-separable from $L_2$. This means that
  there exists a language $K \in \Cs$ such that $L_1 \subseteq K$ and $K
  \cap L_2 = \emptyset$ (\emph{i.e.}, $L_2 \subseteq A^+ \setminus K$). Since
  \Cs is closed under complementation, $A^+ \setminus K \in \Cs$ and
  $\Kb = \{K,A^+ \setminus K\}$ is a \Cs-partition. By construction,
  for all $t_1 \in T_1$ and all $t_2 \in T_2$, $\{t_1,t_2\} \not\in
  \Is[\alpha](\Kb)$. Hence $\{t_1,t_2\} \not\in \Is_\Cs[\alpha]$ since
  $\Is_\Cs[\alpha] \subseteq \Is[\alpha](\Kb)$ by definition.

  It remains to prove that $\ref{item:cov2sep2} \Rightarrow
  \ref{item:cov2sep3}$. Assume that for all $t_1 \in T_1$ and all $t_2
  \in T_2$, $\{t_1,t_2\} \not\in \Is_\Cs[\alpha]$ and let $\Kb$ be an
  optimal \Cs-partition for $\alpha$. Since \Kb is optimal, we know
  from our hypothesis that for all $t_1 \in T_1$ and all $t_2
  \in T_2$, $\{t_1,t_2\} \not\in \Is[\alpha](\Kb)$. Hence it
  follows from Lemma~\ref{lem:imprint} that $L_1$ can be
  separated from $L_2$ by a union of languages in \Kb.
\end{proof}

In view of Theorem~\ref{thm:seppart}, given a class of languages \Cs
that satisfies the appropriate properties, a general approach to the
separation problem can be devised as follows.
\begin{enumerate}
\item Present an algorithm which takes a morphism $\alpha: A^+ \to
  S$ as input and computes $\Is_\Cs[\alpha]$. Thanks to
  Theorem~\ref{thm:seppart}, this allows us to decide whether any two
  languages recognized by the input morphism $\alpha$ are \Cs-separable.

  Typically, this
  algorithm should be a lowest fixpoint: $\Is_\Cs[\alpha]$ is computed
  as the smallest set $\Sat_\Cs(\alpha)$ which contains the singletons
  $\{\alpha(w)\}$ for $w \in \alpha(A^+)$ (see Fact~\ref{fct:trivial})
  and is closed under a set of rules that is specific to \Cs. Note
  that, in view of Fact~\ref{fct:downclos}, one of these rules should always be closure under downset $\downclos$, and in view of Lemma~\ref{lem:usemi}, one of these rules should
  always be closure under multiplication.

\item To prove that the algorithm is sound, \emph{i.e.}, that
  $\Sat_\Cs(\alpha) \subseteq \Is_\Cs[\alpha]$, one needs to prove that
  for any computed set $T$, the imprint of any \Cs-partition \Kb must
  contain $T$. This is usually simple and involves \efgame arguments.
\item To prove that the algorithm is complete, \emph{i.e.}, that
  $\Is_\Cs[\alpha] \subseteq \Sat_\Cs(\alpha)$, one needs to construct
  a \Cs-partition \Kb whose imprint on $\alpha$ is included in
  $\Sat_\Cs(\alpha)$. By definition, this proves that $\Is_\Cs[\alpha]
  \subseteq \Is[\alpha](\Kb) \subseteq \Sat_\Cs(\alpha)$, hence, this
  proves completeness. We actually get more from this construction:
  combining it with the knowledge that the algorithm is also correct,
  we obtain $\Is_\Cs[\alpha] = \Is[\alpha](\Kb)$: the
  \Cs-partition \Kb that we construct is actually optimal for
  $\alpha$. By Item~\ref{item:cov2sep3} in Theorem~\ref{thm:seppart}, we get
  a way to construct an actual separator in \Cs of two \Cs-separable languages recognized
  by $\alpha$.
\end{enumerate}
This terminates the presentation of the general approach. We now apply
it to the special case when the class \Cs is \fo.

\subsection{A Separation Algorithm for First-Order Logic}

Fix a morphism $\alpha: A^+ \to S$ into a finite semigroup $S$. We
describe a lowest fixpoint algorithm for computing the optimal imprint
with respect to \fo on $\alpha$: $\Is[\alpha]$. Note that from now on
we work with \fo only. Therefore, we simply write $\Is[\alpha]$ for
$\Is_\fo[\alpha]$.

Set $\Ss$ as a subsemigroup of $2^S$. We define $\Sat(\Ss)$, the
\emph{saturation of \Ss}, as the smallest subset of $2^S$ that contains
\Ss and is closed under the three following operations:
\begin{enumerate}
\item\label{eq:oper1} Downset: $\Sat(\Ss) = \downclos
  \Sat(\Ss)$.
\item\label{eq:oper2} Multiplication: for any $T,T'
  \in \Sat(\Ss)$, we have $TT' \in \Sat(\Ss)$.
\item\label{eq:oper}  \fo-Closure: for any $T \in \Sat(\Ss)$,
  $T^{\omega}\cup T^{\omega+1} \in \Sat(\Ss)$.
\end{enumerate}

Note that it is immediate that one can compute $\Sat(\Ss)$ from
$\Ss$ using a lowest fixpoint algorithm. Finally, we define
$\Sat(\alpha)$ as $\Sat(\Ss)$ for $\Ss = \{\{\alpha(w)\} \mid w \in
A^+\}$.

An interesting observation is that only Operation~\ref{eq:oper}
is specific to first-order logic in the definition of $\Sat$. Indeed,
we already know from the generic presentation that $\Is[\alpha]$
contains $\{\{\alpha(w)\} \mid w \in A^+\}$ and is closed under
downset and multiplication. This terminates the presentation of the
algorithm, we state its correctness in the following proposition.

\begin{proposition} \label{prop:algoworks} Set $\alpha: A^+ \to S$ as
  a morphism into a finite semigroup $S$. Then,
  \[
    \Is[\alpha] = \Sat(\alpha).
  \]
\end{proposition}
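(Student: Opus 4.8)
The statement $\Is[\alpha] = \Sat(\alpha)$ splits into two inclusions, which are proved in the two sections announced in the text: soundness ($\Sat(\alpha) \subseteq \Is[\alpha]$) and completeness ($\Is[\alpha] \subseteq \Sat(\alpha)$). I would organize the argument exactly along the generic template laid out right before the proposition.

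\textbf{Soundness: $\Sat(\alpha) \subseteq \Is[\alpha]$.} Here I would fix an arbitrary optimal \fo-partition \Kb for $\alpha$ and show by induction on the fixpoint construction that every set $T$ added to $\Sat(\alpha)$ satisfies $T \in \Is[\alpha](\Kb)$; since $\Is[\alpha] = \Is[\alpha](\Kb)$ for optimal \Kb, this suffices. The base case is Fact~\ref{fct:trivial}. For the downset and multiplication closure rules, one invokes Fact~\ref{fct:downclos} and Lemma~\ref{lem:usemi} (or rather, re-proves the relevant closure directly for $\Is[\alpha](\Kb)$, which is what those lemmas establish). The only genuinely new work is Operation~\ref{eq:oper}: given $T \in \Is[\alpha](\Kb)$ with witness $K \in \Kb$ containing words $w_1,\dots,w_r$ with $\alpha(w_i)$ ranging over $T$, I must produce a single language $K' \in \Kb$ containing words mapping onto all of $T^\omega \cup T^{\omega+1}$. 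The idea is: for a suitable large parameter $k$ (a bound on quantifier ranks of the finitely many languages in \Kb), the words $w_{i_1} w_{i_2} \cdots w_{i_{\omega k}}$ and $w_{i_1} \cdots w_{i_{(\omega+1)k}}$ formed by concatenating blocks chosen from $\{w_1,\dots,w_r\}$ are all \fo[k]-equivalent to each other — this is the classical \efgame pumping argument on blocks, since an $\omega$-power of blocks and an $(\omega+1)$-power look the same to a $k$-round game. Hence they all lie in the same $K' \in \Kb$, and their $\alpha$-images sweep out exactly $T^\omega \cup T^{\omega+1}$. This \efgame step is the heart of the soundness direction, but it is standard and localized.

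\textbf{Completeness: $\Is[\alpha] \subseteq \Sat(\alpha)$.} This is the hard direction and the main obstacle. Following the template, I would construct an explicit \fo-partition \Kb of $A^+$ such that $\Is[\alpha](\Kb) \subseteq \Sat(\alpha)$; by definition of the optimal imprint this yields $\Is[\alpha] \subseteq \Is[\alpha](\Kb) \subseteq \Sat(\alpha)$, and combined with soundness it also shows \Kb is optimal, giving Theorem~\ref{th:main}(2) for free. The natural candidate is the partition of $A^+$ into \fo[k]-equivalence classes for a carefully chosen $k$ (the rank bound $|A|2^{|S|^2}$ of Theorem~\ref{th:main}); each class is \fo-definable (finitely many classes, each a finite boolean combination of rank-$k$ formulas), so \Kb is a legitimate \fo-partition. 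The crux is then the combinatorial statement: for any \fo[k]-class $C$, the set $\alpha(C) \in \Sat(\alpha)$. This should be proved by induction on some well-founded parameter of the class — e.g.\ on the length of short representative words, or on a decomposition of words by a Simon-type / Ramsey-type factorization — where at each step the decomposition of a word $w \in C$ into $w = uv$ or $w = u e^{\text{(large)}} v$ mirrors exactly one application of multiplication or of \fo-Closure in $\Sat$. Making this induction go through — matching the logical equivalence structure on words to the algebraic closure rules, controlling the rank $k$, and handling the idempotent-power case so that pumping produces precisely $T^\omega \cup T^{\omega+1}$ — is the technical core of the paper, which is why the authors devote an entire section (Section~\ref{sec:comp}) to it. I expect the proof to proceed by strengthening the induction hypothesis to a statement about \emph{all} words of a given \fo[k]-class simultaneously (so that one tracks sets of \emph{pairs} $(\text{word up to }\foeq{k}, \alpha\text{-image})$), and to use the closure of $\Sat(\alpha)$ under downset crucially to absorb the "extra" elements that appear when several words of the same class have different images.
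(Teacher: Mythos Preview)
Your soundness argument is correct and essentially matches the paper's (Section~\ref{sec:correc}): the paper also reduces to showing $\Is[\alpha]$ is closed under \fo-Closure, characterizes $\Is[\alpha]$ via $\foeq{k}$-classes (Lemma~\ref{lem:optequ}), and then uses the standard \efgame facts that concatenation respects $\foeq{k}$ and that $u^{2^k}\foeq{k}u^{2^k+1}$ (Lemma~\ref{lem:efconcat}).

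For completeness, your high-level plan---build an \fo-partition whose imprint lies in $\Sat(\alpha)$---is exactly the paper's framework, but your proposed execution diverges from the paper and has a genuine gap. You want to take the partition into $\foeq{k}$-classes and prove $\alpha(C)\in\Sat(\alpha)$ for each class $C$ by induction on word length or a Simon/Ramsey factorization. The difficulty is that factoring a representative $w=uv$ gives only $[u]_k\cdot[v]_k\subseteq[w]_k$ (by compatibility of $\foeq{k}$ with concatenation), hence $\alpha([u]_k)\cdot\alpha([v]_k)\subseteq\alpha([w]_k)$---the wrong containment for reducing $\alpha([w]_k)$ to something built from smaller pieces via multiplication and downset. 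Your sketch does not say how to get around this, and the vague appeal to tracking ``pairs'' does not address it. You also presuppose the rank bound $|A|2^{|S|^2}$, but in the paper that bound is an \emph{output} of the completeness proof, not an input.

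The paper's actual argument (Section~\ref{sec:comp}, Proposition~\ref{prop:pumping}) is structurally quite different: it generalizes the statement to an arbitrary surjective morphism $\beta:B^+\to\Ss$ into a subsemigroup $\Ss\subseteq 2^S$ and runs an induction on the pair $(|\unclos{\Ss}|,|B|)$, following Wilke's proof of Sch\"utzenberger's theorem. The case $|B|=1$ is direct. Otherwise one asks whether $\beta$ is \emph{tame} (every letter acts on $\unclos{\Ss}$ bijectively on both sides). If tame, a short group-theoretic argument (Lemma~\ref{lem:basecase}) gives $\unclos{\Ss}\in\Sat(\Ss)$, so the trivial partition $\{B^+\}$ works. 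If not, one picks a witnessing letter $b$, sets $C=B\setminus\{b\}$, decomposes every word into a prefix in $C^+$, an infix in $(b^+C^+)^+$, and a suffix in $b^+$, and partitions each piece by induction: the prefix and suffix on smaller alphabets, and the infix by abstracting $b^+C^+$ as letters of a new alphabet mapping into a semigroup of strictly smaller index (this is exactly where non-tameness is used, Fact~\ref{fct:indexdown}). The rank bound drops out of the recursion. This structural induction on the semigroup, rather than on words or $\foeq{k}$-classes, is the key idea you are missing; it sidesteps the wrong-direction containment above because it never tries to decompose an $\foeq{k}$-class directly.
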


Since $\Sat(\alpha)$ is computable, Proposition~\ref{prop:algoworks}
immediately implies that so is $\Is[\alpha]$. Using
Theorem~\ref{thm:seppart}, this yields the decidability of the
separation problem for first-order logic. A simple analysis of the
lowest fixpoint procedure shows an {\sc Exptime} complexity upper
bound. This proves the first item of Theorem~\ref{th:main}, as we now
show.

\begin{proof}[Proof of the first item of Theorem~\ref{th:main}]
          By Theorem~\ref{thm:seppart}, it
  suffices to prove that one can compute $\Is[\alpha]$ in {\sc Exptime}
  in the size of $S$. Indeed, it then suffices to test whether there
  exists $T \in \Is[\alpha]$ such that $\alpha(L_1) \cap T \neq
  \emptyset$ and $\alpha(L_2) \cap T \neq \emptyset$. This can also be
  achieved in {\sc Exptime} by testing all possible candidates $T$. By
  Proposition~\ref{prop:algoworks}, we know that computing $\Is[\alpha]$
  can be done by computing $\Sat(\alpha)$.

  By definition, $\Sat(\alpha) \subseteq 2^S$. This means that the
  number of steps the algorithm needs to reach the fixpoint is at most
  exponential in~$S$. Therefore, it suffices to prove that each step can be
  done in {\sc Exptime} to conclude that the whole computation can also be
  done in {\sc Exptime}. Each step requires computing $T^{\omega} \cup
  T^{\omega+1}$ for at most $|2^S|$ subsets $T$. Each computation can be done
  in {\sc Exptime}, since $T^\omega$ is equal to some $T^m$ for $m\leq|2^S|$
  such that $T^m=T^{2m}$.   \end{proof}

We postpone the proof of Item~\ref{item:thmain:2} of Theorem~\ref{th:main} (the
bound on the quantifier rank of the separator) to
Section~\ref{sec:comp} where we prove the difficult direction of
Proposition~\ref{prop:algoworks}: $\Is[\alpha] \subseteq
\Sat(\alpha)$. As explained, the proof amounts to constructing an
optimal \fo-partition~for~$\alpha$.

\smallskip
Another interesting observation about our saturation algorithm is that
it can be viewed as a generalization of Schützenberger's
Theorem~\cite{sfo,mnpfo}. Indeed, a language is first-order
\emph{definable} if and only if its syntactic semigroup is aperiodic.
One definition of aperiodicity is that a semigroup is aperiodic if and
only if it satisfies the identity $s^{\omega} = s^{\omega+1}$. The
counterpart to this definition can be found in the main operation of
our saturation procedure, Operation~\ref{eq:oper} (which is the only
non-generic operation). This observation raises another question:
could Operation~\ref{eq:oper} be replaced to reflect alternate
definitions of aperiodicity while retaining
Proposition~\ref{prop:algoworks}? We shall see in
Section~\ref{sec:altalgo} that this is indeed~possible. Another
consequence of this observation is that we can reprove
Schützenberger's Theorem as a simple corollary of
Proposition~\ref{prop:algoworks}.

\begin{corollary}[Schützenberger's Theorem] \label{cor:carac}
  Let $L$ be a regular language. Then $L$ can be defined in \fo if and
  only if its syntactic semigroup is aperiodic.
\end{corollary}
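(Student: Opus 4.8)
**The plan is to derive Schützenberger's theorem from Proposition~\ref{prop:algoworks} by relating aperiodicity of the syntactic semigroup to the \fo-separability of $L$ from its complement $A^+\setminus L$.** Let $\eta: A^+\to S$ be the syntactic morphism of $L$, so that $L=\eta^{-1}(F)$ for its syntactic image $F\subseteq S$. The key observation, already available from Theorem~\ref{thm:seppart} together with Proposition~\ref{prop:algoworks}, is that $L$ and $A^+\setminus L$ are \fo-separable if and only if for all $s\in F$ and all $t\in S\setminus F$ we have $\{s,t\}\notin\Is[\eta]=\Sat(\eta)$. But since \fo is closed under complementation, $L$ is \fo-definable if and only if $L$ is a separator of itself from its complement, i.e.\ if and only if $L$ and $A^+\setminus L$ are \fo-separable. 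So the corollary reduces to: \emph{$S$ is aperiodic if and only if no pair $\{s,t\}$ with $s\in F$, $t\notin F$ lies in $\Sat(\eta)$.}

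\textbf{First I would prove the easy direction: if $S$ is aperiodic, then $L$ is \fo-definable.} When $S$ is aperiodic, $s^\omega=s^{\omega+1}$ for every $s\in S$, so in $2^S$ the \fo-closure operation applied to a singleton $\{s\}$ yields $\{s^\omega\}\cup\{s^{\omega+1}\}=\{s^\omega\}$, again a singleton. One checks by induction on the construction of $\Sat(\eta)$ that every set obtained by multiplication and \fo-closure from singletons is again a singleton (multiplication of singletons gives singletons, \fo-closure of a singleton gives a singleton), and then closure under downset only adds $\emptyset$ and other singletons --- in short, $\Sat(\eta)\subseteq\{\emptyset\}\cup\{\{s\}\mid s\in S\}$. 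Hence no two-element set $\{s,t\}$ is in $\Sat(\eta)$ at all, so in particular none with $s\in F$, $t\notin F$, and $L$ is \fo-separable from its complement, i.e.\ \fo-definable.

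\textbf{The converse --- if $L$ is \fo-definable then $S$ is aperiodic --- is the direction that needs an argument, and it is where I expect the only real subtlety.} Suppose $S$ is not aperiodic; then some element $s\in S$ satisfies $s^\omega\neq s^{\omega+1}$, and moreover the cyclic group generated by $s^\omega$ (inside $s^\omega S s^\omega$) is nontrivial. I would pick $s$ with $e:=s^\omega$ idempotent and $g:=s^{\omega+1}=es$ of some order $d\geq 2$ in that group, so $g^d=e$ but $g\neq e$. Since $\eta$ is \emph{surjective} (syntactic morphisms are surjective onto $S$), pick $w\in A^+$ with $\eta(w)=s$. Now observe that in $2^S$ one has $\{e,g\}\subseteq\{s\}^\omega\cup\{s\}^{\omega+1}$, and more carefully, starting from $\{s\}$ and repeatedly applying multiplication and \fo-closure one can produce the two-element set $\{e,g\}=\{s^\omega,s^{\omega+1}\}$: indeed $\{s^\omega,s^{\omega+1}\}$ is precisely the result of \fo-closure applied to $\{s\}$ up to downward closure, so $\{s^\omega,s^{\omega+1}\}\in\Sat(\eta)$. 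The remaining task is a purely algebraic one in the syntactic semigroup: because $s^\omega\neq s^{\omega+1}$ and $L$ is recognized by $\eta$ via the \emph{syntactic} congruence, there must be a context $(x,y)$ separating them, i.e.\ words $x,y\in A^*$ with $x s^\omega y\in F\iff x s^{\omega+1}y\notin F$ (this is exactly what non-triviality of $S$ together with syntacticity gives, after replacing $s$ by a suitable power so that $s^\omega$ and $s^{\omega+1}$ are themselves syntactically inequivalent --- standard). Multiplying the two-element set $\{s^\omega,s^{\omega+1}\}\in\Sat(\eta)$ on the left by $\{\eta(x)\}$ and on the right by $\{\eta(y)\}$ (legal by closure under multiplication, using that $\eta$ is onto and $A^+$-words realize $\eta(x),\eta(y)$, with the harmless convention at the empty word) gives a two-element set in $\Sat(\eta)$ with one element in $F$ and the other in $S\setminus F$. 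By Theorem~\ref{thm:seppart} and Proposition~\ref{prop:algoworks} this means $L$ and $A^+\setminus L$ are \emph{not} \fo-separable, so $L$ is not \fo-definable, contradicting our assumption. Hence $S$ is aperiodic.

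\textbf{The main obstacle is purely bookkeeping:} making sure that the ``contexts'' $x,y$ can be taken nonempty (or handled when empty) so that the multiplication rule of $\Sat$ applies with genuine $A^+$-singletons, and arranging the choice of the non-aperiodic element $s$ so that $s^\omega$ and $s^{\omega+1}$ are genuinely syntactically distinguishable rather than merely distinct in $S$ --- but both are routine once one recalls that the syntactic semigroup is the \emph{smallest} recognizing semigroup, so distinct elements are always separated by some context. No \efgame argument is needed here; everything flows from Proposition~\ref{prop:algoworks} and Theorem~\ref{thm:seppart}.
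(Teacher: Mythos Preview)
Your argument is correct, but it takes a different route from the paper's. The paper invokes a variety-theoretic fact: $L$ is \fo-definable if and only if \emph{every} language recognized by its syntactic morphism $\alpha$ is \fo-definable, which happens exactly when $\{\alpha^{-1}(s)\mid s\in S\}$ is an \fo-partition. Since that partition has the smallest conceivable imprint (singletons and $\emptyset$), this is equivalent to $\Is[\alpha]=\Sat(\alpha)$ consisting only of singletons and $\emptyset$, which in turn is equivalent to aperiodicity of $S$. Thus the paper never needs to produce a pair straddling the accepting set~$F$.

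You instead reduce to \fo-separability of $L$ from its complement and, in the non-aperiodic case, manufacture a bad pair $\{p,q\}$ with $p\in F$, $q\notin F$ by multiplying $\{s^\omega,s^{\omega+1}\}\in\Sat(\eta)$ with a syntactic context. This is sound: in the \emph{syntactic} semigroup distinct elements are by definition separated by some context $(x,y)\in A^*\times A^*$, and when $x$ or $y$ is empty you simply omit the corresponding multiplication, so your ``bookkeeping obstacle'' dissolves. (Your worry that $s^\omega$ and $s^{\omega+1}$ might be ``merely distinct in $S$'' but not syntactically distinguishable cannot arise: in the syntactic semigroup these coincide.) What your approach buys is that it avoids quoting the external variety-theoretic lemma and stays entirely inside the separation framework of Theorem~\ref{thm:seppart}; what the paper's approach buys is brevity, since it sidesteps the context argument altogether.
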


\begin{proof}
  It is known that a language is definable in \fo if and only if all
  languages recognized by its syntactic semigroup are definable in
  \fo as well (this is actually not specific to \fo and true for all
  classes of languages that are ``Varieties'', see~\cite{Pin13:MPRI} for
  example). Hence, if $S$ is the syntactic semigroup of $L$ and $\alpha:
  A^+ \to S$ the associated (surjective) morphism, $L$ is definable in \fo if and only if
  $\{\alpha^{-1}(s) \mid s \in S\}$ is an \fo-partition. The imprint on
  $\alpha$ of this \fo-partition is $\{\{\alpha(w)\} \mid w \in A^+\}
  \cup \{\emptyset\}$, which is equal to $\{\{s\}\mid s\in S\}\cup\{\emptyset\}$, since
  $\alpha$ is surjective. Therefore, $L$ is definable in \fo if and only if
  $\Sat(\alpha) = \Is[\alpha] = \{\{s\}\mid s\in S\}\cup\{\emptyset\}$ (see Proposition~\ref{prop:algoworks}). By definition
  of $\Sat(\alpha)$, this is equivalent to $s^\omega = s^{\omega+1}$ for
  all $s \in S$.
\end{proof}

It now remains to prove Proposition~\ref{prop:algoworks}. In
Section~\ref{sec:correc}, we prove that $\Sat(\alpha) \subseteq
\Is[\alpha]$. This corresponds to soundness of the algorithm: all
computed sets indeed belong to $\Is[\alpha]$. Finally, in
Section~\ref{sec:comp}, we focus on the proof of the most difficult
direction, which is the second one: $\Is[\alpha]\subseteq
\Sat(\alpha)$. It implies completeness of the algorithm, that is, that
every set belonging to $\Is[\alpha]$ is actually computed by the algorithm.

\smallskip
We finish this section by running the algorithm, to show that it
detects that the languages of Example~\ref{ex:main} are not
\fo-separable.

\subsection{Example~\ref{ex:main}, continued.}
To start our algorithm, we first need a semigroup morphism recognizing
both $L_0$ and $L_1$. Observe that both languages are recognized by
the automaton below, with 4 as final state for $L_0$, and 2 as final state for~$L_1$.
Therefore, its transition semigroup~$S$ recognizes both
languages\footnote{Recall that the transition semigroup consists of
  all partial mappings induced by words from the state set to itself. It
  is easy to see that it recognizes the language accepted by the
  automaton, see~\cite[Sec.~3.1]{Pin13:MPRI}.}.
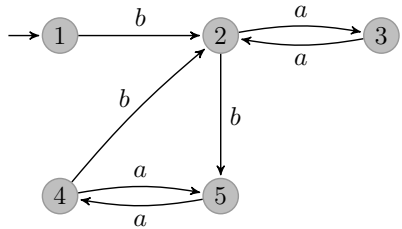
\begin{figure}[h]
  \begin{center}
    \scalebox{.89}{      \begin{tikzpicture}[node distance=2.4cm, initial text={}, ->,>=stealth',shorten >=1pt,auto, semithick]
        \tikzstyle{every
          state}=[circle,fill=black!25,draw=black!40,text=black,minimum
        size=15pt,inner sep=0pt]

        \node[initial,state] (A) {1};
        \node[state] (B) [right of=A] {2};
        \node[state] (C) [right of=B] {3};
        \node[state] (D) [below of=A, yshift=0mm] {4};
        \node[state] (E) [below of=B, yshift=0mm] {5};

        \path (A) edge node {$b$} (B)
        (B) edge [bend left = 10] node {$a$} (C) edge node {$b$} (E)
        (C) edge [bend left = 10] node {$a$} (B)
        (D) edge [bend left = 10] node {$a$} (E) edge [bend left = 5] node [yshift=-.75mm,xshift=.5mm] {$b$} (B)
        (E) edge [bend left = 10] node {$a$} (D);
      \end{tikzpicture}}
  \end{center}
  \caption{Automaton recognizing both $L_0$ and $L_1$}
  \label{fig:automtonA}
\end{figure}
The recognizing morphism $\alpha:A^+\to S$ maps a word to the partial
function it defines from states to states. We still denote the
images of $a,b\in A$ by $a,b\in S$, respectively.  It is easy
to see that $L_0=\alpha^{-1}(b^2a)$ and $L_1=\alpha^{-1}(\{b,b^2ab\})$.

We use Theorem~\ref{thm:seppart} to show that $L_0$ and $L_1$ are
not \fo-separable: we have to find $s_0 \in \alpha(L_0)$ and $s_1 \in
\alpha(L_1)$ such that $\{s_0,s_1\} \in \Is[\alpha]$. We claim that
$s_0 = b^2a$ and $s_1 = b^2ab$ satisfy this property. We actually show
that $\{s_0,s_1\}$ is computed as an element of $\Sat(\alpha)$, which,
by Proposition~\ref{prop:algoworks}, implies that it belongs to
$\Is[\alpha]$.

\smallskip

By definition, $\{a\},\{b\} \in \Sat(\alpha)$. Then, note that
$\{a\}^\omega=\{a^2\}$ and $\{a\}^{\omega+1}=\{a\}$. Therefore, by
definition of Operation~\ref{eq:oper}, we have
$\{a,a^2\}\in\Sat(\alpha)$.  Using Operation~\ref{eq:oper2}, we then
obtain that $X=\{a,aa\}\cdot\{b\}=\{ab,aab\} \in \Sat(\alpha)$.
Now, Operation~\ref{eq:oper} yields $Y= X^\omega\cup X^{\omega+1}\in
\Sat(\alpha)$. Computing $Y$ shows that $\{bab,bab^2\}\subseteq Y$.
Finally, using Operation~\ref{eq:oper2}, we obtain that $T=\{b\}\cdot
Y\cdot\{a,a^2\} \in \Sat(\alpha)$. One can then verify that
$\{b^2a,b^2ab\} \subseteq T$. Therefore, we get from
Operation~\ref{eq:oper1} that $\{b^2a,b^2ab\} \in \Sat(\alpha)$, as claimed.

\section{Soundness of the Algorithm}
\label{sec:correc}
In this section we prove soundness of our algorithm, that is the
inclusion $\Sat(\alpha) \subseteq \Is[\alpha]$ in
Proposition~\ref{prop:algoworks}. Recall that we work with a morphism
$\alpha: A^+ \rightarrow S$ into a finite semigroup~$S$. By definition of $\Sat(\alpha)$, we need to
prove that $\Is[\alpha]$ contains the set $\{\{\alpha(w)\} \mid
w \in A^+\}$ and is closed under Downset, Multiplication and
\fo-Closure.

We already know from Fact~\ref{fct:trivial} that $\Is[\alpha]$
contains the set $\{\{\alpha(w)\} \mid w \in A^+\}$. Furthermore,
closure under Downset and Multiplication follows from
Fact~\ref{fct:downclos} and Lemma~\ref{lem:usemi}. Therefore, we only
need to prove that $\Is[\alpha]$ is closed under \fo-Closure. This is
what we do now. To present the argument, we need an alternate
definition of $\Is[\alpha]$.

Given two words $w,w' \in A^+$ and $k \in \nat$, we write $w \foeq{k}
w'$ to denote the fact that $w$ and $w'$ satisfy the same \fo-formulas of
quantifier rank $k$. One can verify that for all $k$, \foeq{k} is an
equivalence relation of finite index and that each class can be
defined in \fo.

\begin{lemma} \label{lem:optequ}
Let $T \in 2^S$. Then $T \in \Is[\alpha]$ if and
only if for all $k \in \nat$, there exists an equivalence class $W$
of \foeq{k} such that $T \subseteq \alpha(W)$.
\end{lemma}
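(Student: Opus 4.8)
The statement is an equivalence relating membership in the optimal imprint $\Is[\alpha]$ to the existence, for every $k$, of an $\foeq{k}$-class $W$ with $T\subseteq\alpha(W)$. The plan is to prove the two directions separately, exploiting the fact (recalled just before the lemma) that for each $k$ the classes of $\foeq{k}$ form a partition of $A^+$ into finitely many \fo-definable languages, hence an \fo-partition; call it $\Kb_k$.

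\textbf{Direction $(\Rightarrow)$.} Suppose $T\in\Is[\alpha]=\Is_\fo[\alpha]$. By definition of the optimal imprint, $\Is[\alpha]\subseteq\Is[\alpha](\Kb)$ for every \fo-partition $\Kb$; in particular $\Is[\alpha]\subseteq\Is[\alpha](\Kb_k)$ for each fixed $k$. Hence $T\in\Is[\alpha](\Kb_k)$, which by the definition of imprints means exactly that there is some class $W\in\Kb_k$ of $\foeq{k}$ with $T\subseteq\alpha(W)$. This direction is essentially immediate from unfolding definitions.

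\textbf{Direction $(\Leftarrow)$.} Suppose that for every $k$ there is an $\foeq{k}$-class $W_k$ with $T\subseteq\alpha(W_k)$; I must show $T\in\Is[\alpha]$, i.e. $T\in\Is[\alpha](\Kb)$ for \emph{every} \fo-partition $\Kb$. Fix such a $\Kb=\{K_1,\dots,K_m\}$. Since each $K_i\in\fo$, it is definable by an \fo-formula of some finite quantifier rank; let $k$ be the maximum of these ranks. Then each $K_i$ is a union of $\foeq{k}$-classes, so the partition $\Kb_k$ refines $\Kb$. By hypothesis there is an $\foeq{k}$-class $W_k$ with $T\subseteq\alpha(W_k)$; since $\Kb_k$ refines $\Kb$, $W_k\subseteq K_i$ for some $i$, whence $T\subseteq\alpha(W_k)\subseteq\alpha(K_i)$, so $T\in\Is[\alpha](\Kb)$. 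As $\Kb$ was arbitrary, $T\in\Is[\alpha]$.

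\textbf{Main obstacle.} There is no serious obstacle; the only points requiring care are the two standard facts about $\foeq{k}$ used silently above — that $\foeq{k}$ has finite index with each class \fo-definable (so $\Kb_k$ is a legitimate \fo-partition), and that any \fo-definable language of quantifier rank $\le k$ is a union of $\foeq{k}$-classes (so that a coarse enough $\Kb_k$ refines any given \fo-partition). Both are classical and can be cited; once they are in hand, the lemma follows purely by chasing the definitions of imprint and optimal imprint.
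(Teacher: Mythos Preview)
Your proof is correct and follows essentially the same approach as the paper's: both directions are obtained by instantiating the definition of the optimal imprint with the \fo-partition $\Kb_k$ given by the $\foeq{k}$-classes, using that any \fo-partition is refined by some $\Kb_k$. The only cosmetic difference is that you name $\Kb_k$ explicitly and invoke the inclusion $\Is[\alpha]\subseteq\Is[\alpha](\Kb_k)$, whereas the paper leaves this implicit.
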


\begin{proof}
  Assume first that for all $k \in \nat$, there exists an equivalence
  class $W$ of \foeq{k} such that $T \subseteq \alpha(W)$. To prove that
  $T \in \Is[\alpha]$, we show that $T \in \Is[\alpha](\Kb)$ for any
  \fo-partition \Kb. By definition, there exists $k \in \nat$ such that all
  languages in \Kb can be defined by a formula of quantifier rank $k$ (recall
  that \fo-partitions are finite partitions). It follows that elements of \Kb
  are unions of classes of \foeq{k}, hence $W$ is included in some language $K$
  of $\Kb$. By hypothesis, $T \subseteq \alpha(W)$. It follows that
  $T \subseteq \alpha(W) \subseteq \alpha(K)$, which terminates the proof of
  this direction.

  Conversely, assume that $T \in \Is[\alpha]$ and set $k \in \nat$. Set \Kb as
  the partition of $A^+$ into equivalence classes of \foeq{k}. By definition,
  \Kb is an \fo-partition. Hence, there exists an equivalence class
  $W \in \Kb$ of \foeq{k} such that $T \subseteq \alpha(W)$.
\end{proof}

We can now prove that $\Is[\alpha]$ is closed under \fo-closure to conclude
the soundness proof: set
$T = \{t_1,\dots,t_n\} \in \Is[\alpha]$ and set $R = T^\omega \cup
T^{\omega+1}$, we need to prove that $R \in \Is[\alpha]$. We use
Lemma~\ref{lem:optequ}: for all $k \in \nat$, we shall find an equivalence class
$V$ of \foeq{k} such that $R \subseteq
\alpha(V)$.

Set $k \in \nat$. Using the other direction of Lemma~\ref{lem:optequ}
for $T$, we obtain an equivalence class $W$ of \foeq{k} such that $T
\subseteq \alpha(W)$. Set $V = W^{2^k\omega} \cup W^{2^k\omega+1}$.
By definition, $R \subseteq \alpha(V)$. Therefore, it suffices to
prove that $V$ is included in an equivalence class of $\kfoeq$
(\emph{i.e.}, that all words in $V$ are pairwise equivalent).
This
is a consequence of the following lemma, which is easy and folklore, and whose
simple proof is omitted here. It relies on \efgame games, details can be
found in~\cite{bookstraub}.

\begin{lemma} \label{lem:efconcat}
Set $k \in \nat$, then.
  \begin{enumerate}
  \item\label{item:5} For all $u_1,u_2,v_1,v_2 \in A^+$, if
    $u_1\foeq{k}v_1$ and $u_2\foeq{k}v_2$, then $u_1\cdot u_2 \foeq{k}
    v_1 \cdot v_2$.
  \item\label{item:6} For all $u\in A^+$ and all $k>0$, we have $u^{2^k}\foeq{k}u^{2^k+1}$.
  \end{enumerate}
\end{lemma}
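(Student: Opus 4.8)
The plan is to use the standard Ehrenfeucht--Fra\"iss\'e characterization: for $w,w' \in A^+$ and $k \in \nat$, we have $w \foeq{k} w'$ if and only if Duplicator has a winning strategy in the $k$-round \efgame game played on the word structures of $w$ and $w'$. Both items then follow from strategy-composition arguments, and no other ingredient is needed.

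For Item~\ref{item:5}, let $\sigma_1$ be a winning strategy for Duplicator in the $k$-round game on $(u_1,v_1)$ and $\sigma_2$ one for the $k$-round game on $(u_2,v_2)$. I would build a winning strategy on $(u_1 u_2, v_1 v_2)$ by \emph{routing} moves: whenever Spoiler plays a position of $u_1 u_2$ lying in the $u_1$-block (resp.\ the $u_2$-block), Duplicator answers according to $\sigma_1$ (resp.\ $\sigma_2$), read inside the corresponding block of $v_1 v_2$; symmetrically when Spoiler plays in $v_1 v_2$. One then checks that after each round the chosen pairs form a partial isomorphism: label-consistency and order-consistency \emph{within} a block are provided by $\sigma_1$ and $\sigma_2$, while order-consistency \emph{across} blocks is automatic, since in both $u_1 u_2$ and $v_1 v_2$ every position of the first block precedes every position of the second, and the routing always keeps the two sub-games on matching blocks. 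This is routine bookkeeping.

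For Item~\ref{item:6}, I would prove by induction on $k$ the stronger pumping statement: for every $u \in A^+$ and all $m,n \ge 2^k-1$, one has $u^m \foeq{k} u^n$ (Item~\ref{item:6} is the instance $m=2^k$, $n=2^k+1$, and the hypothesis $k>0$ ensures the exponents are positive). The base case $k\le 1$ is immediate, since rank-$1$ sentences can only test which letters occur. For the inductive step, consider the $k$-round game on $u^m$ and $u^n$ and let Spoiler play a position $x$, say in $u^m$, inside the $i$-th copy of $u$; write $u = u'u''$ with $u'\neq\varepsilon$ ending at $x$, so $u^m$ splits as $(u^{i-1}u')\cdot(u''u^{m-i})$. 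Since $m-1 \ge 2(2^{k-1}-1)$, at most one of the copy-counts $i-1$ and $m-i$ is smaller than $2^{k-1}-1$. Duplicator answers $x$ by the position $y$ of $u^n$ at the same within-copy offset in its $j$-th copy (same label), choosing $j$ so that $u^n$ splits as $(u^{j-1}u')\cdot(u''u^{n-j})$ with $j-1 = i-1$ when $i-1 < 2^{k-1}-1$ and $j-1 \ge 2^{k-1}-1$ otherwise, and symmetrically $n-j = m-i$ when $m-i < 2^{k-1}-1$ and $n-j \ge 2^{k-1}-1$ otherwise; a short case analysis using $n \ge 2^k-1$ shows such a $j$ exists. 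By the induction hypothesis, together with Item~\ref{item:5} applied at rank $k-1$ to reattach the common factors $u'$ and $u''$ (or literal equality in the degenerate small-count cases), the prefixes $u^{i-1}u'$ and $u^{j-1}u'$ are $\foeq{k-1}$-equivalent, and so are the suffixes $u''u^{m-i}$ and $u''u^{n-j}$; composing the two corresponding $(k-1)$-round winning strategies across the already-fixed boundary pair $(x,y)$, exactly as in Item~\ref{item:5}, yields a Duplicator strategy for the remaining $k-1$ rounds. The case where Spoiler plays in $u^n$ is symmetric.

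The isomorphism verifications and the base cases are routine. The main obstacle is the counting/boundary bookkeeping in the inductive step of Item~\ref{item:6}: one must check that an index $j$ with the stated properties always exists (this is precisely where the threshold $2^k-1$ rather than $2^k$ is needed), deal cleanly with the degenerate situations $i=1$, $m=i$ and $u''=\varepsilon$, and make precise the claim that two $(k-1)$-round winning strategies on the two halves glue into a winning strategy for the last $k-1$ rounds of the full game once the boundary pair has been placed. None of this is deep, which is why the full argument is deferred to~\cite{bookstraub}.
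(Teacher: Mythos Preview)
Your proposal is correct and follows exactly the approach the paper indicates: the paper omits the proof entirely, calling the lemma ``easy and folklore'' and stating only that ``it relies on \efgame games, details can be found in~\cite{bookstraub}.'' Your sketch supplies precisely such an \efgame argument---the standard strategy-composition proof for Item~\ref{item:5} and the threshold-based copy-matching induction for Item~\ref{item:6}---so there is nothing to contrast.
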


Let us now conclude the proof. Pick some arbitrarily chosen word $w \in W$.  By
Lemma~\ref{lem:efconcat}~\ref{item:5}, it is immediate that any word of $V$
is $\kfoeq$-equivalent to either $u_0 = w^{2^k\omega} \in W^{2^k\omega}$ or to
$u_1 = w^{2^k\omega+1} \in W^{2^k\omega+1}$. To conclude that all words of
$V$ are $\kfoeq$-equivalent, it remains to prove that $u_0 \kfoeq u_1$,
which follows directly from Lemma~\ref{lem:efconcat}~\ref{item:6}.

\section{Completeness of the Algorithm}
\label{sec:comp}
In this section, we prove the most interesting inclusion from
Proposition~\ref{prop:algoworks}: $\Is[\alpha] \subseteq \Sat(\alpha)$. We use
induction to construct an \fo-partition \Kb whose imprint on $\alpha$ belongs
to $\Sat(\alpha)$. This proves that
$\Is[\alpha] \subseteq \Is[\alpha](\Kb) \subseteq \Sat(\alpha)$. For the rest
of this section, we assume fixed a morphism $\alpha: A^+ \rightarrow S$ into a
finite semigroup. We state the induction in the following proposition. Note
that, in order to set up the induction, we have to start from a morphism from
some free monoid $B^+$, where $B$ is an \emph{arbitrary} alphabet, into an \emph{arbitrary}
subsemigroup \Ss of $2^S$. This is because, following a proof from
Wilke~\cite{wfo}, we argue by induction on the size of (the index of) the
semigroup, and of the alphabet.

\begin{proposition} \label{prop:pumping}
Let $\Ss$ be a subsemigroup of $2^S$ and $\beta : B^+ \rightarrow
\Ss$ be a surjective morphism. Then there exists an \fo-partition \Kb
of $B^+$ such that for all $K \in \Kb$,

\begin{enumerate}
\item\label{it:c1} $\unclos{\beta(K)} \in \Sat(\Ss)$.
\item\label{it:c2} $K$ can be defined by a first-order formula
  of rank at most $|B|\cdot2^{|\unclos{\Ss}|^2}$.
\end{enumerate}
\end{proposition}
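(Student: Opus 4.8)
The plan is to prove Proposition~\ref{prop:pumping} by induction on the pair $(|\unclos{\Ss}|, |B|)$ ordered lexicographically (the index of $\Ss$ first, the alphabet size second), following the structure of Wilke's membership argument~\cite{wfo}. The base case is when $|B| = 1$, say $B = \{b\}$: then $B^+ = \{b, b^2, b^3, \dots\}$ and $\beta$ is determined by the single element $e = \beta(b)$. Since $\Ss$ is a finite semigroup, the sequence $e, e^2, e^3, \dots$ is eventually periodic; more precisely there is $m$ (bounded by $|\Ss| \le 2^{|\unclos{\Ss}|}$) with $e^m$ idempotent and $e^m = e^{\omega}$. I would take \Kb to consist of the singletons $\{b\}, \{b^2\}, \dots, \{b^{N-1}\}$ together with the two languages $\{b^n \mid n \ge N,\ n \equiv 0 \bmod p\}$-type classes of the ultimate period — concretely, the \foeq{k}-classes of $B^+$ for $k = \lceil \log_2 N \rceil$, which are \fo-definable and refine into finitely many pieces by Lemma~\ref{lem:efconcat}. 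For such a class $K$, $\unclos{\beta(K)}$ is either a singleton $\{e^n\}$ (in $\Ss \subseteq \Sat(\Ss)$), or a set of the form $\{e^n\} \cup \{e^{n+1}\}$ obtained from $\{e\}$ by multiplication and \fo-Closure (Operations~\ref{eq:oper2} and~\ref{eq:oper}), hence in $\Sat(\Ss)$; the rank bound $|B| \cdot 2^{|\unclos{\Ss}|^2} = 2^{|\unclos{\Ss}|^2}$ is easily large enough.

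For the inductive step $|B| \ge 2$, I would distinguish two cases depending on whether $\beta$ restricted to proper sub-alphabets already ``sees'' all of $\Ss$, i.e., whether there is a strict subset $C \subsetneq B$ with $\beta(C^+)$ generating (or being cofinal in) $\Ss$. \emph{Case 1: some letter is redundant.} If $\Ss = \beta(C^+)$ for some $C \subsetneq B$, pick $a \in B \setminus C$ and decompose each word of $B^+$ along its occurrences of $a$: $w = w_0 a w_1 a \cdots a w_\ell$ with $w_i \in C^*$. Using the induction hypothesis on $C$ (smaller alphabet, same or smaller index) we get an \fo-partition $\Kb_C$ of $C^+$ (and handle the empty $w_i$ by passing to a monoid/adding a fresh neutral class); an \fo-formula can determine, for each occurrence of $a$, the $\Kb_C$-class of the block preceding and following it, and the key observation is that concatenation of blocks only uses multiplication in $\Ss$, so the resulting refined partition of $B^+$ satisfies item~\ref{it:c1} because $\Sat(\Ss)$ is closed under multiplication (Operation~\ref{eq:oper2}). \emph{Case 2: every letter is needed.} Then for each $b \in B$, $\beta(B^+) \ne \beta((B \setminus \{b\})^+)$ — intuitively $\Ss$ is ``alphabet-minimal''. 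Here one uses the induction hypothesis on the \emph{index}: pick an idempotent $e = \beta(u)^{\omega}$ for some word $u$; the subsemigroup $e \Ss e$ (the ``local'' semigroup at $e$) has strictly smaller index, or one factors words through a product $X^\omega$ where $X = \beta(\text{something})$ and applies \fo-Closure to land the idempotent-type sets $X^\omega \cup X^{\omega+1}$ in $\Sat(\Ss)$, while the ``linker'' pieces are controlled by induction on a smaller object. The \foeq{k}-equivalence machinery of Lemma~\ref{lem:efconcat} — in particular $u^{2^k} \foeq{k} u^{2^k+1}$ — is exactly what makes the $X^\omega$ blocks collapse to \fo-definable classes, matching Operation~\ref{eq:oper}.

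Throughout, the rank bookkeeping must be tracked: in Case~1 the rank grows additively by the cost of locating and labelling $a$-occurrences (a single extra quantifier suffices per block boundary, so the contribution is linear in $|B|$), and in Case~2 the $2^k$-power in Lemma~\ref{lem:efconcat}~\ref{item:6} forces a factor doubling in $k$, which is precisely accounted for by the exponent $2^{|\unclos{\Ss}|^2}$ since the index can drop by at least one in each nested application. Assembling these, a class $K$ of the final partition is an intersection of at most $|B|$ constraints each of rank at most $2^{|\unclos{\Ss}|^2}$, giving item~\ref{it:c2}, and item~\ref{it:c1} holds because $\unclos{\beta(K)}$ is built from the singletons $\{\alpha(w)\}$ (equivalently, from generators of $\Ss$, which lie in $\Sat(\Ss)$ since $\Ss \subseteq \Sat(\Ss)$) using only Downset, Multiplication, and \fo-Closure.

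\textbf{Main obstacle.} The delicate point is the dichotomy in the inductive step and, within Case~2, making the reduction on the \emph{index} rather than the alphabet genuinely decrease the induction parameter while still producing an \fo-partition of all of $B^+$ (not just of the ``long'' words). One must carefully choose which sub-\isemi-like object (a set of the form $e\Ss e$, or the subsemigroup generated by $X = \beta(v)$ for a suitable $v$, or $\downclos$ of such) has strictly smaller union-size, verify it is still a subsemigroup of $2^S$, and glue the inductively-obtained partition with the finitely many \foeq{k}-classes handling bounded-length prefixes/suffixes — ensuring the glue uses only the three saturation operations. Getting the surjectivity hypothesis of the proposition to hold for the smaller morphism (one may need to replace $\Ss$ by $\downclos \beta(C^+)$ or restrict the codomain) and threading the rank bound through the gluing are the technical heart of the argument.
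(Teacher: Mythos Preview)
Your induction parameters and the $|B|=1$ case are essentially right (though your description of the tail class is loose: $\unclos{\beta(K)}$ for the single ``large'' class is the whole periodic part $\bigcup_{j\ge 0}\beta(b)^{\omega+j}$, not just two consecutive powers; the paper writes this as a product of sets $\beta(b)^\omega\cup\beta(b)^{\omega+j}$, each obtained by \fo-Closure and multiplication).

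The real gap is your dichotomy for $|B|\ge 2$, and you correctly flag it as the main obstacle without resolving it. Your split ``some letter redundant'' versus ``every letter needed'' does not terminate. Concretely, take $\Ss\subseteq 2^S$ to be a non-cyclic group --- say a Klein four group realized as $\{\{s\}\mid s\in S\}$ for $S=\mathbb{Z}/2\times\mathbb{Z}/2$ --- and let $\beta$ send $B=\{a,b\}$ to a minimal generating pair. Then neither letter is redundant, so you are in your Case~2; but the only idempotent is $1_\Ss$, so $e\Ss e=\Ss$, and any subsemigroup generated by a single $\beta(v)$ is a proper subgroup whose union does not recover $\unclos{\Ss}$. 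Your induction stalls precisely here.

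The paper's dichotomy is different and supplies the missing idea. Call $\beta$ \emph{tame} when $\beta(b)\cdot\unclos{\Ss}=\unclos{\Ss}=\unclos{\Ss}\cdot\beta(b)$ for every $b\in B$. The tame case is \emph{terminal}, not inductive: one shows (Lemma~\ref{lem:basecase}) that $\Ss$ then contains a group $\Gs$ with $\unclos{\Gs}=\unclos{\Ss}$, and writes $\unclos{\Gs}=\prod_i(T_i^\omega\cup T_i^{\omega+1})\in\Sat(\Ss)$ via \fo-Closure, so $\Kb=\{B^+\}$ already works. The group example above is tame and lands here. In the non-tame case, the witnessing letter $b$ gives $\beta(b)\cdot\unclos{\Ss}\subsetneq\unclos{\Ss}$ (or the symmetric right-sided version), and \emph{that strict inclusion} is what forces the index to drop: with $C=B\setminus\{b\}$, abstracting each factor in $b^+C^+$ as a letter of a new alphabet $\frB$ yields a morphism into a subsemigroup $\Ts$ with $\unclos{\Ts}\subseteq\beta(b)\cdot\unclos{\Ss}\subsetneq\unclos{\Ss}$. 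The prefix in $C^+$ and suffix in $b^+$ are handled by induction on $|B|$, and concatenation glues the three pieces at the cost of two extra quantifiers. Your ``redundant letter'' branch never appears as a separate case; both the alphabet and the index reductions live together inside the non-tame case.
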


Before proving Proposition~\ref{prop:pumping}, we apply it to conclude
the proof of Proposition~\ref{prop:algoworks}. Set $\Ss =
\{\{\alpha(w)\} \mid w \in A^+\}$ and $\beta: A^+ \to \Ss$ defined by
$\beta(w) = \{\alpha(w)\}$ (note that $\beta$ is surjective). Recall
that by definition, $\Sat(\alpha) = \Sat(\Ss)$. From
Proposition~\ref{prop:pumping}, we obtain an \fo-partition \Kb of $A^+$
such that, for all $K \in \Kb$,
\begin{enumerate}
\item\label{it:c1alpha} $\alpha(K) = \unclos{\beta(K)} \in \Sat(\alpha)$.
\item\label{it:c2alpha} $K$ can be defined by a first-order formula
  of rank at most $|A|\cdot 2^{|S|^2}$.
\end{enumerate}

It is now immediate from Item~\ref{it:c1alpha} and the fact that
$\Sat(\alpha)$ is closed under downset that $\Is[\alpha](\Kb)
\subseteq \Sat(\alpha)$. We conclude that $\Is[\alpha] \subseteq
\Is[\alpha](\Kb) \subseteq \Sat(\alpha)$ which terminates the proof of
Proposition~\ref{prop:algoworks}. Moreover, we already know from Section~\ref{sec:correc} that
$\Is[\alpha] \supseteq \Sat(\alpha)$, so we actually obtain that
$\Is[\alpha] = \Is[\alpha](\Kb)$: \Kb is optimal for $\alpha$.

This is of particular interest. Indeed, we know from Theorem~\ref{thm:seppart}
that for any two languages recognized by $\alpha$ that are \fo-separable, one
can construct a separator as a union of languages in \Kb. Therefore, since our
proof of Proposition~\ref{prop:pumping} is constructive (\Kb is built by
induction), we obtain a method for constructing an \fo-separator for any pair
of \fo-separable languages recognized by $\alpha$. Finally, we know from Item~\ref{it:c2alpha} that
this separator has quantifier rank at most $|A|\cdot 2^{|S|^2}$ which yields the
second item in Theorem~\ref{th:main}.

\begin{corollary}[Second item in Theorem~\ref{th:main}]
Given two languages $L_0$ and $L_1$ that are recognized by $\alpha$,
if they are \fo-separable, then one can effectively construct an actual separator with
a formula of quantifier rank at most $|A|2^{|S|^2}$.
\end{corollary}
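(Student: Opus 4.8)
The plan is to assemble the separator from the optimal \fo-partition produced by Proposition~\ref{prop:pumping}, exactly along the lines sketched in the discussion preceding this corollary. First I would fix the accepting sets $T_0 = \alpha(L_0)$ and $T_1 = \alpha(L_1)$, and apply Proposition~\ref{prop:pumping} to the subsemigroup $\Ss = \{\{\alpha(w)\} \mid w \in A^+\}$ with the surjective morphism $\beta: A^+ \to \Ss$ given by $\beta(w) = \{\alpha(w)\}$. This yields an \fo-partition \Kb of $A^+$ with $\alpha(K) = \unclos{\beta(K)} \in \Sat(\alpha)$ for each block and with each $K \in \Kb$ definable by a formula of quantifier rank at most $|A|2^{|S|^2}$. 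Combining Item~\ref{it:c1alpha} with closure of $\Sat(\alpha)$ under downset gives $\Is[\alpha](\Kb) \subseteq \Sat(\alpha)$, and together with the soundness direction (Section~\ref{sec:correc}) this forces $\Is[\alpha](\Kb) = \Is[\alpha]$, so that \Kb is optimal for $\alpha$.

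Next I would extract the separator. Since $L_0$ and $L_1$ are assumed \fo-separable, the equivalence $\ref{item:cov2sep1} \Leftrightarrow \ref{item:cov2sep2}$ of Theorem~\ref{thm:seppart} yields $\{t_0, t_1\} \notin \Is[\alpha]$ for all $t_0 \in T_0$ and $t_1 \in T_1$. By optimality of \Kb this is precisely Item~\ref{ita:1} of Lemma~\ref{lem:imprint}, read with the roles $L_1 := L_0$ and $L_2 := L_1$. The proof of that lemma then exhibits an explicit separator, namely the finite union
\[
  K = \bigcup_{\{K' \in \Kb \mid K' \cap L_0 \neq \emptyset\}} K',
\]
which satisfies $L_0 \subseteq K$ and $K \cap L_1 = \emptyset$. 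For the rank bound I would note that $K$ is a finite disjunction of the defining formulas of the relevant blocks of \Kb, each of quantifier rank at most $|A|2^{|S|^2}$ by Item~\ref{it:c2alpha}; since a finite Boolean combination never increases quantifier rank, the formula obtained for $K$ has rank at most $|A|2^{|S|^2}$, as required.

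The effectiveness of the construction, which is the real content of the word ``construct'', rests on the fact that the proof of Proposition~\ref{prop:pumping} is constructive: it builds \Kb together with an explicit defining formula for each block by induction on the index of $\Ss$ and on the size of the alphabet. Given these formulas, selecting the blocks with $K' \cap L_0 \neq \emptyset$ is decidable, since it reduces to an emptiness test for the intersection of two regular languages, and the separator is produced by disjoining the formulas of the selected blocks. I therefore expect no genuine obstacle within this corollary itself, which is essentially a bookkeeping consequence of Theorem~\ref{thm:seppart}, Lemma~\ref{lem:imprint}, and Proposition~\ref{prop:pumping}; the entire difficulty is deferred to the constructive proof of Proposition~\ref{prop:pumping}, where one must actually produce an optimal partition and simultaneously control the quantifier rank of its blocks.
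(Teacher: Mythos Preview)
Your proposal is correct and follows essentially the same approach as the paper: apply Proposition~\ref{prop:pumping} to obtain the \fo-partition \Kb, argue via soundness that \Kb is optimal, and then use Theorem~\ref{thm:seppart} to extract a separator as a union of blocks of \Kb, with the rank bound inherited from Item~\ref{it:c2alpha}. The only difference is cosmetic: the paper invokes Item~\ref{item:cov2sep3} of Theorem~\ref{thm:seppart} directly, whereas you unpack that implication by going through Item~\ref{item:cov2sep2} and Lemma~\ref{lem:imprint}, and you add a remark on why the block selection step is effective.
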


Note that a rough analysis of the procedure that constructs separators
yields a 2-\textsc{Exptime} upper bound on the complexity in the size of
$S$. This is because while the rank of the formula is ``only''
exponentially large in $S$, its size is one exponential larger in
general.

Another interesting remark is that while this paper is written from a
logical perspective (we prove that \Kb is an \fo-partition by
constructing a first-order formula for each language in \Kb), the
construction is not specific to first-order logic. In other words, the
construction could be easily adapted to obtain \emph{star-free
  expressions}, \emph{\ltl formulas} or \emph{counter-free automata}
defining the languages in \Kb.

\bigskip

It now remains to prove Proposition~\ref{prop:pumping}. The rest
of this section is devoted to this proof. We set $\Ss$ as a
subsemigroup of $2^S$ and $\beta : B^+ \rightarrow \Ss$ as a morphism as
in the statement of the proposition. The proof is a generalization of
Wilke's argument~\cite{wfo} for deciding first-order definability. As
explained above, the proof is constructive. We construct the partition
$\Kb$ as well as the first-order formulas that define its languages.
We proceed by induction on the following two parameters listed by
order of importance:
\begin{enumerate}[label=$(\alph*)$]
\item the index $|\unclos{\Ss}|$ of $\Ss$,
\item the size of $B$.
\end{enumerate}
The proof is divided into three cases:
\begin{itemize}
\item first, we consider the case when $|B|=1$.
\item otherwise, we distinguish two subcases, depending
on a property of $\beta$ called \emph{tameness}.
\end{itemize}

\subsection{\texorpdfstring{Special Case: $|B| = 1$.}{Special Case: |B| = 1.}}

In that case, $B$ is a singleton $\{b\}$. Hence all words are of the
form $b^n$ for some $n \geq 1$. It follows from a standard semigroup
theory argument that there exists $m \leq |\Ss| \leq
2^{|\unclos{\Ss}|}$ such that $\beta(b^m) = \beta(b^\omega)$. We
partition $B^+$ into $m$ languages.

For all $1 \leq i < m$, we set $K_i$ as the singleton $\{b^i\}$.
Finally, we set $K_m = \{b^j \mid j \geq m\}$. It is immediate by
definition that $\Kb = \{K_1,\dots,K_m\}$ is a partition of $B^+$.
Moreover, one can easily construct first-order formulas of rank at
most $m \leq 2^{|\unclos{\Ss}|}\leq 2^{|\unclos{\Ss}|^2}$ for $K_1,\dots,K_m$: \Kb is an
\fo-partition and we obtain Item~\ref{it:c2}. It remains to prove
Item~\ref{it:c1}.

For $i < m$, we have $\unclos{\beta(K_i)} = \beta(b^i) \in \Ss
\subseteq \Sat(\Ss)$. Hence Item~\ref{it:c1} is satisfied. Assume
now that $i = m$. By definition of $K_m$ and $\omega$,
\[
\unclos{\beta(K_m)} = \bigcup_{j \geq 0} \beta(b^{j+\omega}) =
(\beta(b)^\omega \cup \beta(b)^{\omega+1}) \cdots (\beta(b)^\omega \cup \beta(b)^{2\omega - 1})
\]
Since $\Sat(\Ss)$ is closed under multiplication, it suffices to
prove that for all $j$, we have $\beta(b)^\omega \cup \beta(b)^{\omega+j}
\in \Sat(\Ss)$. By definition, for any $j \geq 0$,
$\beta(b)^{\omega+j} \in \Ss \subseteq \Sat(\Ss)$. Moreover, observe that
\[
\beta(b)^\omega \cup \beta(b)^{\omega+j} = (\beta(b)^{\omega+j})^\omega \cup
(\beta(b)^{\omega+j})^{\omega+1}.
\]
Therefore, $\beta(b)^\omega \cup \beta(b)^{\omega+j} \in \Sat(\Ss)$ is
immediate by Operation~\ref{eq:oper}.

\medskip

This terminates the case $|B| = 1$. For the remainder of the proof, we
now assume that $|B| \geq 2$. As explained above, we distinguish two
cases depending on a property of $\beta$.

\medskip
\noindent
{\bf Tameness.} We say that $\beta$ is \emph{tame} if  $$\forall b \in
B,\ \unclos{\Ss} = \beta(b) \cdot \unclos{\Ss}\text{ and }\unclos{\Ss} =
\unclos{\Ss} \cdot \beta(b).$$

\subsection{\texorpdfstring{Case 1: $\beta$ is tame}{Case 1: beta is
    tame}} This is the base case: we don't use induction. We use
tameness to prove that $ \unclos{\Ss} \in \Sat(\Ss)$. Therefore, it
suffices to choose, $\Kb = \{B^+\}$ since $\unclos{\beta(B^+)}
= \unclos{\Ss}$ (recall that $\beta$ is assumed to be surjective).
This is a consequence of the following lemma:

\begin{lemma} \label{lem:basecase}
There exists a group $\Gs \subseteq \Ss$ such that $\unclos{\Gs} =
\unclos{\Ss}$.
\end{lemma}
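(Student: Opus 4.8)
The plan is to exploit tameness to show that $\unclos\Ss$, seen as an element of the semigroup of subsets of $S$, lies in a subgroup of $2^S$, and then to realize that subgroup inside $\Ss$ itself. First I would recall that in any finite semigroup, for an element $x$ the set $x^\omega \cdot M \cdot x^\omega$ (where $M$ is a subsemigroup containing $x$) forms a group with identity $x^\omega$ whenever $x^\omega$ acts as a two-sided identity on it. Here the candidate element is $e = \unclos\Ss$ itself: since $\beta$ is surjective, $\unclos{\beta(B^+)} = \unclos\Ss$, and more importantly tameness says exactly that $\beta(b)\cdot\unclos\Ss = \unclos\Ss = \unclos\Ss\cdot\beta(b)$ for every letter $b$. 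Multiplying out and using Fact~\ref{fct:uprod} (which gives $\unclos{\Ts\cdot\Ts'} = \unclos\Ts\cdot\unclos\Ts'$), this propagates to $\unclos{\beta(w)}\cdot\unclos\Ss = \unclos\Ss = \unclos\Ss\cdot\unclos{\beta(w)}$ for every word $w$, hence $E\cdot\unclos\Ss = \unclos\Ss = \unclos\Ss\cdot E$ for every $E$ in $\unclos\Ss$'s... — more precisely, for every $E \in \downclos\Ss$ that is nonempty, or at least for the relevant generating elements.

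The key step is then to identify a group inside $\Ss$. Consider $e = (\unclos\Ss)^\omega$, the idempotent power of $\unclos\Ss$ in $2^S$. I claim $e = \unclos\Ss$: indeed, $\unclos\Ss \cdot \unclos\Ss = \{s s' \mid s, s' \in \unclos\Ss\}$, and tameness chained over words shows $\unclos\Ss \cdot \unclos\Ss = \unclos\Ss$, so $\unclos\Ss$ is already idempotent in $2^S$. Now I would look at $\Gs = e \cdot \Ss \cdot e$ — but since $\Ss$ may not contain $e$ as an element, the cleaner route is: let $t_0 \in \Ss$ be arbitrary; then $t_0^\omega$ is an idempotent of $\Ss$, and consider the $\Hrel$-class (or the local monoid $t_0^\omega \Ss t_0^\omega$) around a suitable idempotent. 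The right object is the minimal ideal / a maximal subgroup of $\Ss$ whose union is all of $\unclos\Ss$. Concretely, I would pick an idempotent $f \in \Ss$ lying in the minimal ideal of $\Ss$ and set $\Gs = f\Ss f$, the maximal subgroup at $f$; tameness forces $\unclos{f\Ss f} = \unclos\Ss$ because left/right multiplication by any $\beta(b)$ preserves $\unclos\Ss$ and the minimal ideal absorbs everything, so no element of $S$ covered by $\Ss$ is lost.

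The main obstacle I expect is the bookkeeping that connects the set-level identity ``$\unclos\Ss$ is idempotent and absorbed under multiplication by generators'' with the semigroup-level statement ``there is an idempotent $f\in\Ss$ with $\unclos{f\Ss f}=\unclos\Ss$''. One has to argue that tameness is strong enough to guarantee that $\Ss$ has a single minimal ideal whose union already exhausts $\unclos\Ss$, and that on that ideal the $\Hrel$-classes are groups — this needs the finiteness of $\Ss$ (so $\Ss$ is a finite semigroup and Green's theory applies) together with tameness to rule out that some element of $\unclos\Ss$ only appears in a ``transient'' part of $\Ss$. I would handle this by showing: for any $s_0 \in \Ss$, $s_0 \cdot \unclos\Ss = \unclos\Ss$ (from tameness, since $s_0 = \beta(w)$ for some $w$ and $\unclos{\beta(w)} \supseteq \{s \mid s \in s_0\cdot(\text{something})\}$... actually $s_0 \in \Ss \subseteq \downclos\Ss$ so $s_0 \subseteq \unclos\Ss$ as a set of elements of $S$, and $s_0 \cdot \unclos\Ss \subseteq \unclos\Ss$ with equality by a cardinality/tameness argument). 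Once $s_0 \unclos\Ss = \unclos\Ss = \unclos\Ss s_0$ for all $s_0 \in \Ss$, the element $\unclos\Ss$ behaves like a neutral element for the action, and picking $\Gs$ to be the maximal subgroup at $f = s_0^\omega$ for any $s_0$, together with the identity $\unclos{\Gs} = \unclos{f \Ss f} = f \cdot \unclos\Ss \cdot f = \unclos\Ss$ (using Fact~\ref{fct:uprod} and $f \subseteq \unclos\Ss$), completes the proof. So the real content is this last chain of equalities showing $f\cdot\unclos\Ss\cdot f = \unclos\Ss$, which I would verify directly from tameness applied to the idempotent $f$.
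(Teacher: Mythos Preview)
Your plan is essentially correct and reaches the goal by a somewhat different route from the paper. Both you and the paper first establish the key ``pseudo-group'' property: for every $T\in\Ss$ one has $T\cdot\unclos\Ss=\unclos\Ss=\unclos\Ss\cdot T$ (this is the paper's Lemma~\ref{lem:basecase2}, proved from tameness and surjectivity by induction on word length, exactly as you sketch). From there the arguments diverge. The paper proceeds by an elementary iterative shrinking: if a pseudo-group $\Ts\subseteq\Ss$ is not already a group, one finds $R\in\Ts$ with $R\Ts\subsetneq\Ts$, checks that $\Rs=R\Ts$ is again a pseudo-group with $\unclos\Rs=\unclos\Ts$, and repeats until a group is reached. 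Your route instead invokes Green's theory directly: pick an idempotent $f$ in the minimal ideal of the finite semigroup $\Ss$, set $\Gs=f\Ss f$, use the standard fact that this local monoid is then the maximal subgroup at $f$ (hence a group), and conclude $\unclos\Gs=f\cdot\unclos\Ss\cdot f=\unclos\Ss$ from the pseudo-group property and Fact~\ref{fct:uprod}. This is more concise once one is willing to quote the structure of the minimal ideal; the paper's descent argument buys self-containment, needing nothing beyond the observation that a finite monoid which is not a group has a non-surjective translation.

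One correction to your write-up: near the end you say ``$f=s_0^\omega$ for \emph{any} $s_0$'', but that is not sufficient. An arbitrary idempotent $s_0^\omega$ need not lie in the minimal ideal, and for such $f$ the local monoid $f\Ss f$ is in general not a group (so the identification $\unclos{\Gs}=\unclos{f\Ss f}$ for $\Gs=H_f$ would fail). You had it right earlier in the plan: $f$ must be chosen in the minimal ideal of $\Ss$. With that choice fixed, your chain $\unclos\Gs=\unclos{f\Ss f}=f\cdot\unclos\Ss\cdot f=\unclos\Ss$ goes through. The detour through the idempotency of $\unclos\Ss$ in $2^S$ is correct but unnecessary for the argument.
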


We first use Lemma~\ref{lem:basecase} to finish the proof of this
case. Let $\Gs=\{T_1,\dots,T_n\}$ be a group as given by the
lemma. We prove that $\unclos{\Gs} \in \Sat(\Ss)$. Since $\Gs$ is a
group, we get $T_i^{\omega} = 1_{\Gs}$, so $T_i=T_1^{\omega}\cdots
T_{i-1}^{\omega}T_i^{\omega+1}T_{i+1}^{\omega} \cdots T_n^{\omega}$
for all~$i$. Combining these equalities gives us the equality
\[
\unclos{\Gs} = (T_1^{\omega} \cup T_1^{\omega+1}) \cdots (T_n^{\omega} \cup T_n^{\omega+1}).
\]
By definition, for all $i$, $T_i \in \Ss$, hence $T_i\in\Sat(\Ss)$ since $\beta$ is surjective. It then follows from
Multiplication Closure, \fo-Closure and the equality above that
$\unclos{\Gs} \in \Sat(\Ss)$. Since $\unclos{\Gs} =
\unclos{\Ss}$, this terminates the proof
in Case~1.

It remains to prove Lemma~\ref{lem:basecase}. We first prove that
while \Ss might not be a group itself, it is what we call a
\emph{pseudo-group}.

\medskip
\noindent
{\bf Pseudo-groups.} Let $\Ts$ be a subsemigroup of $2^{S}$. We say
that \Ts is a \emph{pseudo-group} if for all $T \in \Ts$,
$\unclos{\Ts} = T \cdot \unclos{\Ts}$ and $\unclos{\Ts} =
\unclos{\Ts} \cdot T $.

\begin{lemma} \label{lem:basecase2}
\Ss is a pseudo-group.
\end{lemma}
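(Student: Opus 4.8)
The plan is to prove that for every $T \in \Ss$ we have $\unclos{\Ss} = T \cdot \unclos{\Ss}$ and $\unclos{\Ss} = \unclos{\Ss} \cdot T$; by symmetry I only write the argument for the left equation. The key observation is that $T$, being an element of $\Ss$, is in the image of the surjective morphism $\beta$, so $T = \beta(w)$ for some word $w \in B^+$. Writing $w = b_1 b_2 \cdots b_r$ with each $b_i \in B$, we get $T = \beta(b_1) \cdot \beta(b_2) \cdots \beta(b_r)$. Now I would invoke the tameness hypothesis of Case~1: for every $b \in B$ we have $\unclos{\Ss} = \beta(b) \cdot \unclos{\Ss}$. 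Applying this $r$ times (from the inside out, using associativity of the product on $2^S$ and the fact that $X \cdot \unclos{\Ss} \subseteq \unclos{\Ss}$ always holds since $\unclos{\Ss}$ is the union of a subsemigroup), one obtains
\[
\unclos{\Ss} = \beta(b_1)\cdot\unclos{\Ss} = \beta(b_1)\cdot\beta(b_2)\cdot\unclos{\Ss} = \cdots = \beta(b_1)\cdots\beta(b_r)\cdot\unclos{\Ss} = T \cdot \unclos{\Ss}.
\]

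The one point that deserves care is the chaining step: from $\unclos{\Ss} = \beta(b_1) \cdot \unclos{\Ss}$ and $\unclos{\Ss} = \beta(b_2) \cdot \unclos{\Ss}$ I need $\unclos{\Ss} = \beta(b_1) \cdot \beta(b_2) \cdot \unclos{\Ss}$. This follows by substituting the second equality into the first: $\unclos{\Ss} = \beta(b_1)\cdot\unclos{\Ss} = \beta(b_1)\cdot(\beta(b_2)\cdot\unclos{\Ss}) = (\beta(b_1)\cdot\beta(b_2))\cdot\unclos{\Ss}$, using associativity of subset multiplication (equivalently, Fact~\ref{fct:uprod} combined with associativity in $S$). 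Iterating gives the general case by a trivial induction on $r = |w| \geq 1$. The symmetric computation with the right-hand tameness equation $\unclos{\Ss} = \unclos{\Ss}\cdot\beta(b)$ gives $\unclos{\Ss} = \unclos{\Ss}\cdot T$, completing the proof that \Ss is a pseudo-group.

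I do not expect any real obstacle here: the lemma is essentially a reformulation of tameness, lifted from generators $\beta(b)$ to arbitrary elements $T = \beta(w)$ of $\Ss$ via surjectivity of $\beta$ and an induction on word length. The genuinely substantive work of Case~1 lies in Lemma~\ref{lem:basecase} (extracting an actual \emph{group} $\Gs \subseteq \Ss$ with $\unclos{\Gs} = \unclos{\Ss}$ out of the pseudo-group structure), not in this preparatory step.
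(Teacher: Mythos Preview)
Your proof is correct and follows essentially the same approach as the paper: write $T=\beta(w)$ using surjectivity of $\beta$, then induct on the length of $w$, invoking tameness at each letter and associativity to chain the equalities. The only cosmetic difference is that the paper writes out the right-hand equality $\unclos{\Ss}=\unclos{\Ss}\cdot T$ while you write out the left-hand one, each deferring the other to symmetry.
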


\begin{proof}
Set $T \in \Ss$. We prove that $\unclos{\Ss} = \unclos{\Ss} \cdot T$. The
equality  $\unclos{\Ss} = T \cdot
\unclos{\Ss}$  is
symmetrical. Since $\beta$ is surjective, there exists $w \in B^+$ such
that $T = \beta(w)$. We proceed by induction on the length of $w$. If
$w$ is of length $1$, this is by tameness of $\beta$.

Assume now that the result holds for words of length $k$ and that $w$
is of length $k+1$. This means that $w=ub$ with $u$ a word of length
$k$. By induction hypothesis, we get that $\unclos{\Ss} = \unclos{\Ss}
\cdot \beta(u)$. Moreover, using tameness, we get that $\unclos{\Ss} =
\unclos{\Ss} \cdot \beta(b)$. It follows that $\unclos{\Ss} =
\unclos{\Ss} \cdot \beta(u) \cdot \beta(b) = \unclos{\Ss} \cdot
\beta(w)$, which concludes the proof.
\end{proof}

We now finish the proof of Lemma~\ref{lem:basecase}. We prove that
any pseudo-group $\Ts \subseteq \Ss$ that is not already a group
strictly contains a subsemigroup $\Rs$ that remains a pseudo-group, and
such that $\unclos{\Rs} = \unclos{\Ts}$.  Applying this result iteratively
to \Ss yields the desired group \Gs.

\smallskip
Let $\Ts \subseteq \Ss$ be a pseudo-group that is not already a
group. An easy and standard argument implies that there
must exist $R \in \Ts$ such that $R \cdot \Ts \subsetneq \Ts$ or $\Ts
\cdot R \subsetneq \Ts$. By symmetry assume that it is the former and
set $\Rs = R \cdot \Ts$. By definition, $\Rs$ is closed under product
and is therefore a semigroup. It remains to prove that $\unclos{\Rs} =
\unclos{\Ts}$ and that $\Rs$ is a
pseudo-group.

By definition, we have $\unclos{\Rs} = R \cdot
\unclos{\Ts}$ and $R \cdot \unclos{\Ts} = \unclos{\Ts}$ since $\Ts$ is
a pseudo-group. We conclude that $\unclos{\Rs} =
\unclos{\Ts}$. Finally, set $RT \in \Rs$, we need to prove that
$\unclos{\Rs} = RT \cdot \unclos{\Rs}$ and $\unclos{\Rs} =
\unclos{\Rs} \cdot RT$. Both equalities are immediate since
$\unclos{\Rs} = \unclos{\Ts}$ and $\Ts$ is a pseudo-group.

\subsection{\texorpdfstring{Case 2: $\beta$ is not tame.}{Case 2: beta
    is not tame.}} This is the case where we use induction. By
hypothesis on $\beta$, there exists $b \in B$ such that $\unclos{\Ss}
\neq \beta(b) \cdot \unclos{\Ss}$ or $\unclos{\Ss} \neq \unclos{\Ss}
\cdot \beta(b)$. By symmetry, we assume the former, \emph{i.e.},
\[
\unclos{\Ss} \neq \beta(b) \cdot \unclos{\Ss}.
\]
We set $b$ as this
letter for the remainder of the proof.

\medskip
Recall that we have to construct an \fo-partition $\Kb$ of $B^+$
satisfying Items~\ref{it:c1} and~\ref{it:c2} in
Proposition~\ref{prop:pumping}. We begin by giving a brief overview of
the construction. Set $C = B \setminus \{b\}$. Observe that any word
$w \in B^+$ can be uniquely decomposed in three (possibly empty)
parts: a prefix in $C^+$, an infix in $(b^+C^+)^+$ and a suffix in
$b^+$. Our construction works by using induction to construct finite
partitions of the sets of possible prefixes, infixes and suffixes,
which yields a partition of the whole set $B^+$. For the sets of
prefixes and suffixes (\emph{i.e.}, $C^+$ and $b^+$), the partitions
are simply obtained by induction on the size of alphabet. For the set
of infixes (\emph{i.e.} $(b^+C^+)^+$) the argument is more involved
and is obtained by induction on the index of \Ss. We now make the
construction more precise. We begin by defining the partitions of the
sets of possible prefixes, infixes and suffixes, in the three lemmas
below.

\begin{lemma}[Partition of the prefixes] \label{lem:partpref}
There exists a finite partition $\Lb$ of $C^+$ such that for any
language $L \in \Lb$:
\begin{enumerate}
\item $\unclos{\beta(L)} \in \Sat(\Ss)$.
\item there exists a first-order formula of
  rank at most $(|B|-1) \cdot 2^{|\unclos{\Ss}|^2}$ that defines $L$.
\end{enumerate}
\end{lemma}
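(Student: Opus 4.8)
The plan is to obtain the partition $\Lb$ of $C^+$ directly by applying the induction hypothesis of Proposition~\ref{prop:pumping} to a suitably restricted morphism. The key observation is that $C = B \setminus \{b\}$ is a strictly smaller alphabet, so that the restriction $\beta|_{C^+} : C^+ \to \Ss$ is a morphism from a free semigroup on a smaller alphabet into a subsemigroup of $2^S$. The subtlety is that this restricted morphism need not be surjective onto $\Ss$, whereas Proposition~\ref{prop:pumping} is stated for surjective morphisms. So first I would set $\Ss' = \beta(C^+)$, which is a subsemigroup of $\Ss$ (hence of $2^S$), and regard $\beta$ as a surjective morphism $\beta' : C^+ \to \Ss'$. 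Then $\unclos{\Ss'} \subseteq \unclos{\Ss}$, so the index of $\Ss'$ is at most that of $\Ss$, and the alphabet $C$ has size $|B|-1 < |B|$; hence the pair $(|\unclos{\Ss'}|, |C|)$ is strictly smaller than $(|\unclos{\Ss}|, |B|)$ in the lexicographic order used for the induction (the alphabet decreased, and the index did not increase).

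Applying the induction hypothesis to $\beta' : C^+ \to \Ss'$ yields an \fo-partition $\Lb$ of $C^+$ such that for every $L \in \Lb$: (i) $\unclos{\beta'(L)} \in \Sat(\Ss')$, and (ii) $L$ is definable by a first-order formula of rank at most $|C| \cdot 2^{|\unclos{\Ss'}|^2} \le (|B|-1) \cdot 2^{|\unclos{\Ss}|^2}$. Item~(ii) is immediately the second conclusion of the lemma. For Item~(i), I still need $\unclos{\beta(L)} \in \Sat(\Ss)$ rather than merely membership in $\Sat(\Ss')$; this follows because $\Ss' \subseteq \Ss$ implies $\Sat(\Ss') \subseteq \Sat(\Ss)$ — the saturation operator is monotone, since $\Sat(\Ss')$ is by definition the least set containing $\Ss'$ and closed under Downset, Multiplication and \fo-Closure, and $\Sat(\Ss)$ is one such set containing $\Ss' \subseteq \Ss$. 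Also $\beta'(L) = \beta(L)$ since $\beta'$ is just the restriction, so $\unclos{\beta(L)} = \unclos{\beta'(L)} \in \Sat(\Ss') \subseteq \Sat(\Ss)$, giving the first conclusion.

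One small point to check carefully is that $\Lb$, which is a partition of $C^+$, genuinely consists of languages in \fo as subsets of $C^+$ — but \fo-definability and quantifier rank are insensitive to the ambient alphabet in the sense we need (a rank-$k$ formula over $C$ is also a rank-$k$ formula over $B$ defining the same set of words, all of which happen to lie in $C^+$), so this is harmless. The main obstacle is therefore not any deep argument but the bookkeeping around non-surjectivity: one must be disciplined about replacing $\Ss$ by $\Ss' = \beta(C^+)$, verifying that this keeps the induction parameters from increasing, and then lifting the conclusion back from $\Sat(\Ss')$ to $\Sat(\Ss)$ via monotonicity of saturation. Everything else is a direct invocation of the induction hypothesis.
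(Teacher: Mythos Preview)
Your proposal is correct and takes essentially the same approach as the paper, which simply says the lemma ``is obtained by applying the induction hypothesis on the second parameter (the size of the alphabet) to the restriction of $\beta$ to $C^+$.'' You are in fact more careful than the paper: you explicitly handle the surjectivity requirement of Proposition~\ref{prop:pumping} by passing to $\Ss' = \beta(C^+)$ and then lifting back via the monotonicity $\Sat(\Ss') \subseteq \Sat(\Ss)$, a point the paper leaves implicit.
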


\begin{lemma}[Partition of the suffixes] \label{lem:partsuff}
There exists a finite partition $\Hb$ of $b^+$ such that for any
language $H \in \Hb$:
\begin{enumerate}
\item\label{it:c1induction2} $\unclos{\beta(H)} \in \Sat(\Ss)$.
\item\label{it:c2induction2} there exists a first-order formula of
  rank at most $2^{|\unclos{\Ss}|^2}$ that defines $H$.
\end{enumerate}
\end{lemma}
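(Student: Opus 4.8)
The plan is to reduce directly to the already-completed special case $|B|=1$. The set of possible suffixes is $b^+$, which is exactly the free semigroup on the one-letter alphabet $\{b\}$. So I would consider the sub-semigroup $\Ss_b = \beta(b^+)$ of $2^S$ generated by $\beta(b)$, together with the restriction $\beta_b : \{b\}^+ \to \Ss_b$ of $\beta$, which is by construction surjective. The first point to check is that $\Ss_b$ is indeed a sub-semigroup of $2^S$ — this is immediate, being the image of a semigroup under a morphism. Then I would apply the already-proved special case of Proposition~\ref{prop:pumping} (the case $|B|=1$) to $\beta_b : \{b\}^+ \to \Ss_b$: this yields a finite \fo-partition $\Hb$ of $b^+$ such that for every $H \in \Hb$ we have $\unclos{\beta_b(H)} = \unclos{\beta(H)} \in \Sat(\Ss_b)$, and each $H$ is defined by a first-order formula of rank at most $2^{|\unclos{\Ss_b}|}$.

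Two small adjustments remain. First, I need $\unclos{\beta(H)} \in \Sat(\Ss)$ rather than merely $\in \Sat(\Ss_b)$; this follows from the monotonicity of the saturation operation: since $\Ss_b \subseteq \Ss$, every element of $\Sat(\Ss_b)$ is an element of $\Sat(\Ss)$ (the operations Downset, Multiplication and \fo-Closure are the same, and $\Sat(\Ss)$ already contains $\Ss_b$ and is closed under these operations, so the least such set $\Sat(\Ss_b)$ is contained in it). This gives Item~\ref{it:c1induction2}. Second, for the rank bound I need $2^{|\unclos{\Ss_b}|} \leq 2^{|\unclos{\Ss}|^2}$; since $\unclos{\Ss_b} \subseteq \unclos{\Ss} \subseteq S$, we have $|\unclos{\Ss_b}| \leq |\unclos{\Ss}| \leq |\unclos{\Ss}|^2$, so the bound holds and Item~\ref{it:c2induction2} follows.

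There is really no main obstacle here: the lemma is a corollary of the base case $|B|=1$ that was dispatched earlier, applied to the one-letter alphabet $\{b\}$, together with the trivial observation that $\Sat$ is monotone with respect to inclusion of sub-semigroups. The only thing to be slightly careful about is bookkeeping: making sure the quantifier-rank bound coming out of the $|B|=1$ case, which is exponential in the index of the ambient sub-semigroup, is correctly compared against the bound $2^{|\unclos{\Ss}|^2}$ claimed in the statement. (The analogous Lemma~\ref{lem:partpref} for the prefixes is handled the same way but by induction on the size of the alphabet rather than by the $|B|=1$ base case, since $C^+$ is a free semigroup over the smaller alphabet $C = B \setminus \{b\}$ with the restricted morphism into $\beta(C^+)$, whose index is at most that of $\Ss$.)
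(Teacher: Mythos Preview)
Your proposal is correct and follows exactly the paper's approach: the paper dispatches Lemma~\ref{lem:partsuff} in a single sentence, saying it ``is exactly the special case when the alphabet is of size one which is already proved.'' You are simply being more explicit than the paper about the bookkeeping (passing to the subsemigroup $\Ss_b=\beta(b^+)$ so that the restricted morphism is surjective, and then invoking the monotonicity $\Sat(\Ss_b)\subseteq\Sat(\Ss)$), which is the right way to fill in the details.
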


\begin{lemma}[Partition of the infixes] \label{lem:partinf}
There exists a finite partition $\Kb'$ of $(b^+C^+)^+$ such that
for any language $K \in \Kb'$:
\begin{enumerate}
\item $\unclos{\beta(K)} \in \Sat(\Ss)$.
\item there exists a first-order formula of
  rank at most $|B| \cdot 2^{|\unclos{\Ss}|^2} - 2$ that defines $K$.
\end{enumerate}
\end{lemma}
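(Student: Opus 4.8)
The plan is to reduce the problem about $(b^+C^+)^+$ to a fresh instance of Proposition~\ref{prop:pumping} over a \emph{smaller} semigroup, so that the induction hypothesis on the index of $\Ss$ applies. Concretely, let $D = b^+C^+$, viewed abstractly as a (new, typically infinite) alphabet: every word of $(b^+C^+)^+$ factors uniquely as a product of blocks from $D$. To make $D$ a usable finite alphabet, I would first partition $D = b^+C^+$ itself into finitely many \fo-definable classes, each mapped by $\beta$ into a single value of a suitable quotient structure. The key observation driving this is that, because we chose $b$ with $\unclos{\Ss} \neq \beta(b)\cdot\unclos{\Ss}$, the set $\Ts = \beta(b)\cdot\unclos{\Ss}\cdot\beta(b)$ — or more precisely the subsemigroup of $2^S$ generated by elements of the form $\beta(b\, c\, b)$ and $\beta(b\,u\,b)$ for $u\in C^*$ — has strictly smaller index than $\Ss$: every such element is contained in $\beta(b)\cdot\unclos{\Ss}$, and $\unclos{\beta(b)\cdot\unclos{\Ss}} = \beta(b)\cdot\unclos{\Ss} \subsetneq \unclos{\Ss}$. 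So the "block semigroup'' one obtains for $D$-blocks has index $< |\unclos{\Ss}|$, which is the drop in the first induction parameter that we need.

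The key steps, in order, would be: (1) Apply the induction hypothesis (on the size of the alphabet, since $|C| = |B|-1$) to obtain from Lemma~\ref{lem:partpref}-style reasoning a finite \fo-partition of $C^+$, and combine it with singletons/tails for $b^+$, to get a finite \fo-partition $\Dc$ of $b^+C^+$ such that each class $D_j \in \Dc$ has $\unclos{\beta(D_j)}$ equal to a fixed element $\tau_j$ of a subsemigroup $\Ts'$ of $2^S$ whose index is strictly less than that of $\Ss$; crucially one should arrange (using that any block starts and ends appropriately with $b$) that $\unclos{\Ts'} \subseteq \beta(b)\cdot\unclos{\Ss}$, hence $|\unclos{\Ts'}| < |\unclos{\Ss}|$. (2) Form the finite alphabet $B' = \Dc$ and the morphism $\gamma : (B')^+ \to \Ts'$ sending each letter $D_j$ to $\tau_j$ (or rather to $\beta(D_j)\in 2^S$, whose union is $\tau_j$); restrict $\Ts'$ to the image if needed so that $\gamma$ is surjective. (3) Apply Proposition~\ref{prop:pumping} to $\gamma : (B')^+ \to \Ts'$, which is legitimate because $|\unclos{\Ts'}| < |\unclos{\Ss}|$ — this is the core use of the induction on index — obtaining an \fo-partition $\Kb''$ of $(B')^+$ with $\unclos{\gamma(K)} \in \Sat(\Ts') \subseteq \Sat(\Ss)$ and rank bound $|B'|\cdot 2^{|\unclos{\Ts'}|^2}$ for each $K$. (4) Pull $\Kb''$ back along the block decomposition $(b^+C^+)^+ \to (B')^+$ to get the desired partition $\Kb'$ of $(b^+C^+)^+$; the \fo-definability of "decompose into blocks, relabel by $\Dc$, then test membership in $K$'' is the point where one has to write an actual first-order formula, composing the formulas for the $D_j$, the block-boundary predicate (``position $x$ begins a new $b$-block''), and the formula for $K\in\Kb''$, and then carefully add up the quantifier ranks to land at most $|B|\cdot 2^{|\unclos{\Ss}|^2} - 2$. (5) Verify the imprint condition: $\unclos{\beta(K')}$ for $K'\in\Kb'$ equals $\unclos{\gamma(K)}$ for the corresponding $K\in\Kb''$ by Fact~\ref{fct:uprod} applied blockwise, so it lies in $\Sat(\Ss)$.

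The main obstacle, I expect, is twofold and both parts are bookkeeping-heavy rather than conceptually deep. First, one must set up the abstraction of $b^+C^+$-blocks as a genuine alphabet and morphism so that the strict decrease in index is actually achieved and the induction hypothesis of Proposition~\ref{prop:pumping} legitimately applies — this requires being careful that $\gamma$ is surjective onto a subsemigroup of $2^S$ and that its index is computed against the right union. Second, the quantifier-rank arithmetic is delicate: the rank for $K'$ must be assembled from the rank of the inner partition of $C^+$ (at most $(|B|-1)\cdot 2^{|\unclos{\Ss}|^2}$ from Lemma~\ref{lem:partpref}), the rank of the suffix/boundary formulas, and the rank $|B'|\cdot 2^{|\unclos{\Ts'}|^2}$ of $\Kb''$, and since $|\unclos{\Ts'}|^2 \le (|\unclos{\Ss}|-1)^2 \le |\unclos{\Ss}|^2 - 1$, the exponential $2^{|\unclos{\Ts'}|^2}$ is at most half of $2^{|\unclos{\Ss}|^2}$, which is exactly the slack that absorbs the additive overhead and still leaves the target bound $|B|\cdot 2^{|\unclos{\Ss}|^2} - 2$. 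Getting all these constants to line up correctly — and in particular using that the block relabelling adds only a bounded number of quantifiers independent of $|B|$ — is the fiddly heart of the argument; everything else is routine given Proposition~\ref{prop:pumping} as an induction hypothesis, Fact~\ref{fct:uprod}, and Lemma~\ref{lem:efconcat}.
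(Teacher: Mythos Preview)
Your plan is essentially the paper's: abstract $b^+C^+$-blocks as letters of a new alphabet, use the choice of $b$ to show the resulting block semigroup has strictly smaller index, apply the induction hypothesis of Proposition~\ref{prop:pumping} on that index, and pull back. The one genuine subtlety you gloss over is the \emph{size} of the new alphabet. You set $B' = \Dc$, the partition classes of $b^+C^+$; but $|\Dc|$ (roughly $|\Hb|\cdot|\Lb|$) is not controlled anywhere in the argument, so the inductive rank bound $|B'|\cdot 2^{|\unclos{\Ts'}|^2}$ need not fit under the target $|B|\cdot 2^{|\unclos{\Ss}|^2}-2$. The paper's fix is to take as alphabet the set of \emph{images} $\frB = \{\unclos{\beta(HL)} : H\in\Hb,\ L\in\Lb\} \subseteq 2^S$, identifying classes with the same image; then automatically $|\frB| \le 2^{|\unclos{\Ts}|}$, whence $|\frB|\cdot 2^{|\unclos{\Ts}|^2} \le 2^{|\unclos{\Ts}|(|\unclos{\Ts}|+1)} \le 2^{(|\unclos{\Ss}|-1)|\unclos{\Ss}|} = 2^{|\unclos{\Ss}|^2 - |\unclos{\Ss}|}$, which is exactly the slack you need. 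Two smaller corrections: in step~(5) you only get $\unclos{\beta(K')} \subseteq \unclos{\gamma(K)}$, not equality (distinct words in $b^+C^+$ may share a $\frB$-letter while having different $\beta$-values), but closure of $\Sat(\Ss)$ under downset makes this sufficient; and the block relabelling does \emph{not} add a bounded number of quantifiers independent of $|B|$---it costs the rank of the $\Hb$- and $\Lb$-formulas, i.e.\ up to $(|B|-1)\cdot 2^{|\unclos{\Ss}|^2}$, which is the additive term you must carry through the final estimate.
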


Lemmas~\ref{lem:partpref}, \ref{lem:partsuff} and~\ref{lem:partinf}
are proved using both our induction hypotheses. Before we present
these proofs, we use the three lemmas to construct the desired
\fo-partition $\Kb$ of $B^+$ and conclude the proof of
Proposition~\ref{prop:pumping}. We define $\Kb$ as follows:
\[
\Kb = \begin{array}{ll}
     & \{LK'H \mid L \in \Lb, K' \in \Kb' \text{ and } H \in \Hb\}\\
\cup & \{K'H \mid K' \in \Kb' \text{ and } H \in \Hb\}\\
\cup & \{LH \mid L \in \Lb \text{ and } H \in \Hb\}\\
\cup & \{LK' \mid L \in \Lb \text{ and } K' \in \Kb'\}\\
\cup & \Lb \cup \Kb' \cup \Hb
\end{array}
\]
That \Kb is partition of $B^+$ is immediate since $\Lb,\Kb'$ and $\Hb$
are partitions and any word in $B^+$ can be \emph{uniquely} decomposed
as the concatenation of a prefix in $b^+$, an infix in $(b^+C^+)^+$
and a suffix in $C^+$ (each one possibly empty, but not the three of them together). That \Kb is actually
an \fo-partition is a consequence of the following fact which describes
a standard construction for first-order logic over words.

\begin{fact} \label{fct:foconcat}
  Set $k \geq 0$ and two languages $L_1,L_2$, each defined by a
  first-order formula of rank at most $k$. Then $L_1L_2$ can be defined
  by a first-order formula of rank at most $k+1$.
\end{fact}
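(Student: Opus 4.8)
The plan is to establish Fact~\ref{fct:foconcat} by a routine \efgame argument, structured as a standard composition lemma for concatenation. Suppose $L_1$ is defined by a formula $\varphi_1$ of quantifier rank at most $k$ and $L_2$ is defined by $\varphi_2$ of quantifier rank at most $k$. I want to exhibit a formula of rank at most $k+1$ defining $L_1 L_2$. The natural candidate says: ``there exists a position $x$ such that the prefix strictly before $x$ (together with $x$ itself, or not, depending on a convention) satisfies $\varphi_1$ and the remaining suffix satisfies $\varphi_2$.'' To make this precise without leaving first-order logic, one uses \emph{relativization}: for each formula $\psi$ of quantifier rank $r$, one can write a formula $\psi^{<x}$ (resp.\ $\psi^{\geq x}$) of the same quantifier rank $r$, with one extra free variable $x$, that holds in a word $w$ exactly when the prefix of $w$ consisting of positions $<x$ (resp.\ the suffix of positions $\geq x$) satisfies $\psi$. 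Relativization does not increase quantifier rank because each quantifier $\exists y$ in $\psi$ is simply replaced by $\exists y\, (y<x \wedge \cdots)$ (resp.\ $\exists y\,(y \geq x \wedge \cdots)$), and the order and label predicates restrict correctly.

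First I would set up this relativization construction carefully, by induction on the structure of $\psi$: atomic formulas $a(y)$ and $y<z$ are unchanged, Boolean connectives commute with relativization, and a quantifier $\exists y\,\chi$ becomes $\exists y\,(y<x \wedge \chi^{<x})$. One must also handle the degenerate case of an empty prefix or empty suffix: since our languages are subsets of $A^+$ and $\varphi_1,\varphi_2$ are only required to define $L_1,L_2 \subseteq A^+$, one needs to decide a splitting convention. A clean choice is to guess the \emph{first} position $x$ of the factor belonging to $L_2$, so that the $L_1$-part is the (necessarily nonempty, if we want a genuine concatenation of two nonempty words — but note $L_1L_2$ as a set of words includes all factorizations) segment of positions $<x$ and the $L_2$-part is the segment of positions $\geq x$. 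Then the formula
\[
\exists x\; \bigl(\varphi_1^{<x} \wedge \varphi_2^{\geq x}\bigr)
\]
has quantifier rank at most $\max(k,k)+1 = k+1$ and, by the relativization correctness lemma, defines exactly $L_1 L_2$. I would state and prove the relativization correctness lemma (that $w \models \psi^{<x}[p]$ iff the prefix of length $p-1$, or $p$ depending on convention, satisfies $\psi$) as the technical core.

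The main obstacle, though entirely routine, is bookkeeping the off-by-one conventions: whether position $x$ itself belongs to the prefix or the suffix, and ensuring that both the prefix segment and the suffix segment are treated as honest words over $A$ (so that, e.g., a universally quantified subformula of $\varphi_1$ does not accidentally range over positions of the $L_2$-part). Relativizing to $\{y : y < x\}$ versus $\{y : y \leq x\}$ must be chosen consistently and matched against whether the factorization $w = w_1 w_2$ assigns the $x$-th letter to $w_1$ or to $w_2$. Once a convention is pinned down, the induction goes through mechanically. An alternative, equivalent route would be an \efgame argument: given that Duplicator wins the $k$-round game on $(u_1,v_1)$ and on $(u_2,v_2)$, Duplicator wins the $k$-round game on $(u_1u_2, v_1v_2)$ by playing componentwise (this is exactly Lemma~\ref{lem:efconcat}\ref{item:5}); wrapping one existential quantifier to guess the split point accounts for the extra round, yielding rank $k+1$. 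Either presentation suffices; I would favor the explicit syntactic relativization since it also immediately yields the promised star-free / \ltl / counter-free variants mentioned in the text, and since the paper already invokes this ``standard construction for first-order logic over words'' it need only be sketched.
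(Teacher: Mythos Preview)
Your proposal is correct and follows essentially the same approach as the paper: existentially quantify the cut position $x$ and relativize the rank-$k$ formulas for $L_1$ and $L_2$ to the positions left and right of $x$, yielding a formula of rank at most $k+1$. The paper's proof is just a terse sketch of exactly this construction, without the detailed bookkeeping you supply.
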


\begin{proof}
  A word is in $L_1L_2$ if and only it can be cut into a prefix in $L_1$
  and a suffix in $L_2$. Therefore, in first-order logic, it suffices to
  quantify existentially the position $x$ at which the cut is made and then
  use the formulas that define $L_1$ and $L_2$ (modified so that
  quantifications are restricted to the left or to the right of $x$) to
  test whether the prefix and suffix belong to $L_1$ and $L_2$. By
  construction, the formula we obtain has rank at most $k+1$.
\end{proof}

Since any language in \Kb is the concatenation of at most three
languages that are defined by first-order formulas of rank at most
$|B| \cdot 2^{|\unclos{\Ss}|^2} - 2$ (see the first items in the three
lemmas), we obtain that \Kb is an \fo-partition as well as
Item~\ref{it:c2} in Proposition~\ref{prop:pumping}.

Finally, Item~\ref{it:c1} in Proposition~\ref{prop:pumping} (\emph{i.e.}, for
all $K \in \Kb$, $\unclos{\beta(K)} \in \Sat(\Ss)$) is an immediate
consequence of Item~\ref{it:c1induction2} in the three lemmas and the
fact that $\Sat(\Ss)$ is closed under multiplication. This terminates
the proof of Proposition~\ref{prop:pumping}.

It now remains to prove Lemmas~\ref{lem:partpref}, \ref{lem:partsuff}
and~\ref{lem:partinf}. We first take care of Lemma~\ref{lem:partpref}
and Lemma~\ref{lem:partsuff} which are immediate by induction. Indeed,
Lemma~\ref{lem:partpref} is obtained by applying the induction hypothesis
on the second parameter (the size of the alphabet) to the restriction
of $\beta$ to $C^+$ (recall that $C = B \setminus\{b\}$). Furthermore,
Lemma~\ref{lem:partsuff} is exactly the special case when the alphabet
is of size one which is already proved.
\medskip

The proof of Lemma~\ref{lem:partinf} is more involved and is where we
use induction on the index of \Ss as well as our choice of the letter
$b$ (\emph{i.e.}, the fact that $\beta$ is not tame). We devote the remainder
of the section to this proof.

\newcommand\frB{\ensuremath{\mathfrak{B}}\xspace}
\newcommand\frL{\ensuremath{\mathfrak{L}}\xspace}
\newcommand\frS{\ensuremath{\mathfrak{S}}\xspace}
\newcommand\frb{\ensuremath{\mathfrak{b}}\xspace}
\newcommand\frw{\ensuremath{\mathfrak{u}}\xspace}
\newcommand\pred{\ensuremath{\text{pred}}\xspace}

Recall that our goal is to find a partition of $(b^+C^+)^+$ that
meets the conditions of the lemma.  We proceed in three steps.
First, we use Lemma~\ref{lem:partpref} and Lemma~\ref{lem:partsuff}
(\emph{i.e.}, our \fo-partitions of $b^+$ and $C^+$) to abstract the set
$b^+C^+$ as a finite alphabet $\frB$ and in turn the set $(b^+C^+)^+$
as the set of all words in $\frB^+$. This allows us to abstract the
restriction of $\beta$ to $(b^+C^+)^+$ as a semigroup morphism
$\gamma: \frB^+ \to \Ts$ into a new semigroup $\Ts \subseteq \Ss$.
Then in a second step, we use the fact that $\beta$ is not tame
(through our choice of $b$) to prove that $\Ts$ has smaller index than
\Ss. This enables us to apply induction to $\gamma$ and obtain an
\fo-partition of $\frB^+$. Finally, in the third step, we construct
the desired partition of $(b^+C^+)^+$ from that of~$\frB^+$.

\medskip
\noindent
{\bf Proof of Lemma~\ref{lem:partinf}, Step 1: Abstraction of
  $(b^+C^+)^+$.} We begin with the definition of the new alphabet
\frB. Intuitively, we want to simply set $\frB = \Hb \times \Lb$.
Indeed, we know by construction of \Hb and \Lb that $\{HL \mid H \in
\Hb \text{ and } L \in \Lb\}$ is a partition of $b^+C^+$. Therefore,
such an alphabet \frB would be a satisfying abstraction of $b^+C^+$.
However, there is an issue with this definition: in the proof, we do
not keep track of the size of the partitions \Hb and \Lb. Therefore,
such a definition does not allow us to control the size of \frB. This
is a problem for proving Item~\ref{it:c2induction2} in
Lemma~\ref{lem:partinf} as the bound on the quantifier rank of the
formulas obtained by induction depends on the size of the alphabet.
For this reason we use the following slightly different definition:
\[
\frB = \bigl\{\unclos{\beta(HL)} \mid H \in \Hb
\text{ and } L \in \Lb\bigr\} \subseteq 2^S.
\]
Observe that to any word $w \in b^+C^+$, one can associate a unique
letter $(w)_\frB \in \frB$: since $\Hb$ and $\Lb$ are partitions,
there exist unique $H \in \Hb$ and $L \in \Lb$ such that $w \in HL$,
we simply set $(w)_\frB = \unclos{\beta(HL)}$. This means that \frB
defines a finite partition of $b^+C^+$ (it is even an \fo-partition
by Fact~\ref{fct:foconcat}): two words are in the same class of the partition if they yield the same letter over $\frB$. We extend the definition to words $w \in
(b^+C^+)^+$: any such $w$ can be uniquely decomposed as $w = w_1
\cdots w_n$ with $w_1,\dots,w_n \in b^+C^+$, we set $(w)_\frB =
(w_1)_\frB \cdots (w_n)_\frB \in \frB^+$. In particular $\frB^+$
defines an infinite partition of $(b^+C^+)^+$.

We finish with the definition of the morphism $\gamma$. We set \Ts as
the subsemigroup of $2^S$ generated by $\frB$. Finally, set $\gamma:
\frB^+ \rightarrow \Ts$ defined by simply evaluating in \Ts the
product of the letters of a word in $\frB^+$. The following fact is
immediate from the definitions. It links $\gamma$ to $\beta$.

\begin{fact} \label{fct:linkmorph}
For any $w \in (b^+C^+)^+$, $\beta(w) \subseteq \gamma((w)_\frB)$.
\end{fact}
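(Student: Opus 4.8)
The plan is to prove Fact~\ref{fct:linkmorph} by a straightforward induction on the number of blocks in the canonical decomposition of $w$, reducing everything to the single-block case and then propagating it through the morphism $\gamma$ using monotonicity of the product on $2^S$.

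First I would treat the base case $w \in b^+C^+$. By construction of $\Hb$ and $\Lb$ there are unique $H \in \Hb$ and $L \in \Lb$ with $w \in HL$, so $w = uv$ with $u \in H$ and $v \in L$, and $(w)_\frB$ is by definition the single letter $\unclos{\beta(HL)} \in \frB$. Since $\gamma$ evaluates a word of $\frB^+$ as the product in $\Ts \subseteq 2^S$ of its letters, on a single letter it is the identity, so $\gamma((w)_\frB) = \unclos{\beta(HL)} = \bigcup_{w' \in HL} \beta(w')$. As $w \in HL$, this union contains $\beta(w)$, which gives $\beta(w) \subseteq \gamma((w)_\frB)$.

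Then, for a general $w \in (b^+C^+)^+$, I would write $w = w_1 \cdots w_n$ with each $w_i \in b^+C^+$ (the unique decomposition), so that $(w)_\frB = (w_1)_\frB \cdots (w_n)_\frB$. Since $\beta$ and $\gamma$ are morphisms, $\beta(w) = \beta(w_1)\cdots\beta(w_n)$ and $\gamma((w)_\frB) = \gamma((w_1)_\frB)\cdots\gamma((w_n)_\frB)$. The base case yields $\beta(w_i) \subseteq \gamma((w_i)_\frB)$ for every $i$, and because the product on $2^S$ is monotone with respect to inclusion in each coordinate, multiplying these inclusions termwise gives $\beta(w) \subseteq \gamma((w)_\frB)$, as desired.

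There is no genuine obstacle here; the only point that warrants care — and the reason the fact is stated explicitly rather than left to the reader — is keeping straight the three nested layers of ``sets of subsets'' in play: $\beta$ takes values in $2^S$, the new alphabet $\frB$ is itself a subset of $2^S$, and $\gamma$ again takes values in $2^S$ (inside $\Ts$). Once one observes that the value of $\gamma$ on a single $\frB$-letter is literally the element of $2^S$ that this letter names, the statement is just an unfolding of definitions together with monotonicity of multiplication in $2^S$.
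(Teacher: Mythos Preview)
Your proof is correct and is precisely the unfolding of definitions the paper has in mind: the paper does not give an explicit argument, declaring the fact ``immediate from the definitions,'' and your base-case computation $\gamma((w)_\frB)=\unclos{\beta(HL)}\supseteq\beta(w)$ followed by the monotonicity of the product in $2^S$ is exactly that immediate argument spelled out. Your closing remark about the three nested layers of subsets is apt and is the only reason the fact is worth stating.
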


\medskip
\noindent
{\bf Proof of Lemma~\ref{lem:partinf}, Step 2: Constructing a
  partition of $\frB^+$.} We use induction to partition $\frB^+$. That
we may apply induction to $\gamma$  is a consequence of the following
fact, which is where we use our choice of $b$ (\emph{i.e.}, the fact that
$\beta$ is not tame).

\begin{fact} \label{fct:indexdown}
The index of $\Ts$ is strictly smaller than the index of \Ss.
\end{fact}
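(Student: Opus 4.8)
The claim is that $|\unclos{\Ts}| < |\unclos{\Ss}|$. The plan is to first understand $\unclos{\Ts}$ concretely. By definition, $\Ts$ is the subsemigroup of $2^S$ generated by $\frB = \{\unclos{\beta(HL)} \mid H \in \Hb, L \in \Lb\}$, and by Fact~\ref{fct:uprod} the union operation $\unclos{\cdot}$ is multiplicative, so $\unclos{\Ts}$ is exactly the subsemigroup of $S$ generated by $\bigcup_{H,L} \unclos{\beta(HL)}$. Now $\bigcup_{H \in \Hb} H = b^+$ and $\bigcup_{L \in \Lb} L = C^+$, so $\bigcup_{H,L}\unclos{\beta(HL)} = \beta(b^+C^+) = \beta(b^+)\cdot\beta(C^+)$. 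Hence $\unclos{\Ts}$ is the subsemigroup of $S$ generated by $\beta(b^+)\cdot\beta(C^+)$, i.e.\ the subsemigroup generated by the set $\beta(b)^+\cdot\beta(C^+)$ (using that $\beta(b^+)$ is already the subsemigroup generated by $\beta(b)$).

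First I would prove the inclusion $\unclos{\Ts} \subseteq \unclos{\Ss}$. This should be immediate: every element of $\frB$ is of the form $\unclos{\beta(HL)}$ with $HL \subseteq B^+$, so $\unclos{\beta(HL)} \subseteq \beta(B^+) = \unclos{\Ss}$ (since $\beta$ is surjective onto $\Ss$, hence $\beta(B^+)$ has union $\unclos{\Ss}$); as $\unclos{\Ss}$ is a subsemigroup of $S$ closed under products, the whole generated subsemigroup $\unclos{\Ts}$ stays inside $\unclos{\Ss}$. The real work is to exhibit an element of $\unclos{\Ss}$ that is \emph{not} in $\unclos{\Ts}$, which gives the strict inequality.

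For the strict part I would use the hypothesis that $\beta$ is not tame, in the chosen form $\unclos{\Ss} \neq \beta(b)\cdot\unclos{\Ss}$. Since $\beta(b)\cdot\unclos{\Ss} \subseteq \unclos{\Ss}$ always holds (again $\beta(b) \subseteq \unclos{\Ss}$ and $\unclos{\Ss}$ is closed under product), this means the inclusion $\beta(b)\cdot\unclos{\Ss}\subsetneq\unclos{\Ss}$ is strict. So pick $s \in \unclos{\Ss}\setminus\beta(b)\cdot\unclos{\Ss}$. The plan is then to show $s \notin \unclos{\Ts}$. The point is that every element of $\unclos{\Ts}$ is a product of elements of $\bigcup\frB = \beta(b)\cdot\beta(C^+)\cdot\big(\text{closure}\big)$; more precisely each generator lies in $\beta(b)\cdot\unclos{\Ss}$ (it is $\beta(b^+)\beta(C^+) \subseteq \beta(b)\cdot\beta(b^*C^+) \subseteq \beta(b)\cdot\unclos{\Ss}$, reading $b^*C^+$ as a subset of $B^+$ and using $\beta(B^+)=\unclos{\Ss}$; if one insists on $B^+$ rather than $B^*$, note $\beta(b^+)\beta(C^+)\subseteq\beta(b)\beta(b^*C^+)$ and $b^*C^+\subseteq B^+$). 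Since $\beta(b)\cdot\unclos{\Ss}$ is a \emph{left ideal}-like set that is moreover closed under right multiplication by elements of $\unclos{\Ss}$ (indeed $\beta(b)\unclos{\Ss}\cdot\unclos{\Ss}\subseteq\beta(b)\unclos{\Ss}$ because $\unclos{\Ss}$ is a semigroup), any product of generators of $\Ts$ stays in $\beta(b)\cdot\unclos{\Ss}$. Therefore $\unclos{\Ts}\subseteq\beta(b)\cdot\unclos{\Ss}$, which does not contain $s$. This yields $\unclos{\Ts}\subsetneq\unclos{\Ss}$, hence the strict inequality of indices.

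The main obstacle I anticipate is bookkeeping around the empty-word boundary: $\Lb$ partitions $C^+$ and $\Hb$ partitions $b^+$, so $\frB$ abstracts $b^+C^+$ (nonempty blocks on both sides), and one must be careful that $\unclos{\Ts}$ is generated by $\beta(b^+)\beta(C^+)$ and not by some larger set accidentally reaching elements outside $\beta(b)\cdot\unclos{\Ss}$ — in particular that the generators do begin with a genuine $\beta(b)$ factor. Once the identification $\unclos{\Ts} = \langle\beta(b^+)\beta(C^+)\rangle \subseteq \beta(b)\cdot\unclos{\Ss}$ is pinned down cleanly, the argument is short. An alternative, perhaps cleaner, route is to argue directly: $\beta(b) \in \unclos{\Ss}$ but, since $s\notin\beta(b)\unclos{\Ss}$ and in particular $\beta(b)\notin\unclos{\Ts}$ would need separate care — so I prefer the ideal-style containment $\unclos{\Ts}\subseteq\beta(b)\cdot\unclos{\Ss}\subsetneq\unclos{\Ss}$ as the spine of the proof.
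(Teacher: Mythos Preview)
Your proposal is correct and follows essentially the same approach as the paper: establish the chain $\unclos{\Ts} \subseteq \beta(b)\cdot\unclos{\Ss} \subsetneq \unclos{\Ss}$ by noting that every generator of $\Ts$ has its union inside $\beta(b)\cdot\unclos{\Ss}$, and that this set is stable under right multiplication by $\unclos{\Ss}$. The paper's write-up is only cosmetically different (it passes through $\unclos{\beta(b^+)}\cdot\unclos{\Ss}$ before collapsing to $\beta(b)\cdot\unclos{\Ss}$), but the spine of the argument is identical to yours.
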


\begin{proof}
By definition, for any language $H \in \Hb$, we have
$\unclos{\beta(H)} \subseteq \unclos{\beta(b^+)}$. Hence,
\[
\unclos{\Ts} \subseteq \unclos{\beta(b^+)} \cdot \unclos{\Ss} =
\unclos{\beta(b^+) \cdot \Ss} \subseteq \beta(b) \cdot \unclos{\Ss}.
\]
By definition of $b$, we know that $\beta(b) \cdot \unclos{\Ss}
\subsetneq \unclos{\Ss}$. We conclude that $\unclos{\Ts} \subsetneq
\unclos{\Ss}$ which terminates the proof.
\end{proof}

It follows from Fact~\ref{fct:indexdown} that we may apply induction
on our first induction parameter (the index of \Ss) to $\gamma$ and
obtain an \fo-partition $\Fb$ of $\frB^+$ such that for all $F \in \Fb$:
\begin{enumerate}
\item $\unclos{\gamma(F)} \in \Sat(\Ts)$.
\item $F$ can be defined with a first-order formula of rank at most
  $|\frB|\cdot2^{|\unclos{\Ts}|^2}$.
\end{enumerate}

\medskip
\noindent
{\bf Proof of Lemma~\ref{lem:partinf}, Step 3: Constructing the
  partition $\Kb'$ of $(b^+C^+)^+$.} For any $F \in \Fb$, we define
$K_F = \{w \in (b^+C^+)^+ \mid (w)_\frB \in F\}$. Finally, we set
$\Kb' = \{K_F \mid F \in \Fb\}$. Since \Fb is a partition of
$\frB^+$, it is immediate that $\Kb'$ is a partition of
$(b^+C^+)^+$. It now remains to prove that Items~\ref{it:c1induction2}
and~\ref{it:c2induction2} in Lemma~\ref{lem:partinf} hold.

Let us first prove that Item~\ref{it:c1induction2} is satisfied. Set
$K \in \Kb'$. By definition, $K = K_F$ for some $F \in \Fb$. By
definition of $K_F$ and Fact~\ref{fct:linkmorph}, we have that,
\[
\unclos{\beta(K)} \subseteq \unclos{\gamma(F)}
\]
Moreover, by construction of $\Fb$, we know that $\unclos{\gamma(F)}
\in \Sat(\Ts)$. Finally, since $\Ts \subseteq \Ss$, we have $\Sat(\Ts)
\subseteq \Sat(\Ss)$. Using closure under downset, we obtain that
$\unclos{\beta(K)} \in \Sat(\Ss)$ which terminates the proof of
Item~\ref{it:c1induction2}.

It now remains to prove that Item~\ref{it:c2induction2} holds. Set $K
\in \Kb'$. We need to construct an \fo formula of rank at most $|B|
\cdot 2^{|\unclos{\Ss}|^2} - 2$ that defines $K$. Note that it is
immediate that $(b^+C^+)^+$ can be defined in \fo (with a formula of
rank $2$): this amounts to testing that the first letter in the word
is a $b$ and the last is a letter of $C$. Therefore it suffices to
construct a formula $\varphi_K$ such that for all $w \in (b^+C^+)^+$,
$w \models \varphi_K$ if and only if $w \in K$.

By construction, there exists $F \in \Fb$ such that $K = K_F$ as well as an
\fo formula $\Psi_F$ (over \frB) of rank less than $|\frB| \cdot
2^{|\unclos{\Ts}|^2}$ that defines $F$. By definition of $K = K_F$, it
suffices to construct $\varphi_K$ so that for any $w \in (b^+C^+)^+$,
\[
w \models \varphi_K \quad \text{if and only if} \quad (w)_\frB \models \Psi_F
\]
The construction is standard, we build $\varphi_K$ by modifying
$\Psi_F$. Consider a word $w \in (b^+C^+)^+$. We say that a position
$x$ in $w$ is \emph{distinguished} if and only if $x$ is labeled by
a ``$b$'' and position $(x+1)$ has label in $C$. In other words $x$ is
the rightmost $b$-labeled position of an infix in $b^+C^+$ of $w$.
Recall that by definition, every letter of $(w)_\frB$ abstracts an
infix in $b^+C^+$ of $w$. Therefore, one can associate a position
$\hat{x}$ of $(w)_\frB$ to every distinguished position $x$ of~$w$.

\begin{fact} \label{fct:defalpha}
For every $\frb \in \frB$, there exists a first-order formula
$\overline{\frb}(x)$ of rank at most $(|B|-1) \cdot
2^{|\unclos{\Ss}|^2}$ such that for any $w \in (b^+C^+)^+$ and any
distinguished position $x$ of $w$:
\begin{equation}
w,x \models \overline{\frb}(x) \text{ if and only if } w_\frB,
\hat{x} \models \frb(\hat{x}). \label{eq:eq1}
\end{equation}
\end{fact}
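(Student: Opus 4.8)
\textbf{Proof proposal for Fact~\ref{fct:defalpha}.}

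The plan is to build the formula $\overline{\frb}(x)$ by taking the label predicate $\frb(\hat{x})$ of the abstract word $w_\frB$ and ``unfolding'' it into a first-order formula over the original word $w$. Recall that a letter $\frb \in \frB$ is, by definition, of the form $\unclos{\beta(HL)}$ for a unique pair $(H,L) \in \Hb \times \Lb$. The crucial observation is that, for a distinguished position $x$ of $w \in (b^+C^+)^+$, the letter $(w)_\frB$ carries at position $\hat{x}$ is determined entirely by the maximal factor of $w$ ending at $x$ that lies in $b^+$ together with the maximal factor of $w$ starting just after $x$ that lies in $C^+$: precisely, $\hat{x}$ is labeled by $\unclos{\beta(HL)}$ where $H \in \Hb$ is the class of that $b^+$-factor and $L \in \Lb$ is the class of that $C^+$-factor. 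So testing $w_\frB, \hat{x} \models \frb(\hat{x})$ amounts to locating these two factors in $w$ and checking that they fall into the prescribed classes $H$ and $L$.

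First I would make the two factors explicit. Given a distinguished position $x$ (labeled $b$, with $x+1$ labeled in $C$), let $p \leq x$ be the leftmost position such that the factor $w[p..x]$ lies in $b^+$ — this is the start of the maximal $b$-block ending at $x$ — and let $q \geq x+1$ be the rightmost position such that $w[x+1..q]$ lies in $C^+$. Both $p$ and $q$ are first-order definable relative to $x$ using only the order and the label predicates: ``$p$ is a $b$, every position strictly between $p$ and $x$ is a $b$, and either $p$ has no predecessor or the predecessor of $p$ is not a $b$'', and symmetrically for $q$. Then $w[p..x] \in b^+$ and $w[x+1..q] \in C^+$, and by construction of $(w)_\frB$ the letter at $\hat{x}$ is $\unclos{\beta(H'L')}$ where $H'$ is the $\Hb$-class of $w[p..x]$ and $L'$ is the $\Lb$-class of $w[x+1..q]$.

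Next I would plug in the formulas supplied by the two partition lemmas. By Lemma~\ref{lem:partsuff}, each $H \in \Hb$ is defined by a first-order formula $\psi_H$ of rank at most $2^{|\unclos{\Ss}|^2}$, and by Lemma~\ref{lem:partpref}, each $L \in \Lb$ is defined by a first-order formula $\psi_L$ of rank at most $(|B|-1)\cdot 2^{|\unclos{\Ss}|^2}$. Using the standard relativization technique (as in Fact~\ref{fct:foconcat}), these can be turned into formulas $\psi_H(y,z)$ and $\psi_L(y,z)$ with two free variables, expressing that the factor $w[y..z]$ belongs to $H$, resp.\ to $L$. Writing $\frb = \unclos{\beta(HL)}$ with $(H,L)$ the corresponding pair, I set
\[
\overline{\frb}(x) \;:=\; b(x) \wedge C(x+1) \wedge \exists p\, \exists q\, \bigl(\mathrm{block}_b(p,x) \wedge \mathrm{block}_C(x+1,q) \wedge \psi_H(p,x) \wedge \psi_L(x+1,q)\bigr),
\]
where $\mathrm{block}_b$ and $\mathrm{block}_C$ are the low-rank formulas locating $p$ and $q$ described above, and $C(x+1)$ abbreviates $\bigvee_{c \in C} c(x+1)$ (the ``$x+1$'' being syntactic sugar for a quantification over the successor position, definable with $<$). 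Equivalence~\eqref{eq:eq1} is then immediate from the construction of $(w)_\frB$. For the rank bound: the outer existential quantifications over $p$ and $q$ add a constant; the dominant contribution is the rank of $\psi_L$, namely $(|B|-1)\cdot 2^{|\unclos{\Ss}|^2}$, which already dominates the rank $2^{|\unclos{\Ss}|^2}$ of $\psi_H$. A careful accounting — reusing variables and sharing the prefix of quantifiers rather than nesting the two relativized subformulas independently, exactly as in the proof of Fact~\ref{fct:foconcat} — keeps the total at most $(|B|-1)\cdot 2^{|\unclos{\Ss}|^2}$, as claimed.

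The main obstacle I anticipate is the bookkeeping of the quantifier rank: a naive combination of $\mathrm{block}_b$, $\mathrm{block}_C$, $\psi_H$ and $\psi_L$ would push the rank above the stated bound. The fix is to observe that the two relativized subformulas $\psi_H(p,x)$ and $\psi_L(x+1,q)$ act on \emph{disjoint} intervals of the word, so their quantifiers can be interleaved over a shared block rather than stacked, and the cheap formulas $\mathrm{block}_b$, $\mathrm{block}_C$, $b(x)$, $C(x+1)$ have rank bounded by a small constant — easily absorbed since $|B| \geq 2$ makes $(|B|-1)\cdot 2^{|\unclos{\Ss}|^2} \geq 2^{|\unclos{\Ss}|^2}$ leave slack. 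This is the same style of rank-sharing argument already used implicitly in Fact~\ref{fct:foconcat}, so no genuinely new idea is required, only care.
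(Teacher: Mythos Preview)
Your approach is exactly the paper's: locate the maximal $b^+$-block ending at $x$ and the maximal $C^+$-block starting just after $x$, then test membership in the appropriate $H\in\Hb$ and $L\in\Lb$ via the formulas supplied by Lemmas~\ref{lem:partpref} and~\ref{lem:partsuff}. The paper's own proof is a two-sentence sketch to precisely this effect.

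One concrete slip: you assert that $\frb=\unclos{\beta(HL)}$ for a \emph{unique} pair $(H,L)\in\Hb\times\Lb$, and build $\overline{\frb}(x)$ from that single pair. This is false --- the very reason $\frB$ is defined as the set of \emph{values} $\unclos{\beta(HL)}$ rather than as $\Hb\times\Lb$ is that distinct pairs may collapse to the same letter (this is what keeps $|\frB|\leq 2^{|\unclos{\Ts}|}$). The correct $\overline{\frb}(x)$ is the disjunction over \emph{all} pairs $(H,L)$ with $\unclos{\beta(HL)}=\frb$; a disjunction does not raise the rank, so the fix is free. Note that the paper's wording (``the appropriate \emph{languages} of $\Hb$ and $\Lb$ that yield letter $\frb$'') already hints at the plurality.

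On the rank bookkeeping: your ``interleaving'' worry is misplaced, since the rank of $\psi_H\wedge\psi_L$ is already the maximum of the two ranks, not their sum --- nothing needs to be shared there. The genuine overhead comes from the block-locating apparatus (quantifying $p$ and $q$, expressing successor with $<$, the $\mathrm{block}$ predicates), which on an honest count contributes a small additive constant. The paper's proof is equally breezy on this point (``this can easily be done''), and the downstream inequality $r\leq|B|\cdot 2^{|\unclos{\Ss}|^2}-2$ has enough slack to absorb it either way.
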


\begin{proof}
This amounts to testing whether the maximal infix in $b^+$ ending
at position $x$ in $w$ and the maximal infix in $C^+$ starting at
position $x+1$ in $w$ are in the appropriate languages of \Hb and
\Lb that yield letter $\frb$. This can easily be done with rank at most $(|B|-1) \cdot
2^{|\unclos{\Ss}|^2}$ since any language in \Hb or \Lb can be defined
by a formula of rank at most $(|B|-1) \cdot 2^{|\unclos{\Ss}|^2}$ (see
Lemma~\ref{lem:partpref} and Lemma~\ref{lem:partsuff}).
\end{proof}

The desired formula $\varphi_K$ is obtained from $\Psi_F$ by
restricting quantifications to distinguished positions and replacing
each atomic subformula of the form $\frb(x)$ by the formula
$\overline{\frb}(x)$. This can clearly be done in first-order
logic. Observe that this formula has rank at most $r =
|\frB|\cdot2^{|\unclos{\Ts}|^2} + (|B|-1) \cdot 2^{|\unclos{\Ss}|^2}$.
Since $|\frB| \leq 2^{|\unclos{\Ts}|}$ and $1 \leq |\unclos{\Ts}| \leq
|\unclos{\Ss}| - 1$, we obtain:
\[
r \leq |B|\cdot 2^{|\unclos{\Ss}|^2} - (2^{|\unclos{\Ss}|^2} -
2^{|\unclos{\Ss}|^2 - |\unclos{\Ss}|}) \leq |B|\cdot
2^{|\unclos{\Ss}|^2} -2
\]
Note that the last inequality is justified by the fact that
$|\unclos{\Ss}|\geq2$, which holds in this case since otherwise $\Ss$ would be the trivial group. This terminates the proof of Lemma~\ref{lem:partinf}.

\section{Alternate Algorithms}
\label{sec:altalgo}
In this section, we connect our algorithm with the ones of
Henckell~\cite{Henckell:Pointlike-sets:-finest-aperiodic:1988:a} and Henckell,
Rhodes and Steinberg~\cite{DBLP:journals/ijac/HenckellRS10a,qt}. These
algorithms (there are two of them) differ in the specific \fo-operation.
Although the change is minor and the correspondence between these algorithms
is easy to prove, this is what brings a complexity improvement, from
\textsc{Expspace} to \textsc{Exptime}. We include this easy section to bridge
the gap between all three algorithms. Note that the fact that the two given by
Henckell are equivalent was already shown
in~\cite{Henckell:Pointlike-sets:-finest-aperiodic:1988:a}.

The well-known decidable characterization of first-order logic by
Schützenberger~\cite{sfo,mnpfo} states that a language is
first-order \emph{definable} if and only if its syntactic semigroup is
\emph{aperiodic}. In the literature, there are many equivalent
definitions of aperiodicity. In this paper, we consider three of them:
one is equational, the second considers subgroups and the third
considers the \Hrel-classes. The relation `\Hrel' is one of Green's
relations which are well known in semigroup theory. Two elements $s,s'$
of a semigroup $S$ are \Hrel-equivalent if $s=s'$ or there exist $t^{}_\ell ,t'_\ell ,t^{}_r,
t'_r \in S$ such that $s t^{}_r = s'$, $s' t'_r = s$, $t^{}_\ell  s = s'$ and
$t'_\ell s'  = s$. We state the three equivalent definitions.

\begin{lemma}[Folklore, see~\cite{Pin13:MPRI}]
  A finite semigroup $S$ is aperiodic if and only if it satisfies one
  of the following equivalent statements:
  \begin{enumerate}
  \item for all $s \in S$, $s^\omega = s^{\omega+1}$.
  \item all subgroups in $S$ are trivial.
  \item all $\Hrel$-classes in $S$ are trivial.
  \end{enumerate}
\end{lemma}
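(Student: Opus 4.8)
The plan is to prove the cycle of implications $(3) \Rightarrow (2) \Rightarrow (1) \Rightarrow (3)$; the first two implications are routine and all the substance sits in the last one. For $(3) \Rightarrow (2)$ I would argue that any subgroup $G \subseteq S$ lies inside a single $\Hrel$-class: if $e$ is the identity of $G$ and $g, g' \in G$, then $g\cdot(g^{-1}g') = g'$ and $g'\cdot(g'^{-1}g) = g$, and symmetrically $(g'g^{-1})\cdot g = g'$, $(gg'^{-1})\cdot g' = g$, so $g$ and $g'$ are $\Hrel$-equivalent; if $\Hrel$ is trivial this forces $g = g'$, i.e.\ $G = \{e\}$.

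For $(2) \Rightarrow (1)$, fix $s \in S$ and consider the finite monogenic subsemigroup $\{s, s^2, s^3, \dots\}$. A standard computation produces an index $i \ge 1$ and a period $p \ge 1$ with $s^i = s^{i+p}$, and $\{s^i, \dots, s^{i+p-1}\}$ is then a cyclic group of order $p$ whose identity element is the idempotent power $s^\omega$. If all subgroups of $S$ are trivial then $p = 1$, whence $s^n = s^\omega$ for every $n \ge i$, and in particular $s^\omega = s^{\omega+1}$.

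The heart of the argument is $(1) \Rightarrow (3)$, and here I would use Green's lemma. Assume $s^\omega = s^{\omega+1}$ for all $s$ (equivalently, $s^n = s^\omega$ for all $n \ge \omega$), and suppose towards a contradiction that $a \mathrel{\Hrel} b$ with $a \ne b$. From the definition of $\Hrel$ we obtain $x, x' \in S$ with $ax = b$ and $bx' = a$, and also that $a$ and $b$ are $\mathscr{L}$-related. Green's lemma then says that the right translation $z \mapsto zx$ is a bijection from the $\mathscr{L}$-class of $a$ onto the $\mathscr{L}$-class of $b$ --- which is the same class --- carrying the $\Hrel$-class $H_a$ of $a$ onto $H_{ax} = H_b = H_a$. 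Hence $z \mapsto zx$ restricts to a permutation of the finite set $H_a$ that moves $a$ to $b$; if $\ell \ge 2$ denotes the length of the cycle through $a$, then $ax^\ell = a$, the elements $a, ax, \dots, ax^{\ell-1}$ are pairwise distinct, and $ax^n = ax^{n \bmod \ell}$ for all $n \ge 0$. Instantiating $(1)$ at $x$ gives $x^\omega = x^{\omega+1}$, hence $ax^\omega = ax^{\omega+1}$; comparing exponents modulo $\ell$ and using distinctness yields $\omega \equiv \omega+1 \pmod{\ell}$, i.e.\ $\ell = 1$, a contradiction. Therefore every $\Hrel$-class is a singleton.

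The hard part is really just locating and applying the right form of Green's lemma --- that the right translation determined by an $\mathscr{R}$-relation restricts to an $\Hrel$-class-preserving bijection of the corresponding $\mathscr{L}$-class --- which is entirely classical (see~\cite{Pin13:MPRI}); granted that, the orbit-length computation is immediate and the rest is elementary manipulation of products and idempotent powers. One could instead route $(3) \Rightarrow (1)$ directly, via the observation that $s^\omega$ and $s^{\omega+1}$ are both $\mathscr{R}$- and $\mathscr{L}$-related through the idempotent $s^\omega$, so triviality of $\Hrel$ gives $s^\omega = s^{\omega+1}$; but passing \emph{from} equational aperiodicity (or from triviality of subgroups) \emph{back} to triviality of $\Hrel$ seems to require Green's lemma in any case, so that step remains the main obstacle.
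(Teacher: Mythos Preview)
The paper does not prove this lemma: it is stated as folklore with a reference to~\cite{Pin13:MPRI} and no argument is given. Your proof is correct and follows the standard route found in that reference; in particular the use of Green's lemma to show that $(1)$ forces all $\Hrel$-classes to be singletons is exactly the classical argument, so there is nothing to compare.
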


Our saturation procedure $\Sat$ can be viewed as a generalization of
the first definition of aperiodicity. Indeed,
Operation~\ref{eq:oper} reflects the equation $s^\omega =
s^{\omega+1}$. In this section, we present two alternate and equivalent
saturation procedures that reflect the two other definitions. Let
$\alpha: A^+ \rightarrow S$ be a morphism into a finite semigroup.

\smallskip
Let $\Ss$ be a subsemigroup of $2^S$. We set $\Sat_{G}(\Ss)$ as the
smallest subset of $2^S$ that contains \Ss and is closed under
downset, multiplication and the following operation:
\begin{equation}
  \label{eq:oper-g}
  \text{for all $\Gs \subseteq \Sat_G(\Ss)$ that is a group, $\unclos{\Gs} \in \Sat_{G}(\Ss)$}.
\end{equation}
Similarly, $\Sat_{H}(\Ss)$ is the smallest subset of $2^S$ that
contains \Ss and is closed under downset, multiplication and the
following operation:
\begin{equation}
  \label{eq:oper-h}
  \text{for all $\Hs \subseteq \Sat_H(\Ss)$ that is an \Hrel-class, $\unclos{\Hs} \in \Sat_{H}(\Ss)$}.
\end{equation}

$\Sat_{G}$ reflects the second definition of aperiodicity and
$\Sat_{H}$ the third. In the following proposition, we state that the
three saturation procedures are equivalent and can therefore all be
used to compute $\Is[\alpha]$ by Proposition~\ref{prop:algoworks}.

\begin{proposition} \label{prop:alternate}
  Let $\Ss$ be a subsemigroup of $2^S$. Then, $$\Sat(\Ss) = \Sat_{G}(\Ss)
  = \Sat_{H}(\Ss).$$
\end{proposition}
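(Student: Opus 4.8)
The plan is to prove the two equalities $\Sat(\Ss) = \Sat_G(\Ss)$ and $\Sat_G(\Ss) = \Sat_H(\Ss)$ by establishing inclusions in both directions. Since all three saturations are defined as smallest subsets containing $\Ss$ and closed under downset and multiplication, to show that one is contained in another it suffices to check that the larger one is also closed under the defining operation of the smaller one, when restricted to the relevant elements. More precisely, to show $\Sat(\Ss) \subseteq \Sat_G(\Ss)$, I would show that $\Sat_G(\Ss)$ is closed under \fo-Closure (Operation~\ref{eq:oper}), \emph{i.e.}, that for every $T \in \Sat_G(\Ss)$, we have $T^\omega \cup T^{\omega+1} \in \Sat_G(\Ss)$; then minimality of $\Sat(\Ss)$ gives the inclusion. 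The analogous strategy applies to each of the other five inclusions.

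First I would handle $\Sat(\Ss) = \Sat_G(\Ss)$. For $\Sat_G(\Ss) \subseteq \Sat(\Ss)$, I need that $\Sat(\Ss)$ is closed under Operation~\ref{eq:oper-g}: given a group $\Gs = \{T_1,\dots,T_n\} \subseteq \Sat(\Ss)$, I must show $\unclos\Gs \in \Sat(\Ss)$. This is exactly the computation already carried out in the proof of Case~1 of Proposition~\ref{prop:pumping} (the paragraph following Lemma~\ref{lem:basecase}): since $T_i^\omega = 1_\Gs$ in the group, one has $\unclos\Gs = (T_1^\omega \cup T_1^{\omega+1}) \cdots (T_n^\omega \cup T_n^{\omega+1})$, and each factor lies in $\Sat(\Ss)$ by \fo-Closure applied to $T_i$, so the product does too by Multiplication. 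For the reverse inclusion $\Sat(\Ss) \subseteq \Sat_G(\Ss)$, I need $\Sat_G(\Ss)$ closed under \fo-Closure: given $T \in \Sat_G(\Ss)$, consider the element $e = T^\omega$, which is idempotent in $2^S$. The singleton-ish set $\{e\}$ is trivially a group with identity $e$, but that only yields $e \in \Sat_G(\Ss)$ (which follows more simply from closure under multiplication). I instead want $T^\omega \cup T^{\omega+1}$. Observe $\{T^{\omega}, T^{\omega} \cup T^{\omega+1}\}$ need not be a group, so the cleanest route is: let $G$ be the cyclic group $\{T^\omega, T^{\omega+1}, \dots\}$ generated by $T^{\omega+1}$ inside $2^S$ (this is a group with identity $T^\omega$ since $2^S$ is finite); then $T^{\omega+1} \in \Sat_G(\Ss)$ by Multiplication, so $G \subseteq \Sat_G(\Ss)$, and $\unclos G = T^\omega \cup T^{\omega+1} \cup \cdots \supseteq T^\omega \cup T^{\omega+1}$. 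Since $\unclos G$ is itself of the form $T^\omega \cup T^{\omega+1}$ in the subgroup (as the subgroup is cyclic, the union of all its elements equals $\bigcup_{j} T^{\omega+j}$, which one checks equals $T^\omega \cup T^{\omega+1}$ is \emph{not} true in general), so I must be careful here --- the correct statement is that $\unclos G \in \Sat_G(\Ss)$ by Operation~\ref{eq:oper-g}, and then $T^\omega \cup T^{\omega+1} \subseteq \unclos G$ gives $T^\omega \cup T^{\omega+1} \in \Sat_G(\Ss)$ by closure under downset. This is the step I expect to require the most care.

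Next I would handle $\Sat_G(\Ss) = \Sat_H(\Ss)$. Since every group is in particular contained in a single $\Hrel$-class (the $\Hrel$-class of its identity), and more precisely a subsemigroup that is a group is an $\Hrel$-class, the inclusion $\Sat_G(\Ss) \subseteq \Sat_H(\Ss)$ should be nearly immediate: if $\Gs \subseteq \Sat_H(\Ss)$ is a group, it is an $\Hrel$-class of itself, so Operation~\ref{eq:oper-h} applies directly and $\unclos\Gs \in \Sat_H(\Ss)$; minimality of $\Sat_G(\Ss)$ finishes it. For the converse $\Sat_H(\Ss) \subseteq \Sat_G(\Ss)$, I need $\Sat_G(\Ss)$ closed under Operation~\ref{eq:oper-h}: given an $\Hrel$-class $\Hs \subseteq \Sat_G(\Ss)$ of the subsemigroup generated by $\Sat_G(\Ss)$ (or within $2^S$ --- I would check the precise reading in the statement), I must show $\unclos\Hs \in \Sat_G(\Ss)$. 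Here I invoke the classical fact from semigroup theory (folklore, e.g.~\cite{Pin13:MPRI}) that in a finite semigroup, an $\Hrel$-class containing an idempotent is a group, and that every $\Hrel$-class $\Hs$ satisfies $\Hs = e\Hs = \Hs e$ for a suitable idempotent $e$ in an adjacent $\Hrel$-class, with $\unclos\Hs$ then expressible via a group. Concretely, pick $T \in \Hs$; then $T^\omega$ is an idempotent $\Hrel$-related to $T$ (in a finite semigroup $T \,\Hrel\, T^{\omega+1}$ and $T^\omega$ is the identity of the maximal subgroup $G_T$ containing $T^{\omega+1}$), so $T$ lies in a group $G \subseteq 2^S$, and $\Hs = G$ as $\Hrel$-classes coincide on that group. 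Thus $\Hs$ is itself a group contained in $\Sat_G(\Ss)$, and Operation~\ref{eq:oper-g} yields $\unclos\Hs \in \Sat_G(\Ss)$. The main obstacle in this second equality is getting the semigroup-theoretic bookkeeping exactly right --- in particular, confirming that an $\Hrel$-class appearing in Operation~\ref{eq:oper-h} is automatically a group (which uses that it is given as a \emph{subsemigroup} that is an $\Hrel$-class, forcing it to contain an idempotent), so that the two operations coincide on the relevant inputs.
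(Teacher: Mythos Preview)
Your inclusions $\Sat_G(\Ss) \subseteq \Sat(\Ss)$ and $\Sat(\Ss) \subseteq \Sat_G(\Ss)$ are fine. The first one is exactly the paper's argument, and for the second your cyclic-group trick is correct: the set $G=\{T^{\omega+j}\mid j\geq 0\}$ is a group with identity $T^\omega$, it lies in $\Sat_G(\Ss)$ by closure under multiplication, hence $\unclos G\in\Sat_G(\Ss)$ by Operation~\eqref{eq:oper-g}, and downset gives $T^\omega\cup T^{\omega+1}\in\Sat_G(\Ss)$. (The paper does not prove this inclusion directly; it takes the cycle $\Sat\subseteq\Sat_H\subseteq\Sat_G\subseteq\Sat$ instead.) For $\Sat_G(\Ss)\subseteq\Sat_H(\Ss)$ you are almost right but need a small correction: a group $\Gs\subseteq\Sat_H(\Ss)$ is \emph{contained in} an $\Hrel$-class $\Hs$ of $\Sat_H(\Ss)$, not necessarily equal to one; apply Operation~\eqref{eq:oper-h} to $\Hs$ and then use downset to get $\unclos\Gs\subseteq\unclos\Hs$ into $\Sat_H(\Ss)$.

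The genuine gap is in $\Sat_H(\Ss)\subseteq\Sat_G(\Ss)$. Your concrete argument claims that for $T\in\Hs$ one has $T\ \Hrel\ T^{\omega+1}$, hence $\Hs$ is the group containing $T^{\omega+1}$. This is false in general: an element of a finite semigroup need not be $\Hrel$-equivalent to any power beyond itself (take any non-regular element, e.g.\ $a$ in the semigroup $\{a,0\}$ with $a^2=0$). An $\Hrel$-class is a group \emph{only} when it contains an idempotent, and nothing in Operation~\eqref{eq:oper-h} forces this. Your parenthetical attempt to rescue the argument by saying the $\Hrel$-class is ``given as a subsemigroup'' is also not justified by the definition. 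The paper's proof of this inclusion uses a genuinely different structural fact: any non-singleton $\Hrel$-class $\Hs$ of a finite semigroup can be written as $\Hs=R\cdot\Gs$ for some element $R$ and some group $\Gs$ lying in the same semigroup (obtained from the right stabiliser $\{T\mid \Hs T=\Hs\}$ and an $\Hrel$-class of its minimal ideal). Then $\unclos\Hs=R\cdot\unclos\Gs\in\Sat_G(\Ss)$ follows from Operation~\eqref{eq:oper-g} and multiplication. You will need this (or an equivalent translate-of-a-group statement) to close the gap.
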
\smallskip

\noindent Note that the saturation procedure $\Sat_H$ is essentially Henckell's
original algorithm~\cite{Henckell:Pointlike-sets:-finest-aperiodic:1988:a}, where $\Sat_G$ was also shown to be a correct saturation operation. We
finish the section by proving Proposition~\ref{prop:alternate}.

\begin{proof}
We prove that $\Sat(\Ss) \subseteq \Sat_{H}(\Ss) \subseteq
\Sat_{G}(\Ss) \subseteq \Sat(\Ss)$. Let us first prove that $\Sat(\Ss)
\subseteq \Sat_{H}(\Ss)$.

\medskip\noindent
{\boldmath$\Sat(\Ss) \subseteq \Sat_{H}(\Ss)$.} By
definition of $\Sat(\Ss)$ and $\Sat_{H}(\Ss)$, this amounts to proving
that $\Sat_{H}(\Ss)$ is closed under \fo-closure: for any $T \in
\Sat_H(\Ss)$, $T^\omega \cup T^{\omega+1} \in \Sat_H(\Ss)$.

Set $T \in \Sat_H(\Ss)$. Observe that $T^{\omega+1}$ and $T^{\omega}$
are \Hrel-equivalent elements in the semigroup $\Sat_H(\Ss)$, and are
therefore both contained in some \Hrel-class $\Hs \subseteq
\Sat_H(\Ss)$. By definition of $\Sat_H(\Ss)$, we then have
$\unclos\Hs\in\Sat_H(\Ss)$. Hence, $T^{\omega} \cup
T^{\omega+1}\subseteq \unclos{\Hs} \in \Sat_H(\Ss)$, which ends the
proof since $\Sat_H(\Ss)$ is closed under downset.

\medskip\noindent
{\boldmath$\Sat_{H}(\Ss) \subseteq \Sat_{G}(\Ss)$}. This inclusion, which is
easy to prove, follows
from~\cite{Henckell:Pointlike-sets:-finest-aperiodic:1988:a}. We give here a
proof for the sake of completeness.

By definition of $\Sat_H(\Ss)$ and $\Sat_{G}(\Ss)$,
this amounts to proving that $\Sat_{G}(\Ss)$ is closed
under~\ref{eq:oper-h}: for all $\Hs \subseteq \Sat_G(\Ss)$ that is
an \Hrel-class, $\unclos{\Hs} \in \Sat_{G}(\Ss)$.

Let $\Hs \subseteq \Sat_G(\Ss)$ be an \Hrel-class. We claim that
either \Hs is a singleton, or there exists a group \Gs in
$\Sat_G(\Ss)$ and $R \in \Sat_G(\Ss)$ such that $\Hs = R \cdot \Gs$.
If \Hs is a singleton, then $\unclos{\Hs}$ is the unique element of
\Hs which belongs to $\Sat_G(\Ss)$. Otherwise, using closure under
multiplication, it follows that $\unclos\Hs = R\unclos\Gs \in
\Sat_G(\Ss)$ since $\unclos\Gs \in \Sat_G(\Ss)$ by
Operation~\ref{eq:oper-g} in the definition of $\Sat_G$.

It remains to prove the claim (which actually is not specific to
subsemigroups of a semigroup of subset): every \Hrel-class \Hs of
a semigroup \Ts is either a singleton, or of the form $R\cdot\Gs$, for
 $R\in\Ts$ and $\Gs$ a group in $\Ts$.  Let $\text{Stab} = \{T \in \Ts
 \mid \Hs\cdot T = \Hs \}$. If $\Hs$ is not a singleton, then Green's
 Lemma implies that $\text{Stab}$ is nonempty, and therefore it is a
 subsemigroup of \Ts. Let \Gs be an \Hrel-class of its minimal
 ideal. By standard results in semigroup theory~\cite[Chapter~V]{Pin13:MPRI}, \Gs is a group. Let us
 check that $\Hs = H\cdot\Gs$, for any $H\in\Hs$. Indeed, let $H\in\Hs$
 and let $E$ be the identity of \Gs. Since $E\in\text{Stab}$, we have
 $H=H'E$ for some $H' \in \Hs$, and so $HE = H$. Let now
 $H_1\in\Hs$. By definition, we have $H_1 = H\cdot X$ for some
 $X\in\Ts$. Note that since $\Gs$ is in the minimal ideal, we have
 $EXE\in \Gs$. Hence $H_1=H_1E=HXE=H(EXE)\in H\Gs$. This proves the
 claim and establishes the inclusion.

\medskip\noindent
{\boldmath$\Sat_{G}(\Ss) \subseteq \Sat(\Ss)$.} By definition of
$\Sat_G(\Ss)$ and $\Sat(\Ss)$, this amounts to proving that
$\Sat(\Ss)$ is closed under~\eqref{eq:oper-g}: for all $\Gs \subseteq
\Sat(\Ss)$ that is a group, $\unclos{\Gs} \in \Sat(\Ss)$.

Set $\Gs \subseteq \Sat(\Ss)$ that is a group and set
$\Gs=\{T_1,\dots,T_n\}$ with $T_i \in \Sat(\Ss)$ and let
$1_{\Gs}$ be the identity element of $\Gs$. Since $\Gs$ is a group,
for all $i$, $T_i^{\omega} = 1_{\Gs}$. In particular this means that
for all $i$, $T_i=T_1^{\omega}\cdots
T_{i-1}^{\omega}T_i^{\omega+1}T_{i+1}^{\omega} \cdots T_n^{\omega}$.
By combining these equalities, we get
\[
    \unclos{\Gs} =T_1\cup\cdots\cup T_n\subseteq (T_1^{\omega} \cup T_1^{\omega+1}) \cdots (T_n^{\omega} \cup T_n^{\omega+1}).
\]
It follows from \fo-closure and closure under multiplication that
$\unclos{\Gs} \in \Sat(\Ss)$.
\end{proof}

\section{Infinite Words}
\label{sec:omega}
An advantage of our technique for proving Theorem~\ref{th:main} is
that it generalizes smoothly to the setting of infinite words, \emph{i.e.},
it can be adapted to prove that \fo-separability is decidable for
infinite words. Both the algorithm itself and its proof are very similar to
those of the finite words setting. In particular, we retain all
results that we already have for finite words:

\begin{itemize}
\item we get an {\sc Exptime} upper bound on the complexity of the
  problem.
\item we get an exponential upper bound on the quantifier rank of a
  potential separator.
\item the proof is constructive: if a separator exists, one can be
  constructed by induction.
\end{itemize}

The remainder of the paper is devoted to the presentation of this
generalization. In this section, we introduce \iwords and generalize
our definitions to this setting: we define \ilangs, \isemis and
first-order logic over \iwords. We postpone the presentation of the
separation algorithm itself (which requires generalizing our framework
to \iwords) to the next section, Section~\ref{sec:omegasep}. Finally,
Section~\ref{sec:comp-i} is devoted to the proof of this algorithm.

\subsection{\texorpdfstring{Regular Languages of \iwords}{Regular Languages of
    infinite words}}$\quad$

\medskip
\noindent {\textbf{\iwords and \ilangs.}} Recall that $A$ is a finite
alphabet. We denote by $A^{\infty}$ the set of infinite words, called \emph{\iwords} over $A$. Note
that we still use the term ``word'' to mean an element of $A^+$. If
$u$ is a word and $v$ an \iword, we denote by $u \cdot v$ or $uv$ the \iword
obtained by concatenating $u$ to the left of $v$, and by $u^\infty$ the
\iword obtained by infinite concatenation of $u$ with
itself\,\footnote{In the literature, the \iword $u^\infty$ is usually
  denoted by $u^\omega$. Here, we use this non standard notation in
  order to avoid confusion with the idempotent power $\omega$ in
  semigroups.}. An \emph{\ilang} is a subset of $A^{\infty}$. \emph{Regular}
\ilangs are those that are accepted by \emph{nondeterministic Büchi
  automata} (NBA). Again, we will only work with the algebraic
representation of \ilangs that we recall below.

\medskip
\noindent
{\textbf{\isemis.}} We briefly recall the definition of \isemis, which
play the role of semigroups in the setting of \iwords. For more details, we
refer the reader to~\cite{Perrin&Pin:Infinite-Words:2004:a}.

An \emph{\isemi} is a pair $(S_+,S_{\infty})$ where $S_+$ is a
semigroup and $S_{\infty}$ is a set. Moreover, $(S_+,S_{\infty})$ is
equipped with two additional products: a \emph{mixed product} $S_+
\times S_{\infty} \rightarrow S_{\infty}$ that maps $s,t \in
S_+,S_{\infty}$ to an element denoted $st$, and an \emph{infinite
  product} $(S_{+})^{\infty} \rightarrow S_{\infty}$ that maps an
infinite sequence $s_1,s_2,\dots \in (S_{+})^{\infty}$ to an element
of $S_{\infty}$ denoted by $s_1s_2\cdots$. We require these products
as well as the semigroup product of $S_+$ to satisfy all possible
forms of associativity (see~\cite{Perrin&Pin:Infinite-Words:2004:a}
for details). Finally, we denote by $s^{\infty}$ the element
$sss\cdots$. Observe that $(A^+,A^{\infty})$ is an \isemi.

The notions of subsemigroups and morphisms can be adapted to
\isemis. In particular, if $T_+$ is a subsemigroup of $S_+$ and
$T_\infty$ is the set obtained by applying the infinite product to all
sequences of $T_+$, then $(T_+,T_{\infty})$ is a \sisemi of
$(S_+,S_\infty)$ called the \emph{\sisemi generated by~$T_+$}.

An \isemi is said to be \emph{finite} if both $S_+$ and $S_{\infty}$
are finite. Note that even if an \isemi is finite, it is not obvious
that a finite representation of the infinite product exists. However,
it was proven by Wilke~\cite{womega} that the infinite product is
fully determined by the mapping $s \mapsto s^\infty$, yielding
a finite representation for finite \isemis. An \ilang~$L$ is said to be
\emph{recognized} by an \isemi $(S_+,S_{\infty})$ if there exists $F
\subseteq S_{\infty}$ as well as a morphism $\alpha : (A^+,A^\infty)
\rightarrow (S_+,S_\infty)$ such that $L = \alpha^{-1}(F)$. It is well known
that an \ilang is regular if and only if it is recognized by a
\emph{finite} \isemi. Moreover~\cite{womega}, from any NBA recognizing
$L$, one can compute a canonical smallest \isemi recognizing $L$,
called the \emph{syntactic \isemi of $L$.}

As for finite words, when working on separation, it is convenient to
consider a single recognizing object for both input languages rather
than two separate objects. Again, this is not restrictive: given two
\ilangs and two associated recognizing \isemis, one can define (and
compute) a single \isemi that recognizes both languages by taking the
Cartesian product of the two original \isemis.

\medskip
\noindent
{\bf Semigroup of Subsets.} For an \isemi $(S_+,S_{\infty})$, note
that $(2^{S_+},2^{S_\infty})$ is an \isemi with the products
defined in the natural way.

\subsection{\texorpdfstring{First-Order Logic over
    \iwords.}{First-Order Logic over Infinite Words.}}$\quad$

As for words, an \iword can be viewed as a sequence of positions that
are labeled over $A$ (the difference being that in the case of
\iwords, the sequences are infinite: there is a leftmost position but
no rightmost one). Therefore, first-order formulas as we defined them can
also be interpreted on \iwords and we can simply say that an \ilang
$L \subseteq A^\infty$ is first-order definable if and only if there
exists an \fo formula $\varphi$ such that $L = \{w \in A^\infty \mid w
\models \varphi\}$.

First-order logic over \iwords shares similar properties with
first-order logic over words. First, the equivalence with star-free
languages still holds for \ilangs: an \ilang is first-order definable
if and only if it is
star-free~\cite{Ladner:Application-model-theoretic-games:1977:a,tfo}.
Furthermore, Schützenberger's Theorem was generalized to \ilangs by
Perrin~\cite{pfo}: a regular \ilang is star-free (and hence \fo) if
and only if the finite semigroup $S_+$ of its syntactic \isemi
$(S_+,S_\infty)$ is aperiodic. Note that we obtain an alternate proof
of this theorem as a simple consequence of our separation algorithm.

Our main theorem for \ilangs is similar to Theorem~\ref{th:main} for
languages and is as~follows.

\begin{theorem} \label{th:main-i}
  Let $L_0,L_1$ be regular \ilangs recognized by a morphism
  $\alpha: (A^+,A^\infty) \to (S_+,S_\infty)$ into a finite \isemi. The
  two following items hold.
  \begin{enumerate}
  \item One can decide in {\sc Exptime} with respect to $|S_+|$ whether
    $L_0$ and $L_1$ are \fo-separable.
  \item When $L_0$ and $L_1$ are \fo-separable, one can construct an
    actual separator with a formula of quantifier rank at most
    $|A|2^{|S_+|^2}+1$.
  \end{enumerate}
\end{theorem}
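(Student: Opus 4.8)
The plan is to mirror the finite-word development of Sections~\ref{sec:main}--\ref{sec:comp} step by step, the only genuinely new ingredients being (i)~an infinite-iteration operation in the saturation procedure and (ii)~a Ramsey-type factorization argument in the completeness proof. Because \fo-definable \ilangs are built from \fo-definable languages of finite words by concatenation and infinite iteration, every object will come in two components: a ``finite'' part living in $2^{S_+}$, which is exactly what Sections~\ref{sec:main}--\ref{sec:comp} already handle, and an ``infinite'' part living in $2^{S_\infty}$.

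\textbf{Framework.} First I would redo the generic part of Section~\ref{sec:main} for \iwords: a \fo-partition of $A^\infty$ is a finite partition of $A^\infty$ into \fo-definable \ilangs, and to a morphism $\alpha:(A^+,A^\infty)\to(S_+,S_\infty)$ and such a partition \Kb one associates its imprint $\Is[\alpha](\Kb)=\{T\subseteq S_\infty \mid \exists K\in\Kb,\ T\subseteq\alpha(K)\}$. Since $S_\infty$ is finite there are finitely many imprints; an optimal one exists by the very same intersection argument as in Lemma~\ref{lem:optimal}; and the analogue of Theorem~\ref{thm:seppart} holds: $L_0,L_1\subseteq A^\infty$ are \fo-separable iff no pair $\{t_0,t_1\}$ with $t_i\in\alpha(L_i)$ lies in the optimal \ilang-imprint. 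One also shows, as in Lemma~\ref{lem:usemi}, that the optimal \ilang-imprint is closed under the mixed product by the finite-word optimal imprint and under the infinite product, so that it is a genuine ``\ilang-object'' and the fixpoint below is well posed.

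\textbf{Algorithm and soundness.} The saturation is then defined as a pair $\Sat(\alpha)=(\Sat_+(\alpha),\Sat_\infty(\alpha))$, where $\Sat_+(\alpha)$ is the object computed by the algorithm of Section~\ref{sec:main} applied to $\alpha|_{A^+}$, and $\Sat_\infty(\alpha)\subseteq 2^{S_\infty}$ is the smallest set containing $\{\{\alpha(w)\}\mid w\in A^\infty\}$ and closed under downset, under mixed products $\Sat_+(\alpha)\cdot\Sat_\infty(\alpha)\subseteq\Sat_\infty(\alpha)$, and under the infinite-iteration rule $T\mapsto(T^\omega\cup T^{\omega+1})^\infty$ for $T\in\Sat_+(\alpha)$. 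This is a lowest fixpoint over the exponential-size lattice $2^{S_+}\times 2^{S_\infty}$, which gives the \textsc{Exptime} bound of item~(1) of Theorem~\ref{th:main-i} exactly as for the first item of Theorem~\ref{th:main}. Soundness, i.e.\ that $\Sat(\alpha)$ is contained in the optimal \ilang-imprint, is then proved as in Section~\ref{sec:correc} via the equivalent ``for every $k$ there is an $\foeq{k}$-class $W$ with $T\subseteq\alpha(W)$'' characterization; the only extra fact needed is the \efgame statement generalizing Lemma~\ref{lem:efconcat} to \iwords: whenever $u,u_1,u_2,\dots\in A^+$ are pairwise $\foeq{k}$-equivalent, $u_1u_2u_3\cdots\foeq{k}u^\infty$. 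Combined with the second part of Lemma~\ref{lem:efconcat}, this lets one realise $(T^\omega\cup T^{\omega+1})^\infty$ inside a single $\foeq{k}$-class, namely the class of $w^\infty$ for any $w$ in a fixed sufficiently high power of the $\foeq{k}$-class given for $T$ (the high power is what makes $T^\omega$ and $T^{\omega+1}$ land in the same class, using the second part of Lemma~\ref{lem:efconcat}).

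\textbf{Completeness (the main obstacle).} The difficult inclusion, as in Section~\ref{sec:comp}, is that the optimal \ilang-imprint is contained in $\Sat_\infty(\alpha)$, and I would prove it by constructing an explicit \fo-partition \Kb of $A^\infty$ whose imprint lies inside $\Sat_\infty(\alpha)$. The plan: apply the finite-word Proposition~\ref{prop:pumping} to the morphism $w\mapsto\{\alpha(w)\}$ to obtain an optimal \fo-partition $\Kb_+$ of $A^+$, then use a Ramsey/Simon-factorization argument to show that every \iword $w$ admits a factorization $w=u_0u_1u_2\cdots$ in which $u_0$ lies in some class of $\Kb_+$ and all $u_i$ ($i\ge1$) lie in classes of $\Kb_+$ whose $\alpha$-images form one and the same idempotent subset of $S_+$; grouping \iwords according to the class of $u_0$ and to that idempotent image-set yields a \emph{finite} partition of $A^\infty$ whose blocks have the shape $K\cdot L^\infty$ with $K,L$ unions of classes of $\Kb_+$. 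Each such block is \fo-definable: the finite pieces are definable with rank at most $|A|2^{|S_+|^2}$ by Proposition~\ref{prop:pumping}, and a single additional existential quantifier locates the position where the periodic tail starts, whence the bound $|A|2^{|S_+|^2}+1$ of item~(2). The $\alpha$-image-set of a block $K\cdot L^\infty$ equals $U\cdot V^\infty$ where $U,V$ are the image-sets of $K,L$, and this lies in $\Sat_\infty(\alpha)$ by the mixed-product and infinite-iteration rules. The real obstacle is to make this construction robust enough to also yield \emph{optimality} of \Kb (so that, via the ω-analogue of item~\ref{item:cov2sep3} of Theorem~\ref{thm:seppart}, one can actually build a separator); I expect this to require an \iword-analogue of Proposition~\ref{prop:pumping} proved by the very same double induction on the index of \Ss and then on the alphabet size, with the tame / non-tame case split, the new wrinkle in each case being how the periodic tail of an \iword is partitioned. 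Once that lemma is in place the argument closes exactly as in Section~\ref{sec:comp}.
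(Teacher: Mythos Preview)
Your overall architecture---framework, fixpoint algorithm, soundness via an $\omega$-word analogue of Lemma~\ref{lem:optequ}, completeness by building an explicit optimal \fo-partition---matches the paper. Two remarks and one genuine gap.

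\textbf{On the algorithm.} Your infinite-iteration rule $T\mapsto(T^\omega\cup T^{\omega+1})^\infty$ is correct but needlessly complicated: the paper simply closes under $T\mapsto T^\infty$ for every $T\in\Sat(\alpha_+)$, together with mixed product and downset. The two closures coincide (apply your rule to $T^\omega$ to recover $T^\infty$; apply the paper's rule to $T^\omega\cup T^{\omega+1}$ to recover yours), but the simpler version makes soundness a one-line \efgame argument and makes $(\Sat(\alpha_+),\Sat_\infty(\alpha))$ transparently a \sisemi of $(2^{S_+},2^{S_\infty})$. You also need not seed $\Sat_\infty$ with the singletons $\{\alpha(w)\}$: every element of $S_\infty$ is of the form $se^\infty$, so these come for free by downset.

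\textbf{The genuine gap: the Ramsey construction does not give what you claim.} The issues are earlier than optimality.
\begin{itemize}
\item The coloring you use for Ramsey---``$u\mapsto$ image-set of the $\Kb_+$-class of $u$''---is \emph{not} a semigroup morphism, so Ramsey does not hand you a factorization with all $u_i$ in classes sharing \emph{one} idempotent subset of $S_+$. Ramsey applied to $\alpha$ only gives $\alpha(u_i)=e$ for a common idempotent $e\in S_+$; the $u_i$ may sit in different $\Kb_+$-classes.
\item Even granting such a factorization, the resulting blocks $K\cdot L^\infty$ overlap: the same \iword has many Ramsey factorizations, so you do not get a partition.
\item The claim that one extra quantifier defines $K\cdot L^\infty$ is false in general: for \fo-definable $L$, the \ilang $L^\infty$ need not be \fo-definable (e.g.\ $L=\{aa,b\}$, where membership in $L^\infty$ encodes parity of $a$-blocks). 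There is no ``position where the periodic tail starts'' to quantify.
\end{itemize}

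\textbf{What the paper does instead.} It skips Ramsey entirely and proves the $\omega$-analogue of Proposition~\ref{prop:pumping} directly by the same double induction (index, then alphabet size), with the tame / non-tame split. In the tame case $\Kb=\{B^\infty\}$ works because $\unclos{\Ss_+}\in\Sat(\Ss_+)$ and hence $\unclos{\Ss_\infty}=(\unclos{\Ss_+})^\infty\in\Sat_\infty(\Sat(\Ss_+))$. In the non-tame case one writes
\[
B^\infty \;=\; b^\infty \;\cup\; B^*Cb^\infty \;\cup\; C^\infty \;\cup\; B^*bC^\infty \;\cup\; C^*(b^+C^+)^\infty
\]
and partitions each piece separately, reusing Proposition~\ref{prop:pumping} for the finite prefixes and the induction hypothesis (on alphabet size for $C^\infty$, on index for $(b^+C^+)^\infty$ via the same alphabet-abstraction trick as Lemma~\ref{lem:partinf}) for the infinite tails. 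The $+1$ in the rank bound then comes from a single concatenation of a finite prefix with an infinite suffix (Fact~\ref{fct:foconcat} adapted to \iwords), not from locating any periodic tail. This is exactly the fallback you sketch in your last sentence; you should go there directly.
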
\smallskip

\noindent The proof of Theorem~\ref{th:main-i} is very similar to the one of
Theorem~\ref{th:main} and relies on the same objects: \fo-partitions
and optimal imprints (generalized to \iwords). In particular, this
means that our proof remains constructive: it yields an inductive way
to construct an actual separator, \emph{i.e.}, an \fo-partition of $A^\infty$
that is optimal for the input morphism $\alpha$, when it exists (a
rough analysis yields a $2$-{\sc Exptime} complexity in $|S_+|$).

\section{Separation Algorithm for Infinite Words}
\label{sec:omegasep}
In this section, we present our separation algorithm for first-order
logic over \iwords. As we explained, this algorithm is based on a
generalization of our finite words framework to \iwords. Therefore,
we divide this section in two parts. In the first part, we generalize
\fo-partitions, imprints and Theorem~\ref{thm:seppart} to
\iwords. Then, in the second part, we present our separation algorithm.

\subsection{Definition}

Recall that a finite alphabet $A$ is fixed. For the definitions, we
let \Cs as an arbitrary class consisting of languages and \ilangs (\emph{i.e.}, $\Cs
\subseteq 2^{A^+} \cup 2^{A^\infty}$). Moreover, we assume that
\begin{itemize}
\item when restricted to languages, \Cs is nonempty, closed
  under Boolean operations and quotients, and contains only regular languages,
\item when restricted to \ilangs, \Cs is nonempty and closed
  under Boolean operations.
\end{itemize}

One can verify that \fo satisfies these conditions. Note that since
\Cs is assumed to contain both languages and \ilangs, one can consider
two kinds of \Cs-partitions: \Cs-partitions of $A^+$ and
\Cs-partitions of $A^\infty$.

Set $\alpha: (A^+,A^\infty) \to (S_+,S_\infty)$ as an arbitrary
morphism into a finite \isemi $(S_+,S_\infty)$. Observe that any such
morphism $\alpha$ can be decomposed into two maps: a morphism
$\alpha_+: A^+ \to S_+$ into a finite semigroup $S_+$ and a map
$\alpha_\infty: A^\infty \to S_\infty$ into a finite set $S_\infty$.

Since $\alpha_+$ is a morphism, we may directly apply our
definition of imprints for finite words to it: if \Kb is a \Cs-partition of $A^+$, then
$\Is[\alpha_+](\Kb) \subseteq 2^{S_+}$ is well-defined. Similarly, by
hypothesis on \Cs, the optimal \Cs-partitions of $A^+$ for $\alpha_+$
are well-defined as those having the smallest possible imprint on
$\alpha_+$: $\Is_\Cs[\alpha_+]$. In particular, we know from
Lemma~\ref{lem:usemi} and our hypothesis on \Cs that $\Is_\Cs[\alpha_+]$
is a subsemigroup of $2^{S_+}$.

It turns out that aside from Lemma~\ref{lem:usemi}, these definitions
do not require $\alpha_+$ to be a semigroup morphism. Hence, they can
also be applied to $\alpha_\infty$. If \Kb is a \Cs-partition of $A^\infty$,
then the imprint of $\Kb$ on $\alpha_\infty$ is defined by,
\[
\Is[\alpha_\infty](\Kb)  = \{T \in 2^{S_\infty} \mid \text{there exists $K \in \Kb$
  such that $T \subseteq \alpha_\infty(K)$}\} \subseteq 2^{S_\infty}.
\]
Note that one can verify that imprints on $\alpha_\infty$ still verify
Fact~\ref{fct:trivial}  (\emph{i.e.}, for all $w \in A^\infty$,
$\{\alpha_\infty(w)\} \in \Is[\alpha_\infty](\Kb)$) and
Fact~\ref{fct:downclos} (\emph{i.e.}, $\Is[\alpha_\infty](\Kb)$ is closed
under downset). Finally, the optimal \Cs-partitions of $A^\infty$ for
$\alpha_\infty$ are defined as those having the smallest possible
imprint on $\alpha_\infty$: $\Is_\Cs[\alpha_\infty]$ (as before, we
need the fact that \Cs is closed under intersection to prove that
there exists at least one optimal \Cs-partition, see
Lemma~\ref{lem:optimal}).

\begin{remark}\label{rem:subisemi}
Note that in this case, since $S_\infty$ is not a semigroup, it is not
true that $\Is_\Cs[\alpha_\infty]$ is a semigroup. However, with
additional hypotheses on \Cs (which correspond to the usual
generalization of closure under quotients to classes of \ilangs), one
could prove that the pair $(\Is_\Cs[\alpha_+],\Is_\Cs[\alpha_\infty])$
is a \sisemi of $(2^{S_+},2^{S_\infty})$. We will prove this property
in the special case where $\Cs = \fo$.
\end{remark}

We can now generalize Theorem~\ref{thm:seppart} to \iwords.

\begin{theorem} \label{thm:sepparti}
Let $\alpha: (A^+,A^\infty) \to (S_+,S_ \infty)$ be a morphism into a
finite \isemi $(S_+,S_\infty)$. Let $L_1,L_2 \subseteq A^\infty$ be two
\ilangs recognized by $\alpha$ and let $T_1,T_2 \subseteq S_\infty$
be the corresponding accepting sets. The following properties are
equivalent:
\begin{enumerate}
\item\label{item:cov2sep1-inf} $L_1$ and $L_2$ are \Cs-separable.
\item\label{item:cov2sep2-inf} for all $t_1 \in T_1$ and all $t_2 \in T_2$, $\{t_1,t_2\}
  \not\in \Is_\Cs[\alpha_\infty]$.
\item\label{item:cov2sep3-inf} for any \Cs-partition \Kb of $A^\infty$
  that is optimal for $\alpha_\infty$, $L_1$ and $L_2$ are separable by a
  union of languages in $\Kb$.
\end{enumerate}\smallskip
\end{theorem}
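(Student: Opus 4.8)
The plan is to replay the proof of Theorem~\ref{thm:seppart} almost verbatim, the point being that none of the ingredients used there actually relies on $\alpha$ being a \emph{semigroup} morphism: everything goes through for the plain map $\alpha_\infty : A^\infty \to S_\infty$. First I would record the analogue of Lemma~\ref{lem:imprint} for $\alpha_\infty$: for any \Cs-partition \Kb of $A^\infty$ and any two \ilangs $L_1,L_2$ recognized by $\alpha$ with accepting sets $T_1,T_2\subseteq S_\infty$, the condition ``$\{t_1,t_2\}\notin\Is[\alpha_\infty](\Kb)$ for all $t_1\in T_1$, $t_2\in T_2$'' is equivalent to ``$L_1,L_2$ are separable by a union of languages in \Kb''. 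The proof is word-for-word that of Lemma~\ref{lem:imprint}: in one direction take $K=\bigcup\{K'\in\Kb\mid K'\cap L_1\neq\emptyset\}$ and use that \Kb is a partition; in the other direction read off from the definition of the imprint that no block of \Kb meets both \ilangs. In the same way, the analogue of Lemma~\ref{lem:optimal} holds: the common refinement of two \Cs-partitions of $A^\infty$ is again a \Cs-partition (here we use that \Cs restricted to \ilangs is closed under intersection), and since there are only finitely many imprints on $\alpha_\infty$, an optimal one exists --- this is exactly what makes $\Is_\Cs[\alpha_\infty]$ well-defined.

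With those two facts in hand I would close the cycle $\ref{item:cov2sep3-inf}\Rightarrow\ref{item:cov2sep1-inf}\Rightarrow\ref{item:cov2sep2-inf}\Rightarrow\ref{item:cov2sep3-inf}$. For $\ref{item:cov2sep3-inf}\Rightarrow\ref{item:cov2sep1-inf}$: an optimal \Cs-partition \Kb of $A^\infty$ exists by the previous paragraph, a union of its blocks separates $L_1$ from $L_2$ by hypothesis, and this union lies in \Cs since \Cs restricted to \ilangs is closed under union. For $\ref{item:cov2sep1-inf}\Rightarrow\ref{item:cov2sep2-inf}$: if $K\in\Cs$ separates $L_1$ from $L_2$, then $\{K,\,A^\infty\setminus K\}$ is a \Cs-partition of $A^\infty$ (closure under complement on \ilangs), no $\{t_1,t_2\}$ with $t_1\in T_1$, $t_2\in T_2$ lies in its imprint on $\alpha_\infty$, and since $\Is_\Cs[\alpha_\infty]$ is contained in every imprint by optimality, the same holds of $\Is_\Cs[\alpha_\infty]$. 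For $\ref{item:cov2sep2-inf}\Rightarrow\ref{item:cov2sep3-inf}$: take \Kb optimal for $\alpha_\infty$, so $\Is[\alpha_\infty](\Kb)=\Is_\Cs[\alpha_\infty]$ contains no forbidden pair, and apply the analogue of Lemma~\ref{lem:imprint} to get a union-of-\Kb separator.

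There is no genuine obstacle here. The only place the infinite setting could bite is that $S_\infty$ is not a semigroup, so Lemma~\ref{lem:usemi} fails and $\Is_\Cs[\alpha_\infty]$ need not be closed under multiplication --- but that property is simply never invoked in the proof of Theorem~\ref{thm:seppart}. The one thing worth stating carefully is the bookkeeping of which closure properties of \Cs are used and on which side: here only the \ilang side of \Cs enters, and only its closure under Boolean operations, which is precisely what the hypotheses on \Cs in this section guarantee; the quotient-closure and regularity assumptions on the language side play no role in this statement (they were needed only for Lemma~\ref{lem:usemi}).
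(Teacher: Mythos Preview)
Your proposal is correct and is exactly what the paper does: it states that the proof of Theorem~\ref{thm:sepparti} is identical to that of Theorem~\ref{thm:seppart}, and your replay of that argument via the $\alpha_\infty$-analogues of Lemma~\ref{lem:imprint} and Lemma~\ref{lem:optimal}, together with the observation that Lemma~\ref{lem:usemi} is never used, is precisely the intended justification.
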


\noindent The proof of Theorem~\ref{thm:sepparti} is identical to that of
Theorem~\ref{thm:seppart}. In view of the theorem, generalizing our
approach to \ilangs amounts to finding an algorithm that computes
$\Is_\Cs[\alpha_\infty]$ from a morphism $\alpha$ into a finite
\isemi. We now present such an algorithm.

\subsection{Separation Algorithm}

We can now generalize our separation algorithm to the setting of
\iwords. Let $\alpha: (A^+,A^\infty) \to (S_+,S_\infty)$ be a morphism
into a finite \isemi $(S_+,S_\infty)$. From now on, we only work with
the class \fo, therefore, we simply write
$(\Is[\alpha_+],\Is[\alpha_\infty])$ to denote the pair
$(\Is_\fo[\alpha_+],\Is_\fo[\alpha_\infty])$. We present an algorithm for
computing this pair.

We already know how to compute $\Is[\alpha_+]$ from
$\alpha_+$: $\Is[\alpha_+] = \Sat(\Ss_+)$ with ``$\Sat$''
defined in Section~\ref{sec:main} (see
Proposition~\ref{prop:algoworks}). It turns out that
$\Is[\alpha_\infty]$ can easily be computed from~$\Is[\alpha_+]$.
For $\Ss \subseteq 2^{S_+}$, let $\Sat_\infty(\Ss)$ be the smallest subset of $2^{S_\infty}$ closed
under the following operations:
\begin{enumerate}
\item\label{eq:ioper1} For any $T \in \Ss$, we have $T^\infty
\in \Sat_\infty(\Ss)$.
\item\label{eq:ioper2} For any $T \in \Ss$ and $T' \in
  \Sat_\infty(\Ss)$, we have $TT' \in \Sat_\infty(\Ss)$.
\item\label{eq:ioper3} $\Sat^\infty(\Ss)$ is closed under downset:
$\Sat_\infty(\Ss) = \downclos \Sat_\infty(\Ss)$.
\end{enumerate}

In other words, $\Sat_\infty(\Ss)$ is the smallest subset of $2^{S_\infty}$
that is closed under downset and such that $(\Ss,\Sat_\infty(\Ss))$ is a
\sisemi of $(2^{S_+},2^{S_\infty})$. This smallest subset of $2^{S_\infty}$
clearly exists. Finally, we set $\Sat_\infty(\alpha)$ as
$\Sat_\infty(\Is[\alpha_+])$.

\begin{proposition} \label{prop:algoworks-i}
Set $\alpha: (A^+,A^\infty) \to (S_+,S_\infty)$ as a morphism
into a finite \isemi $(S_+,S_\infty)$. Then,
\[
\Is[\alpha_\infty] = \Sat_\infty(\alpha).
\]
\end{proposition}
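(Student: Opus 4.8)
The plan is to prove the two inclusions $\Sat_\infty(\alpha)\subseteq\Is[\alpha_\infty]$ (soundness) and $\Is[\alpha_\infty]\subseteq\Sat_\infty(\alpha)$ (completeness) separately, following the template of Sections~\ref{sec:correc} and~\ref{sec:comp}. It is convenient to first record a closed form for $\Sat_\infty(\alpha)$: since $\Is[\alpha_+]=\Sat(\alpha_+)$ is a subsemigroup of $2^{S_+}$ (Proposition~\ref{prop:algoworks} and Lemma~\ref{lem:usemi}) and $T\cdot T^\infty=T^\infty$, unfolding Operations~\ref{eq:ioper1}--\ref{eq:ioper3} shows that $\Sat_\infty(\alpha)=\downclos\{R\cdot E^\infty\mid R,E\in\Is[\alpha_+]\}$. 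Besides this, the proof uses the \iword analogue of Lemma~\ref{lem:optequ} --- a set $T\subseteq S_\infty$ lies in $\Is[\alpha_\infty]$ iff for every $k$ some \foeq{k}-class of $A^\infty$ has $\alpha_\infty$-image containing $T$ (same proof, since \fo-partitions of $A^\infty$ are finite, hence coarsenings of \foeq{k}) --- together with the \iword analogue of Lemma~\ref{lem:efconcat}(1), namely that finite$\cdot$finite, finite$\cdot$infinite, and countable infinite concatenation are all compatible with \foeq{k}; this last is a routine \efgame argument.

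Soundness then reduces, by the closed form and closure of $\Is[\alpha_\infty]$ under downset (the analogue of Fact~\ref{fct:downclos}), to showing $R\cdot E^\infty\in\Is[\alpha_\infty]$ for all $R,E\in\Is[\alpha_+]$. Fix $k$. By Lemma~\ref{lem:optequ} applied to $\alpha_+$ there are \foeq{k}-classes $W_R,W_E$ of $A^+$ with $R\subseteq\alpha_+(W_R)$ and $E\subseteq\alpha_+(W_E)$; by concatenation compatibility all \iwords $u v_1 v_2\cdots$ with $u\in W_R$ and each $v_i\in W_E$ lie in a single \foeq{k}-class $V$ of $A^\infty$, and choosing the pieces with prescribed $\alpha_+$-images gives $R\cdot E^\infty\subseteq\alpha_\infty(V)$. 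The \iword analogue of Lemma~\ref{lem:optequ} then yields $R\cdot E^\infty\in\Is[\alpha_\infty]$, hence $\Sat_\infty(\alpha)\subseteq\Is[\alpha_\infty]$.

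For completeness I would build an \fo-partition $\Kb$ of $A^\infty$ with $\Is[\alpha_\infty](\Kb)\subseteq\Sat_\infty(\alpha)$, which suffices since $\Is[\alpha_\infty]\subseteq\Is[\alpha_\infty](\Kb)$. The key input is a consequence of the \emph{finite-word} result: letting $k_0=|A|2^{|S_+|^2}$, Proposition~\ref{prop:pumping} (applied to $\alpha_+$, and underlying Proposition~\ref{prop:algoworks}) produces an \fo-partition of $A^+$ with classes of quantifier rank at most $k_0$ and $\alpha_+$-images in $\Sat(\alpha_+)$; hence \foeq{k_0} refines it, so for every $k\ge k_0$ each \foeq{k}-class $W$ of $A^+$ satisfies $\alpha_+(W)\in\Sat(\alpha_+)=\Is[\alpha_+]$ (using Operation~\ref{eq:oper1}, i.e.\ that $\Sat(\alpha_+)$ is a downset). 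Fix such a $k$. By the infinite Ramsey theorem --- applied to the colouring of pairs of positions of an \iword by the \foeq{k}-type of the infix between them, which is additive by concatenation compatibility --- every \iword admits a factorisation $w=u v_1 v_2\cdots$ with $u$ in some \foeq{k}-class $W_0$ and all $v_i$ in a single, necessarily idempotent, \foeq{k}-class $W'$ (so $W'W'\subseteq W'$). Thus the finitely many \ilangs $W_0\cdot(W')^\infty$, where $W_0$ ranges over \foeq{k}-classes of $A^+$ and $W'$ over the idempotent ones, cover $A^\infty$; each of them is \fo-definable by the standard star-free $\omega$-iteration construction (essentially the content of Perrin's theorem), with quantifier rank bounded in terms of those of $W_0$ and $W'$, and a careful accounting with $k=k_0$ yields the bound $|A|2^{|S_+|^2}+1$ of Theorem~\ref{th:main-i}. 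Let $\Kb$ be the set of nonempty atoms of the finite Boolean algebra generated by these \ilangs; it is an \fo-partition of $A^\infty$, and every $V\in\Kb$ lies inside some $W_0\cdot(W')^\infty$, whence $\alpha_\infty(V)\subseteq\alpha_+(W_0)\cdot(\alpha_+(W'))^\infty=R\cdot E^\infty$ with $R=\alpha_+(W_0)$ and $E=\alpha_+(W')$ in $\Is[\alpha_+]$, so $\alpha_\infty(V)\in\Sat_\infty(\alpha)$. Therefore $\Is[\alpha_\infty](\Kb)=\downclos\{\alpha_\infty(V)\mid V\in\Kb\}\subseteq\Sat_\infty(\alpha)$, and since $\Kb$ is built explicitly this also yields the constructive separator promised in Theorem~\ref{th:main-i}.

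The main obstacle is the completeness direction, specifically the handling of \foeq{k}-types for \iwords: a single \foeq{k}-class of $A^\infty$ in general decomposes as a union of several pieces $W_0\cdot(W')^\infty$ rather than one, which is why the Boolean-algebra refinement is needed so that each class of $\Kb$ maps into a \emph{single} product $R\cdot E^\infty$; and one must establish that $W_0\cdot(W')^\infty$ is \fo-definable with controlled quantifier rank when $W_0,W'$ are \fo-definable and $W'$ is idempotent --- the \iword counterpart of Fact~\ref{fct:foconcat}, whose proof goes through the star-free characterisation of \fo over \iwords. Soundness, by contrast, is a direct \efgame bookkeeping exercise analogous to Section~\ref{sec:correc}.
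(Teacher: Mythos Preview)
Your soundness argument is essentially the paper's (Section~\ref{sec:corr-algor}): the same use of the $\omega$-word analogue of Lemma~\ref{lem:optequ} together with concatenation compatibility of~$\foeq{k}$.

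Your completeness argument, however, takes a genuinely different route. The paper (Proposition~\ref{prop:pumping-i}) re-runs the two-parameter induction of Proposition~\ref{prop:pumping} in the $\omega$-setting: tame/non-tame split on $\beta$, and in the non-tame case a decomposition of $B^\infty$ into five disjoint pieces~\eqref{eq:fivesets}, each handled by combining Proposition~\ref{prop:pumping} with the inductive hypothesis. This is self-contained and tracks the rank bound $|B|2^{|\unclos{\Ss_+}|^2}+1$ explicitly. You instead invoke the finite-word result once (so that every $\foeq{k_0}$-class $W$ of $A^+$ has $\alpha_+(W)\in\Is[\alpha_+]$), apply Ramsey to cover $A^\infty$ by the sets $W_0(W')^\infty$, and take Boolean atoms. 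This is conceptually cleaner and avoids re-running the induction over \iwords.

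The difficulty you flag --- that each $W_0(W')^\infty$ must be shown \fo-definable --- is however a genuine gap, not just a bookkeeping detail. You defer it to ``the star-free characterisation of \fo over \iwords,'' but that characterisation is exactly Perrin's theorem, which the paper derives as Corollary~\ref{cor:caraci} \emph{from} Proposition~\ref{prop:algoworks-i}; invoking it here is circular within the paper's development. An independent direct proof that $L^\infty$ is \fo-definable for idempotent \fo-definable $L$ does exist, but it is itself the substantive part of the hard direction of Perrin's theorem and is not supplied. Moreover, your asserted rank bound $|A|2^{|S_+|^2}+1$ would require $W_0(W')^\infty$ to be definable with rank at most $k_0+1$; this does not follow from the star-free route and would need a separate constructive argument, whereas the paper's induction gives it for free. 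In short, your approach trades the paper's $\omega$-word induction for an unproved lemma of comparable difficulty.
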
\smallskip

\noindent Since we already know how to compute $\Is[\alpha_+]$ in {\sc Exptime}
with respect to $|S_+|$ (see Proposition~\ref{prop:algoworks}), it
follows from Proposition~\ref{prop:algoworks-i} that one can compute
$\Is[\alpha_\infty]$ in {\sc Exptime} with respect to $|S_+|$ as well.
It then follows from Theorem~\ref{thm:sepparti} that this generalizes
our upper bound on the complexity of the separation problem to
\ilangs: one can decide in {\sc Exptime} whether two \ilangs are
\fo-separable. Therefore, we obtain the first item in
Theorem~\ref{th:main-i} as a corollary. We will obtain the second item
as a byproduct of the proof of Proposition~\ref{prop:algoworks-i}.

Another important remark is that it follows from
Proposition~\ref{prop:algoworks-i} that
$(\Is[\alpha_+],\Is[\alpha_\infty])$ is a \sisemi of
$(2^{S_+},2^{S_\infty})$. As explained in Remark~\ref{rem:subisemi},
this property is not specific to \fo. On the other hand, what is
specific to \fo is that $\Is[\alpha_\infty]$ is the {\bf smallest}
subset of $2^{S^\infty}$ that is closed under downset and such that
$(\Is[\alpha_+],\Is[\alpha_\infty])$ is a \sisemi of
$(2^{S_+},2^{S_\infty})$.

Finally, a consequence of
Proposition~\ref{prop:algoworks-i} is that we obtain Perrin's
theorem~\cite{pfo} as a corollary, just as we obtained Schützenberger's
one~\cite{sfo} as a corollary of Proposition~\ref{prop:algoworks}.

\begin{corollary} \label{cor:caraci}
Let $L$ be a regular \ilang. Then $L$ can be defined in \fo if and
only if its syntactic \isemi $(S_+,S_\infty)$ is such that $S_+$ is
aperiodic.
\end{corollary}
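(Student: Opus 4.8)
The plan is to mirror the proof of Corollary~\ref{cor:carac}, replacing the saturation $\Sat$ by $\Sat_\infty$ and Proposition~\ref{prop:algoworks} by Proposition~\ref{prop:algoworks-i}. Write $(S_+,S_\infty)$ for the syntactic \isemi of $L$ and $\alpha:(A^+,A^\infty)\to(S_+,S_\infty)$ for the associated (surjective) morphism, with components $\alpha_+:A^+\to S_+$ and $\alpha_\infty:A^\infty\to S_\infty$. I would establish the chain: $L$ is \fo-definable $\iff$ every \ilang recognized by $(S_+,S_\infty)$ is \fo-definable $\iff$ $\{\alpha_\infty^{-1}(s)\mid s\in S_\infty\}$ is an \fo-partition of $A^\infty$ $\iff$ $\Is[\alpha_\infty]=\{\{s\}\mid s\in S_\infty\}\cup\{\emptyset\}$ $\iff$ $\Sat_\infty(\Is[\alpha_+])=\{\{s\}\mid s\in S_\infty\}\cup\{\emptyset\}$ $\iff$ $\Sat(\alpha_+)=\{\{s\}\mid s\in S_+\}\cup\{\emptyset\}$ $\iff$ $s^\omega=s^{\omega+1}$ for all $s\in S_+$ $\iff$ $S_+$ is aperiodic. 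Here the second equivalence is clear since every recognized \ilang is a finite union of the sets $\alpha_\infty^{-1}(s)$; the fourth uses Proposition~\ref{prop:algoworks-i} together with $\Is[\alpha_+]=\Sat(\alpha_+)$ (Proposition~\ref{prop:algoworks}); and the penultimate one is immediate from the definition of $\Sat$, exactly as in Corollary~\ref{cor:carac}.

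For the remaining ``easy'' links I would argue as in the finite-word case. The first equivalence is the variety-type closure of \fo for \isemis: the syntactic \isemi of an \fo-definable \ilang lies in the pseudovariety of \isemis corresponding to \fo, and every \ilang recognized by such an \isemi is \fo-definable (see~\cite{Perrin&Pin:Infinite-Words:2004:a}); alternatively, an \fo-sentence of rank $k$ defining $L$ exhibits $L$ as a union of $\foeq{k}$-classes of \iwords, so---using that $\foeq{k}$ is an \isemi-congruence of finite index by \efgame arguments---the syntactic congruence of $L$ is coarser than $\foeq{k}$ and every syntactic class is \fo-definable. The third equivalence is the \iword analogue of the corresponding step of Corollary~\ref{cor:carac}: the imprint of $\{\alpha_\infty^{-1}(s)\mid s\in S_\infty\}$ on $\alpha_\infty$ is precisely $\{\{s\}\mid s\in S_\infty\}\cup\{\emptyset\}$ (by surjectivity of $\alpha_\infty$), so if this partition is an \fo-partition then $\Is[\alpha_\infty]$ lies in, hence equals, this set by the generalizations of Facts~\ref{fct:trivial} and~\ref{fct:downclos} to \iwords; conversely an optimal \fo-partition for $\alpha_\infty$ with this imprint must refine $\{\alpha_\infty^{-1}(s)\mid s\in S_\infty\}$, so each $\alpha_\infty^{-1}(s)$ is a union of \fo-definable languages. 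Finally, the easy half of the fifth equivalence---that $\Sat(\alpha_+)$ trivial implies $\Sat_\infty(\Is[\alpha_+])$ trivial---is a routine check: Operations~\ref{eq:ioper1}--\ref{eq:ioper3} defining $\Sat_\infty$ all preserve the property of containing only singletons and $\emptyset$ (since $\{s\}^\infty=\{s^\infty\}$, $\emptyset^\infty=\emptyset$, $\{s\}\cdot\{t\}=\{st\}$, and the downset of a singleton contains only a singleton and $\emptyset$), while $\Sat_\infty(\Is[\alpha_+])$ still contains every singleton $\{s\}$ with $s\in S_\infty$ and $\emptyset$, by Proposition~\ref{prop:algoworks-i} and the generalization of Fact~\ref{fct:trivial}.

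The main obstacle is the converse half of the fifth equivalence: from $S_+$ not aperiodic, exhibiting a non-singleton set in $\Sat_\infty(\Is[\alpha_+])$. Since $S_+$ is not aperiodic there is $g\in S_+$ with $g^\omega\neq g^{\omega+1}$; as $\alpha_+$ is surjective, $\{g\}\in\Sat(\alpha_+)$, and Operation~\ref{eq:oper} (\fo-Closure) yields the two-element set $R=\{g\}^\omega\cup\{g\}^{\omega+1}=\{g^\omega,g^{\omega+1}\}\in\Sat(\alpha_+)=\Is[\alpha_+]\subseteq 2^{S_+}$. The difficulty is that $R^\infty$ need not have two elements (a subgroup of $S_+$ may act trivially on the relevant element of $S_\infty$), so I would use that $(S_+,S_\infty)$ is the \emph{syntactic} \isemi of $L$: since $g^\omega\neq g^{\omega+1}$, these are separated by $L$ inside some \iword, meaning---by the description of the syntactic congruence of a regular \ilang restricted to finite words~\cite{Perrin&Pin:Infinite-Words:2004:a}---that there are $x\in S_+^1$ and either $r\in S_\infty$ with $x\,g^\omega r\neq x\,g^{\omega+1}r$, or $y\in S_+^1$ with $x(g^\omega y)^\infty\neq x(g^{\omega+1}y)^\infty$ (or the left-right mirror). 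In the first case, $\{x\}\cdot R\cdot\{r\}=\{x g^\omega r,\,x g^{\omega+1}r\}$ is a two-element subset of $S_\infty$ lying in $\Sat_\infty(\Is[\alpha_+])$, using $\{x\}\in\Is[\alpha_+]$, $R\in\Is[\alpha_+]$, $\{r\}\in\Sat_\infty(\Is[\alpha_+])$ (generalization of Fact~\ref{fct:trivial}), and closure under the mixed product (Operation~\ref{eq:ioper2}). In the second case, $R\cdot\{y\}=\{g^\omega y,\,g^{\omega+1}y\}\in\Is[\alpha_+]$ is a two-element set, $(R\cdot\{y\})^\infty\in\Sat_\infty(\Is[\alpha_+])$ contains the distinct elements $(g^\omega y)^\infty$ and $(g^{\omega+1}y)^\infty$ by Operation~\ref{eq:ioper1}, and multiplying by $\{x\}$ keeps two distinct elements. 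Either way $\Sat_\infty(\Is[\alpha_+])$ is non-trivial; contraposing, $\Sat_\infty(\Is[\alpha_+])$ trivial forces $\Is[\alpha_+]=\Sat(\alpha_+)$ trivial, which closes the chain. The delicate input here is the precise form of the syntactic congruence on finite words; if one prefers to avoid it, the implication ``$L$ \fo-definable $\Rightarrow$ $S_+$ aperiodic'' is already delivered directly by the \efgame argument of the previous paragraph, which makes that direction self-contained.
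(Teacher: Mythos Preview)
Your proof is correct, but you work harder than necessary at one point, and the paper's route is simpler there.

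The paper invokes a slightly stronger variety fact than you do: an \ilang $L$ is \fo-definable if and only if \emph{all languages and \ilangs} recognized by its syntactic \isemi are \fo-definable (not just the \ilangs). This gives at once that $L$ is \fo-definable iff \emph{both} $\Is[\alpha_+]$ and $\Is[\alpha_\infty]$ consist only of singletons and $\emptyset$. From there, the equivalence with aperiodicity of $S_+$ is immediate from Proposition~\ref{prop:algoworks} alone: $\Is[\alpha_+]$ trivial is equivalent to $s^\omega=s^{\omega+1}$ for all $s\in S_+$, and conversely, if $\Is[\alpha_+]$ is trivial then so is $\Sat_\infty(\Is[\alpha_+])=\Is[\alpha_\infty]$ by the easy direction of your fifth equivalence. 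Thus the paper never needs your ``main obstacle'', the implication $\Sat_\infty(\Is[\alpha_+])$ trivial $\Rightarrow$ $\Is[\alpha_+]$ trivial, nor the attendant analysis of the syntactic congruence on $S_+$.

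What your argument buys is self-containment on the $\infty$-side: you only use the (weaker) fact that all recognized \ilangs are \fo-definable, and then recover triviality of $\Is[\alpha_+]$ \emph{a posteriori} from triviality of $\Is[\alpha_\infty]$ via the syntactic congruence. That is a valid and instructive alternative, and your case analysis (mixed-product context versus $(\cdot)^\infty$ context) is exactly the right way to exploit that $(S_+,S_\infty)$ is syntactic. But if you are willing to quote the full variety statement for \isemis (as the paper does, citing~\cite{Perrin&Pin:Infinite-Words:2004:a}), the whole detour disappears and the proof reduces to two lines.
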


\begin{proof}
The proof is similar to that of Corollary~\ref{cor:carac}. It is
known that an \ilang is definable in \fo if and only if all languages
and \ilangs recognized by its syntactic \isemi are definable in \fo as
well (as before, this is actually not specific to \fo and true for all
classes of \ilangs that are ``Varieties'',
see~\cite{Perrin&Pin:Infinite-Words:2004:a} for example). It follows
that, if $\alpha: (A^+,A^\infty) \to (S_+,S_\infty)$ is the syntactic
\isemi of $L$, then $L$ is definable in \fo if and only if $\Is[\alpha_+]$
and $\Is[\alpha_\infty]$ contain only singletons and the empty set.
One can then verify from Proposition~\ref{prop:algoworks} that this is
is equivalent to $S_+$ satisfying, $s^\omega = s^{\omega+1}$ for all $s
\in S_+$.
\end{proof}

\noindent
It now remains to prove Proposition~\ref{prop:algoworks-i}. We present
this proof in the next section, Section~\ref{sec:comp-i}.

\section{Correctness of the Infinite Words Algorithm}
\label{sec:comp-i}
This section is devoted to the proof
Proposition~\ref{prop:algoworks-i}. We fix a morphism $\alpha:
(A^+,A^\infty) \to (S_+,S_\infty)$ into a finite \isemi
$(S_+,S_\infty)$ for the whole section. We have to prove that
$\Is[\alpha_\infty] = \Sat_\infty(\alpha)$. We separate the proof
in two parts, each one corresponding to an~inclusion.

\subsection{Soundness of the Algorithm}
\label{sec:corr-algor}
We begin with the easiest inclusion: $\Sat_\infty(\alpha) \subseteq
\Is[\alpha_\infty]$. This corresponds to soundness of the algorithm:
it only computes sets belonging to $\Is[\alpha_\infty]$. By
definition of $\Sat_\infty(\alpha)$, we need to prove that:
\begin{itemize}
\item For any $T \in \Is[\alpha_+]$, we have $T^\infty \in \Is[\alpha_\infty]$.
\item For any $T \in \Is[\alpha_+]$ and $T' \in \Is[\alpha_\infty]$, we have
  $TT' \in \Is[\alpha_\infty]$.
\item $\Is[\alpha_\infty]$ is closed under downset.
\end{itemize}

That $\Is[\alpha_\infty]$ is closed under downset is immediate from
the definition ($\Is[\alpha_\infty]$ is an imprint). We prove the two
other items. The proof relies on the generalization of the equivalence
$\foeq{k}$ to \iwords: given two \iwords $w,w' \in A^\infty$ and $k
\in \nat$, we write $w \foeq{k} w'$ to denote the fact that $w$ and
$w'$ satisfy the same formulas of quantifier rank $k$. One can verify
that Lemma~\ref{lem:optequ} still holds for \iwords.

\begin{lemma} \label{lem:optequi}
  Let $T \in 2^{S_\infty}$. Then $T \in
  \Is[\alpha_\infty]$ if and only if for all $k \in \nat$, there exists
  an equivalence class $W \subseteq A^\infty$ of \foeq{k} such that $T
  \subseteq \alpha(W)$.
\end{lemma}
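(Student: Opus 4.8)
The plan is to transcribe the proof of Lemma~\ref{lem:optequ} almost verbatim, replacing $A^+$, $\alpha$ and $2^S$ by $A^\infty$, $\alpha_\infty$ and $2^{S_\infty}$ throughout. The only point to settle beforehand is that the combinatorial facts about \foeq{k} used in the finite-word case carry over to \iwords: for every $k$, the relation \foeq{k} on $A^\infty$ is an equivalence of finite index and each of its classes is an \fo-definable \ilang. This is the usual \efgame fact, and its proof does not care whether the underlying word is finite or infinite --- it only uses that, up to logical equivalence, there are finitely many first-order formulas of quantifier rank at most $k$. Consequently, a finite partition of $A^\infty$ into \fo-definable \ilangs is exactly a partition obtained by grouping the classes of \foeq{k} for some $k$ large enough to bound the quantifier ranks of the formulas defining its blocks; in particular the partition $\Kb_k$ of $A^\infty$ into \foeq{k}-classes is itself a (finite) \fo-partition.

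For the ``if'' direction, suppose that for every $k\in\nat$ there is a class $W\subseteq A^\infty$ of \foeq{k} with $T\subseteq\alpha_\infty(W)$, and let \Kb be an arbitrary \fo-partition of $A^\infty$. Pick $k$ bounding the quantifier ranks of the formulas defining the blocks of \Kb, so that each $K\in\Kb$ is a union of \foeq{k}-classes. Applying the hypothesis to this $k$ gives a class $W$ of \foeq{k} with $T\subseteq\alpha_\infty(W)$, and $W$ is contained in some block $K\in\Kb$; hence $T\subseteq\alpha_\infty(W)\subseteq\alpha_\infty(K)$, that is $T\in\Is[\alpha_\infty](\Kb)$. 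Since this holds for every \fo-partition of $A^\infty$, it holds in particular for one that is optimal for $\alpha_\infty$ --- such a partition exists by the argument of Lemma~\ref{lem:optimal}, using that \fo is closed under intersection --- so $T\in\Is[\alpha_\infty]$.

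For the converse, suppose $T\in\Is[\alpha_\infty]$, that is $T\in\Is_\Cs[\alpha_\infty]$, and fix $k\in\nat$. The partition $\Kb_k$ of $A^\infty$ into \foeq{k}-classes is an \fo-partition, so $\Is[\alpha_\infty]\subseteq\Is[\alpha_\infty](\Kb_k)$ by optimality; hence $T\in\Is[\alpha_\infty](\Kb_k)$, which says precisely that some class $W$ of \foeq{k} satisfies $T\subseteq\alpha_\infty(W)$. This completes the equivalence.

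I do not expect a genuine obstacle here: the argument is structurally identical to the finite-word case, and the only nonroutine ingredient --- that \foeq{k} over \iwords has finite index with \fo-definable classes --- is a standard \efgame fact that the paper already relies on elsewhere.
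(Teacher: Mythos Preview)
Your proposal is correct and follows exactly the approach the paper intends: it does not spell out a proof of Lemma~\ref{lem:optequi} but simply states that one can verify Lemma~\ref{lem:optequ} still holds for \iwords, and your argument is precisely the verbatim transcription of that proof with the standard \efgame fact about \foeq{k} on $A^\infty$ supplying the one missing ingredient. There is nothing to add.
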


We can now finish the proof of soundness. Set $T \in \Is[\alpha_+]$
and $T' \in \Is[\alpha_\infty]$. We use Lemmas~\ref{lem:optequi} to
prove that $T^\infty \in \Is[\alpha_\infty]$ and $TT' \in
\Is[\alpha_\infty]$. Set $k \in \nat$.

By Lemmas~\ref{lem:optequ} and~\ref{lem:optequi}, we obtain an
equivalence class $W \subseteq A^+$ of \foeq{k} (over finite words)
and an equivalence class $W' \subseteq A^\infty$ of \foeq{k} (over
\iwords) such that $T \subseteq \alpha(W)$ and $T' \subseteq
\alpha(W')$. We know from Lemma~\ref{lem:optequi} that it suffices to
prove that $W^\infty$ and $WW'$ are included in equivalence classes of
\foeq{k} in order to conclude that $T^\infty \in \Is[\alpha_\infty]$ and
$TT' \in \Is[\alpha_\infty]$. This can be easily verified using a
generalization of the first item of Lemma~\ref{lem:efconcat} to
\iwords: for any $w \in W$ and $w' \in W'$, one can verify that
any \iword in $W^\infty$ is $\kfoeq$-equivalent to $w^\infty$ and that
any \iword in $WW'$ is $\kfoeq$-equivalent to $ww'$.

\subsection{Completeness of the Algorithm}
\label{sec:compl-algor}

We now turn to the most interesting inclusion in
Proposition~\ref{prop:algoworks-i}: $\Is[\alpha_\infty] \subseteq
\Sat_\infty(\alpha)$. The proof is a generalization of that of
Proposition~\ref{prop:pumping} to the setting of \iwords. In
particular, the proof remains constructive: we use induction to
construct an \fo-partition \Kb of $A^\infty$ whose imprint on
$\alpha_\infty$ is included in $\Sat_\infty(\alpha)$. This proves that
$\Is[\alpha_\infty] \subseteq \Is[\alpha_\infty](\Kb) \subseteq
\Sat_\infty(\alpha)$. The induction is stated in the following
proposition.

\begin{proposition} \label{prop:pumping-i}
Let $(\Ss_+,\Ss_\infty)$ be a \sisemi of\/ $(2^{S_+},2^{S_\infty})$ and let $\beta:
(B^+,B^\infty) \to  (\Ss_+,\Ss_\infty)$ be a surjective morphism. Then
there exists an \fo-partition \Kb of $B^\infty$ such that for all $K
\in \Kb$:

\begin{enumerate}
\item\label{it:c1-i} $\unclos{\beta(K)} \in \Sat_\infty(\Sat(\Ss_+))$.
\item\label{it:c2-i} $K$ can be defined by a first-order formula
  of rank at most $|B|\cdot2^{|\unclos{\Ss}|^2}+1$.
\end{enumerate}
\end{proposition}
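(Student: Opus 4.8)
The plan is to prove Proposition~\ref{prop:pumping-i} by an induction that mirrors the one used for Proposition~\ref{prop:pumping}, on the two parameters in order of importance: (a) the index $|\unclos{\Ss_+}|$ of $\Ss_+$, and (b) the size of $B$; here $\Ss_+,\Ss_\infty$ denote the two components of the given \sisemi and $\beta_+,\beta_\infty$ the two components of $\beta$. (Granting this, Proposition~\ref{prop:algoworks-i} follows just as in the finite case: apply Proposition~\ref{prop:pumping-i} to the \sisemi $\big(\{\{\alpha_+(w)\}\mid w\in A^+\},\{\{\alpha_\infty(w)\}\mid w\in A^\infty\}\big)$ and the morphism $w\mapsto\{\alpha(w)\}$; since $\Sat(\{\{\alpha_+(w)\}\})=\Sat(\alpha)=\Is[\alpha_+]$, this produces an \fo-partition $\Kb$ of $A^\infty$ with $\alpha_\infty(\Kb)\subseteq\Sat_\infty(\alpha)$, hence $\Is[\alpha_\infty]\subseteq\Is[\alpha_\infty](\Kb)\subseteq\Sat_\infty(\alpha)$, and reading off the rank yields the second item of Theorem~\ref{th:main-i}.) There are two base cases. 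If $|B|=1$, then $B^\infty$ is the single \iword $b^\infty$, so $\Kb=\{B^\infty\}$ works: it is defined by a rank-$0$ formula, and $\unclos{\beta(b^\infty)}=\beta_+(b)^\infty\in\Sat_\infty(\Sat(\Ss_+))$ by Operation~\ref{eq:ioper1}, since $\beta_+(b)\in\Ss_+\subseteq\Sat(\Ss_+)$. If $\beta_+$ is \emph{tame}, then by Lemma~\ref{lem:basecase} applied to $\beta_+$ and the argument of Case~1 in the proof of Proposition~\ref{prop:pumping} we get $\unclos{\Ss_+}\in\Sat(\Ss_+)$; since $\beta$ is surjective, $\unclos{\beta(B^\infty)}=\unclos{\Ss_\infty}$ consists of infinite products of elements of $\Ss_+$, hence is contained in $(\unclos{\Ss_+})^\infty\in\Sat_\infty(\Sat(\Ss_+))$, and $\Kb=\{B^\infty\}$ works again by Operation~\ref{eq:ioper3}.

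Assume now $|B|\ge 2$ and $\beta_+$ not tame, and fix $b\in B$ witnessing this, as in Case~2 of the proof of Proposition~\ref{prop:pumping}; set $C=B\setminus\{b\}$. Each \iword $w\in B^\infty$ admits a \emph{canonical} decomposition: if $w$ has finitely many occurrences of $b$, then $w=z\,w'$ with $z\in B^*$ ending with $b$ (or $z=\varepsilon$) and $w'\in C^\infty$; if $w$ has infinitely many, then $w=u_0\cdot B_1 B_2 B_3\cdots$ with $u_0\in C^*$ and $B_1,B_2,\dots\in b^+C^+$ (the maximal $b$-run followed by the maximal $C$-run). Thus $B^\infty=B^*C^\infty\uplus C^*(b^+C^+)^\infty$, and $\Kb$ is built by partitioning each side compatibly with these decompositions. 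For the prefixes $z\in B^*$ we apply Proposition~\ref{prop:pumping} to $\beta_+$ on $B^+$, intersect each piece with ``ends with $b$'', and add the singleton $\{\varepsilon\}$; for the tails $w'\in C^\infty$ we apply the induction hypothesis to the restriction of $\beta$ to $(C^+,C^\infty)$ (legitimate: the index did not grow and $|C|<|B|$); for the prefixes $u_0\in C^*$ we apply Proposition~\ref{prop:pumping} to $\beta_+$ on $C^+$ and add $\{\varepsilon\}$. The only delicate ingredient is the partition of the infix part $(b^+C^+)^\infty$.

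For the infixes we replay Steps~1--3 of the proof of Lemma~\ref{lem:partinf} at the level of \iwords. Using the \fo-partition of $b^+$ (the $|B|=1$ base case applied to $\beta_+$ on $b^+$) and the \fo-partition of $C^+$ just obtained, we abstract $b^+C^+$ as the finite alphabet $\Gamma=\{\,\unclos{\beta_+(HL)}\mid H,L\text{ in the respective partitions}\,\}\subseteq 2^{S_+}$ and $(b^+C^+)^\infty$ as the set of \iwords over $\Gamma$; this turns the restriction of $\beta$ to $\big((b^+C^+)^+,(b^+C^+)^\infty\big)$ into a surjective morphism $\gamma:(\Gamma^+,\Gamma^\infty)\to(\Ts_+,\Ts_\infty)$, where $(\Ts_+,\Ts_\infty)$ is the \sisemi generated by $\Gamma$. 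Exactly as in Fact~\ref{fct:indexdown}, non-tameness of $\beta_+$ at $b$ gives $\unclos{\Ts_+}\subseteq\unclos{\beta_+(b^+)}\cdot\unclos{\Ss_+}\subseteq\beta_+(b)\cdot\unclos{\Ss_+}\subsetneq\unclos{\Ss_+}$, so $\Ts_+$ has strictly smaller index and the induction hypothesis applies to $\gamma$, producing an \fo-partition $\Fb$ of $\Gamma^\infty$ with $\unclos{\gamma(F)}\in\Sat_\infty(\Sat(\Ts_+))$ and controlled rank. Pulling back each $F\in\Fb$ along the abstraction map $w\mapsto(w)_\Gamma$ and combining with the $C^*$-prefix partition yields the pieces $K$ of $(b^+C^+)^\infty$. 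For each resulting piece $K$ (on either side of the decomposition), Item~\ref{it:c1-i} reduces to a chain $\unclos{\beta(K)}\subseteq\unclos{\beta(M_0)}\cdot\unclos{\gamma(F)}$ (the \iword analogue of Fact~\ref{fct:linkmorph}, with an optional finite prefix $M_0$), where $\unclos{\beta(M_0)}\in\Sat(\Ss_+)$ and $\unclos{\gamma(F)}\in\Sat_\infty(\Sat(\Ts_+))\subseteq\Sat_\infty(\Sat(\Ss_+))$ --- using $\Ts_+\subseteq\Sat(\Ss_+)$ (each $\unclos{\beta_+(HL)}=\unclos{\beta_+(H)}\cdot\unclos{\beta_+(L)}\in\Sat(\Ss_+)$) and monotonicity of $\Sat$ and $\Sat_\infty$ in their argument; so $\unclos{\beta(K)}\in\Sat_\infty(\Sat(\Ss_+))$ by Operations~\ref{eq:ioper2} and~\ref{eq:ioper3}. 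The $B^*C^\infty$-side pieces are handled in the same way, using in addition Fact~\ref{fct:foconcat} together with its \iword analogue (concatenating a finite-word language to an \ilang costs one extra quantifier) and the closure of $\Sat_\infty(\Sat(\Ss_+))$ under left multiplication by elements of $\Sat(\Ss_+)$ (Operation~\ref{eq:ioper2}).

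Finally, Item~\ref{it:c2-i} is obtained by the bookkeeping at the end of the proof of Lemma~\ref{lem:partinf}. Each piece is described by relativizing the induction-produced formula for the relevant $F\in\Fb$ (of rank at most $|\Gamma|\cdot 2^{|\unclos{\Ts_+}|^2}+1$) to the distinguished positions --- those labeled $b$ whose successor is labeled by a letter of $C$ --- and replacing its atoms by the block-testing formulas provided by the \iword analogue of Fact~\ref{fct:defalpha} (of rank at most $(|B|-1)\cdot 2^{|\unclos{\Ss_+}|^2}$), then conjoining a bounded-rank formula fixing the global shape (finitely vs.\ infinitely many $b$'s) and the behaviour of the canonical prefix; using $|\Gamma|\le 2^{|\unclos{\Ts_+}|}$ and $1\le|\unclos{\Ts_+}|\le|\unclos{\Ss_+}|-1$ (the upper bound holding because $|\unclos{\Ss_+}|\ge 2$ here, as otherwise $\beta_+$ would be tame), the total rank stays within $|B|\cdot 2^{|\unclos{\Ss_+}|^2}+1$, and the ``$+1$'' introduced by the finite-word$\,\cdot\,$\iword concatenation in the finitely-many-$b$ case is carried through without ever being doubled. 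I expect this last point to be the main obstacle: the single extra quantifier must be routed carefully, and it is precisely to make the arithmetic close that the infixes are abstracted through $\Gamma$ (so that the exponent in the rank of the formulas for $\Fb$ is governed by $|\unclos{\Ts_+}|\le|\unclos{\Ss_+}|-1$) rather than through the raw partitions of $b^+$ and $C^+$, whose sizes are not tracked.
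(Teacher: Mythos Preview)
Your overall strategy coincides with the paper's, but there is a genuine gap in the non-tame case: the claimed decomposition $B^\infty = B^*C^\infty \uplus C^*(b^+C^+)^\infty$ is not a partition of $B^\infty$. Your ``canonical decomposition'' asserts that every $w$ with infinitely many $b$'s factors as $u_0 B_1 B_2 \cdots$ with $u_0\in C^*$ and each $B_i\in b^+C^+$; this fails exactly when $w$ has infinitely many $b$'s but only finitely many letters from $C$, i.e., when $w$ has a $b^\infty$ tail. Concretely, $b^\infty$ and $cb^\infty$ (for any $c\in C$) belong to neither of your two pieces. The paper handles this with a five-term disjoint union
\[
B^\infty \;=\; b^\infty \ \cup\ B^*Cb^\infty \ \cup\ C^\infty \ \cup\ B^*bC^\infty \ \cup\ C^*(b^+C^+)^\infty,
\]
the first two terms covering precisely the \iwords you missed.

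The repair is straightforward and fits your template. For $\{b^\infty\}$ take the trivial one-element partition: $\unclos{\beta(b^\infty)}=\beta_+(b)^\infty\in\Sat_\infty(\Sat(\Ss_+))$ by Operation~\ref{eq:ioper1}. For $B^*Cb^\infty$, concatenate a partition of $B^*C$ (obtained from Proposition~\ref{prop:pumping} applied to $\beta_+$ on $B^+$, intersected with ``last letter is in $C$'', plus closure under downset for Item~\ref{it:c1-i}) with the singleton $\{b^\infty\}$; Item~\ref{it:c1-i} then follows from Operation~\ref{eq:ioper2}, and the rank bound from Fact~\ref{fct:foconcati}. Once these two cases are added, the remainder of your argument --- the tame case, the abstraction of $(b^+C^+)^\infty$ over the alphabet $\Gamma$, the induction on the index via Fact~\ref{fct:indexdown}, and the rank bookkeeping --- matches the paper's proof.
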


Let us first use Proposition~\ref{prop:pumping-i} to conclude the
proof of Proposition~\ref{prop:algoworks-i}. Set $\Ss_+ =
\{\{\alpha(w)\} \mid w \in A^+\}$, $\Ss_\infty = \{\{\alpha(w)\} \mid
w \in A^\infty\}$ and $\beta: (B^+,B^\infty) \to  (\Ss_+,\Ss_\infty)$
defined by $\beta(w) = \{\alpha(w)\}$ for $w \in A^+ \cup A^\infty$
(note that $\beta$ is surjective). Recall that we already know from
Proposition~\ref{prop:algoworks} that $\Is[\alpha_+] =
\Sat(\Ss_+)$. Therefore, by definition, $\Sat_\infty(\alpha) =
\Sat_\infty(\Sat(\Ss_+))$. From Proposition~\ref{prop:pumping-i}, we
obtain an \fo-partition \Kb of $A^\infty$ such that  for all $K \in
\Kb$,
\begin{enumerate}
\item $\alpha(K) = \unclos{\beta(K)} \in \Sat_\infty(\alpha)$.
\item any $K \in \Kb$ can be defined by a first-order formula
  of rank at most $|A|\cdot 2^{|S_+|^2}+1$.
\end{enumerate}

It is now immediate from Item~1 and the fact that
$\Sat_\infty(\alpha)$ is closed under downset that
$\Is[\alpha_\infty](\Kb) \subseteq \Sat_\infty(\alpha)$. We conclude
that $\Is[\alpha_\infty] \subseteq \Is[\alpha_\infty](\Kb) \subseteq
\Sat_\infty(\alpha)$ which terminates the proof of
Proposition~\ref{prop:algoworks-i}. Moreover, since we already know
that $\Sat_\infty(\alpha) \subseteq \Is[\alpha_\infty]$, we actually
have $\Is[\alpha_\infty] = \Is[\alpha_\infty](\Kb)$: $\Kb$ is optimal
for $\alpha_\infty$. Therefore, we obtain the second item in
Theorem~\ref{th:main-i} from Item~\ref{it:c2-i} of Proposition~\ref{prop:pumping-i}.

\begin{corollary}[Second item in Theorem~\ref{th:main-i}]
Given two \ilangs $L_0$ and $L_1$ that are recognized by $\alpha$,
if they are \fo-separable, then one can construct an actual separator with
a formula of quantifier rank at most $|A|2^{|S_+|^2}+1$.
\end{corollary}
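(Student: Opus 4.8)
The plan is to obtain this corollary as an immediate consequence of the constructive completeness argument for Proposition~\ref{prop:algoworks-i}, exactly as its finite-word analogue was derived from Proposition~\ref{prop:pumping}. The key observation is that that argument does not merely prove an inclusion: it exhibits a concrete \fo-partition \Kb of $A^\infty$ which is \emph{optimal} for $\alpha_\infty$ and all of whose blocks are definable by \fo formulas of quantifier rank at most $|A|2^{|S_+|^2}+1$.

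First I would recall how \Kb arises. Applying Proposition~\ref{prop:pumping-i} to $\Ss_+=\{\{\alpha(w)\}\mid w\in A^+\}$, $\Ss_\infty=\{\{\alpha(w)\}\mid w\in A^\infty\}$ and $\beta(w)=\{\alpha(w)\}$, one gets an \fo-partition \Kb of $A^\infty$ with $\Is[\alpha_\infty](\Kb)\subseteq\Sat_\infty(\alpha)$ and, by Item~\ref{it:c2-i}, every $K\in\Kb$ defined by a formula of rank at most $|A|2^{|S_+|^2}+1$. Combining this with the soundness inclusion $\Sat_\infty(\alpha)\subseteq\Is[\alpha_\infty]$ proved in Section~\ref{sec:corr-algor} and the general fact $\Is[\alpha_\infty]\subseteq\Is[\alpha_\infty](\Kb)$, we obtain $\Is[\alpha_\infty](\Kb)=\Is[\alpha_\infty]$, i.e.\ \Kb is optimal for $\alpha_\infty$. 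Since the proof of Proposition~\ref{prop:pumping-i} proceeds by an explicit induction, both \Kb and the formulas defining its blocks are effectively computable from $\alpha$.

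Then I would invoke Theorem~\ref{thm:sepparti}. Assume $L_0$ and $L_1$ are \fo-separable. Applied to the optimal partition \Kb, condition~\ref{item:cov2sep3-inf} of that theorem yields a separator $L=\bigcup_{K\in\Kb'}K$ for some subfamily $\Kb'\subseteq\Kb$. As \Kb is finite, this is a \emph{finite} disjunction of \fo formulas, each of rank at most $|A|2^{|S_+|^2}+1$; and the disjunction of finitely many formulas of rank at most $r$ is again a formula of rank at most $r$. Hence $L$ is defined by a formula of the announced quantifier rank. Effectiveness follows by enumerating the finitely many subfamilies $\Kb'$ of \Kb and testing $L_0\subseteq L$ and $L_1\cap L=\emptyset$ through the recognizing morphism $\alpha$; all languages involved are regular, so these tests are decidable.

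I do not expect a genuine obstacle here: all the work — in particular the simultaneous control of optimality of \Kb and of the quantifier rank of its blocks — has already been done inside Proposition~\ref{prop:pumping-i}, and the present statement is essentially a one-line deduction from it together with Theorem~\ref{thm:sepparti}. The only point to state carefully is that it is the \emph{same} partition \Kb which witnesses optimality and which has rank-bounded blocks, so that the separator extracted from Theorem~\ref{thm:sepparti} automatically inherits the bound.
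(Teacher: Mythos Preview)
Your proposal is correct and follows essentially the same approach as the paper: apply Proposition~\ref{prop:pumping-i} to the singleton morphism $\beta(w)=\{\alpha(w)\}$ to obtain an \fo-partition \Kb of $A^\infty$ with rank-bounded blocks, combine with soundness to conclude that \Kb is optimal for $\alpha_\infty$, and then invoke Theorem~\ref{thm:sepparti} to extract a separator as a finite union of blocks. Your write-up is in fact slightly more explicit than the paper's, which leaves the appeal to Theorem~\ref{thm:sepparti} and the observation that disjunction does not increase quantifier rank implicit.
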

\enlargethispage{\baselineskip}

It remains to prove Proposition~\ref{prop:pumping-i}. We generalize
the techniques we used to prove Proposition~\ref{prop:pumping}. Note
that in several cases, the construction will require building an
\fo-partition of $B^+$ (or of a subset of $B^+$). In this cases, we
will simply use Proposition~\ref{prop:pumping}. As for
Proposition~\ref{prop:pumping}, we construct $\Kb$ by induction on
the following two parameters listed by order of importance:
\begin{enumerate}[label=$(\alph*)$]
\item the index $|\unclos{\Ss_+}|$ of $\Ss_+$,
\item the size of $B$.
\end{enumerate}

Observe that the case $|B| = 1$ is trivial in this setting: in that
case $B^\infty$ is a singleton. We now assume that $|B| > 1$
and distinguish two subcases, depending on whether the restriction of
$\beta$ to $B^+$ is \emph{tame}. Recall that we say that $\beta$ is
\emph{tame} if for all $b \in B$, $\unclos{\Ss_+} = \beta(b) \cdot
\unclos{\Ss_+}$ and  $\unclos{\Ss_+} = \unclos{\Ss_+} \cdot \beta(b)$.

\medskip
\noindent
{\bf Case 1: $\beta$ is tame.} As we have seen in the proof of
Proposition~\ref{prop:pumping}, in that case we have $\unclos{\Ss_+}
\in \Sat(\Ss_+)$. By surjectivity of $\beta$ it is immediate that
$(\unclos{\Ss_+})^\infty = \unclos{\Ss_\infty}$. Therefore, by Item~\ref{eq:ioper1}
in the definition of $\Sat_\infty$, $\unclos{\Ss_\infty} \in
\Sat_\infty(\Sat(\Ss_+))$. It is therefore sufficient to set $\Kb =
\{B^\infty\}$ to satisfy Item~\ref{it:c1-i} and Item~\ref{it:c2-i}
in the proposition.

\medskip
\noindent
{\bf Case 2:  $\beta$ is not tame.}  By
hypothesis on $\beta$, there exists $b \in B$ such that $\unclos{\Ss_+}
\neq \beta(b) \cdot \unclos{\Ss_+}$ or $\unclos{\Ss_+} \neq
\unclos{\Ss_+} \cdot \beta(b)$. By symmetry, we assume the former,
\emph{i.e.}, $\unclos{\Ss_+} \neq \beta(b) \cdot \unclos{\Ss_+}$. We
set $b$ as this letter for the remainder of the proof.

Recall that we have to construct an \fo-partition $\Kb$ of $B^\infty$
satisfying Items~\ref{it:c1-i} and~\ref{it:c2-i} in
Proposition~\ref{prop:pumping-i}. Set $C = B \setminus \{b\}$ and
observe that $B^\infty$ is the (disjoint) union of the following five
sets:
\begin{equation} \label{eq:fivesets}
B^\infty = b^\infty \cup B^*Cb^\infty \cup C^\infty
\cup B^*bC^\infty
\cup C^*(b^+C^+)^\infty.
\end{equation}
Therefore, it suffices to find \fo-partitions satisfying
Items~\ref{it:c1-i} and~\ref{it:c2-i} for all five sets to obtain
the desired partition of $B^\infty$. These partitions are defined from
\fo-partitions of $C^+$ and $B^+$ (obtained from
Proposition~\ref{prop:pumping}), of $b^\infty$ and $C^\infty$
(obtained by induction on $|B|$ in Proposition~\ref{prop:pumping-i})
and of $C^*(b^+C^+)^\infty$ (obtained by induction on the index of
$\Ss_+$ in Proposition~\ref{prop:pumping-i}). Since the construction
is similar for all five sets, we only detail the case of
$C^*(b^+C^+)^\infty$ (other cases are handled similarly). The
construction is based on the following two lemmas.

\begin{lemma}[Partition of $C^+$] \label{lem:partcp}
There exists a finite partition $\Lb$ of $B^+$ such that for any
language $L \in \Lb$:
\begin{enumerate}
\item $\unclos{\beta(L)} \in \Sat(\Ss_+)$.
\item there exists a first-order formula of
  rank at most $|C| \cdot 2^{|\unclos{\Ss_+}|^2}$ that defines $L$.
\end{enumerate}
\end{lemma}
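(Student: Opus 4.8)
This is the infinite-words counterpart of Lemma~\ref{lem:partpref}, and like that lemma it will not rely on the induction driving Proposition~\ref{prop:pumping-i}: it follows directly from Proposition~\ref{prop:pumping}, which is already established. The plan is to apply Proposition~\ref{prop:pumping} not to $\beta$ itself but to the restriction of its finite-word part to the smaller alphabet $C = B \setminus \{b\}$.

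First I would set $\Rs = \beta(C^+)$, a subsemigroup of $\Ss_+$ (hence of $2^{S_+}$), so that $\beta$ restricts to a \emph{surjective} semigroup morphism $C^+ \to \Rs$. Applying Proposition~\ref{prop:pumping} with ambient finite semigroup $S_+$, alphabet $C$ and target semigroup $\Rs$ then yields an \fo-partition $\Lb$ of $C^+$ such that, for each $L \in \Lb$, one has $\unclos{\beta(L)} \in \Sat(\Rs)$ and $L$ is definable by a first-order formula of rank at most $|C|\cdot 2^{|\unclos{\Rs}|^2}$.

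It then remains to relax these two guarantees to the statement of the lemma. Since $\Rs \subseteq \Ss_+$ we have $\unclos{\Rs} \subseteq \unclos{\Ss_+}$, hence $|\unclos{\Rs}| \le |\unclos{\Ss_+}|$ and the rank bound improves to $|C|\cdot 2^{|\unclos{\Ss_+}|^2}$, which is the second item. For the first item I would use that $\Sat$ is a closure operator, \emph{i.e.}\ monotone and idempotent (both immediate from its definition as a least fixpoint under a fixed set of closure rules): from $\Rs \subseteq \Ss_+ \subseteq \Sat(\Ss_+)$ we get $\Sat(\Rs) \subseteq \Sat(\Sat(\Ss_+)) = \Sat(\Ss_+)$, so $\unclos{\beta(L)} \in \Sat(\Ss_+)$, which is the first item.

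I do not expect a genuine obstacle here. The two points that need a line of care are purely bookkeeping: first, that Proposition~\ref{prop:pumping} may legitimately be invoked with $S_+$ in place of the global semigroup $S$ of Section~\ref{sec:comp} --- this is harmless since its statement is uniform in the ambient finite semigroup, and it is exactly the reuse announced at the start of this section; and second, the elementary monotonicity and idempotence of $\Sat$ used in the last step.
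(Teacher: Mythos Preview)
Your proposal is correct and follows exactly the paper's approach: the paper's entire proof is the single sentence ``Lemma~\ref{lem:partcp} is obtained by applying Proposition~\ref{prop:pumping} to the restriction of $\beta$ to $C^+$,'' and your plan unpacks precisely this, together with the obvious monotonicity of $\Sat$ and of the rank bound. (Note that the body of the lemma contains a typo---it should read ``partition $\Lb$ of $C^+$'' rather than $B^+$, as the title, the rank bound $|C|\cdot 2^{|\unclos{\Ss_+}|^2}$, and the subsequent use of $\Lb$ all make clear; you read it correctly.)
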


\begin{lemma}[Partition of $(b^+C^+)^\infty$] \label{lem:partbc}
There exists a finite partition $\Kb'$ of $(b^+C^+)^\infty$ such
that for any language $K \in \Kb'$:
\begin{enumerate}
\item $\unclos{\beta(K)} \in \Sat_\infty(\Sat(\Ss_+))$.
\item there exists a first-order formula of
  rank at most $|B| \cdot 2^{|\unclos{\Ss_+}|^2}$ that defines $K$.
\end{enumerate}
\end{lemma}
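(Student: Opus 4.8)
The plan is to replay, in the infinite-word setting, the three-step proof of Lemma~\ref{lem:partinf} (the finite-word ``partition of the infixes''), feeding the induction through Proposition~\ref{prop:pumping-i} instead of Proposition~\ref{prop:pumping} and taking a little care with the infinite product.

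\textbf{Step 1: abstraction of $(b^+C^+)^\infty$.} Put $C = B\setminus\{b\}$. Applying Proposition~\ref{prop:pumping} to the restrictions of $\beta$ to $b^+$ (this is the $|B|=1$ case) and to $C^+$ --- both landing in subsemigroups of $2^{S_+}$ of index at most $|\unclos{\Ss_+}|$ --- I obtain \fo-partitions $\Hb$ of $b^+$ and $\Lb$ of $C^+$ such that $\unclos{\beta(H)},\unclos{\beta(L)}\in\Sat(\Ss_+)$ for $H\in\Hb$, $L\in\Lb$, with ranks bounded by $2^{|\unclos{\Ss_+}|^2}$ and $(|B|-1)2^{|\unclos{\Ss_+}|^2}$ respectively. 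As in Lemma~\ref{lem:partinf} I set $\frB=\{\unclos{\beta(HL)}\mid H\in\Hb,\ L\in\Lb\}\subseteq 2^{S_+}$, which (via Fact~\ref{fct:foconcat}) induces an \fo-partition of $b^+C^+$; since every \iword of $(b^+C^+)^\infty$ decomposes \emph{uniquely} as an infinite concatenation $w=w_1w_2\cdots$ of blocks $w_i\in b^+C^+$ (its maximal $b$-runs and maximal $C$-runs are all finite and alternate), the map $(w)_\frB=(w_1)_\frB(w_2)_\frB\cdots\in\frB^\infty$ is well defined. Let $\Ts_+$ be the subsemigroup of $2^{S_+}$ generated by $\frB$, let $(\Ts_+,\Ts_\infty)$ be the \sisemi it generates in $(2^{S_+},2^{S_\infty})$, and let $\gamma:(\frB^+,\frB^\infty)\to(\Ts_+,\Ts_\infty)$ evaluate products. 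The analogue of Fact~\ref{fct:linkmorph}, namely $\beta(w)\subseteq\gamma((w)_\frB)$ for $w\in(b^+C^+)^\infty$, is immediate: $\beta(w_i)\in\beta(H_iL_i)$ so $\beta(w_i)\subseteq(w_i)_\frB$ block by block, and the mixed and infinite products of $(2^{S_+},2^{S_\infty})$ are monotone.

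\textbf{Step 2: a partition of $\frB^\infty$ by induction.} The argument of Fact~\ref{fct:indexdown} carries over verbatim: $\unclos{\Ts_+}\subseteq\unclos{\beta(b^+)}\cdot\unclos{\Ss_+}\subseteq\beta(b)\cdot\unclos{\Ss_+}\subsetneq\unclos{\Ss_+}$, the strict inclusion being our (non-tameness) choice of $b$. Hence $\gamma$ has strictly smaller first induction parameter than $\beta$, and the inductive application of Proposition~\ref{prop:pumping-i} to $\gamma$ gives an \fo-partition $\Fb$ of $\frB^\infty$ with, for each $F\in\Fb$, $\unclos{\gamma(F)}\in\Sat_\infty(\Sat(\Ts_+))$ and a defining formula of rank at most $|\frB|\cdot2^{|\unclos{\Ts_+}|^2}+1$. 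Moreover each element $\unclos{\beta(HL)}=\unclos{\beta(H)}\cdot\unclos{\beta(L)}$ of $\frB$ lies in $\Sat(\Ss_+)$ by Fact~\ref{fct:uprod} and closure under multiplication, so $\Ts_+\subseteq\Sat(\Ss_+)$, whence $\Sat(\Ts_+)\subseteq\Sat(\Ss_+)$ and $\Sat_\infty(\Sat(\Ts_+))\subseteq\Sat_\infty(\Sat(\Ss_+))$ by monotonicity of $\Sat_\infty$.

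\textbf{Step 3: pulling the partition back.} For $F\in\Fb$ set $K_F=\{w\in(b^+C^+)^\infty\mid(w)_\frB\in F\}$ and $\Kb'=\{K_F\mid F\in\Fb\}$; this is a partition of $(b^+C^+)^\infty$. Item~1 is then immediate since $\unclos{\beta(K_F)}\subseteq\unclos{\gamma(F)}\in\Sat_\infty(\Sat(\Ts_+))\subseteq\Sat_\infty(\Sat(\Ss_+))$ and the latter set is closed under downset. For Item~2 I reuse the construction of Lemma~\ref{lem:partinf}: calling a position \emph{distinguished} when it carries $b$ and the next position carries a letter of $C$, distinguished positions of $w$ biject with positions of $(w)_\frB$, and the analogue of Fact~\ref{fct:defalpha} produces for each $\frb\in\frB$ a formula $\overline{\frb}(x)$ of rank at most $(|B|-1)2^{|\unclos{\Ss_+}|^2}$ asserting $\frb(\hat x)$; substituting $\overline{\frb}(x)$ for $\frb(x)$ in the formula defining $F$, relativizing quantifiers to distinguished positions, and conjoining the constant-rank formula defining $(b^+C^+)^\infty$, yields a formula for $K_F$ of rank at most $|\frB|\cdot2^{|\unclos{\Ts_+}|^2}+(|B|-1)2^{|\unclos{\Ss_+}|^2}+O(1)$. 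Since $|\frB|\le2^{|\unclos{\Ts_+}|}$ and $1\le|\unclos{\Ts_+}|\le|\unclos{\Ss_+}|-1$ with $|\unclos{\Ss_+}|\ge2$ (otherwise $\beta$ would be tame), the same arithmetic as in Lemma~\ref{lem:partinf} bounds this by $|B|\cdot2^{|\unclos{\Ss_+}|^2}$.

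The only genuinely new point, and the one I expect to need the most care, is Step~1: checking that each \iword of $(b^+C^+)^\infty$ really does split uniquely into finite $b^+C^+$-blocks, and that $\gamma$ --- defined through the \emph{infinite} product in $(2^{S_+},2^{S_\infty})$ --- dominates $\beta$, which rests on monotonicity of that infinite product. Everything else is a transcription of the finite-word proof; the one bookkeeping subtlety is that Lemma~\ref{lem:partbc} demands the bound $|B|\cdot2^{|\unclos{\Ss_+}|^2}$ with no trailing $+1$, so I would verify that the $+1$ produced by the inductive call on $\gamma$ is swallowed by the gap between $2^{|\unclos{\Ts_+}|^2+|\unclos{\Ts_+}|}$ and $2^{|\unclos{\Ss_+}|^2}$.
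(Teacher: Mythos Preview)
Your proposal is correct and is precisely the approach the paper intends: the paper's own ``proof'' of Lemma~\ref{lem:partbc} is the single sentence that it ``is a straightforward generalization to \iwords of the proof of Lemma~\ref{lem:partinf} and is left to the reader,'' and you have carried out exactly that generalization, including the two points the paper flags (the choice of $b$ and induction on the index of $\Ss_+$). Your bookkeeping on the rank bound is also right: with $m=|\unclos{\Ts_+}|\le |\unclos{\Ss_+}|-1$ and $|\unclos{\Ss_+}|\ge2$, one has $|\frB|\cdot2^{m^2}+1\le 2^{m^2+m}+1\le 2^{|\unclos{\Ss_+}|^2-2}+1$, which comfortably absorbs the extra $+1$ from the inductive call to Proposition~\ref{prop:pumping-i} as well as the constant-rank overhead for relativization and for defining $(b^+C^+)^\infty$.
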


Lemma~\ref{lem:partcp} is obtained by applying
Proposition~\ref{prop:pumping} to the restriction of $\beta$ to
$C^+$. The proof of Lemma~\ref{lem:partbc} is a straightforward
generalization to \iwords of the proof of Lemma~\ref{lem:partinf} and
is left to the reader (note that this is where our choice of $b$ and
induction on the index of $\Ss_+$ are used).

Let us now explain how to construct the desired \fo-partition of
$C^*(b^+C^+)^\infty$. Consider the following partition $\Kb''$ of
$C^*(b^+C^+)^\infty$,
\[
\Kb'' = \{K' \mid K' \in \Kb'\} \cup \{LK' \mid L \in \Lb \text{ and
  } K' \in \Kb' \}.
\]
It is immediate from the fact that $\Lb$ and $\Kb'$ are
partitions that $\Kb''$ is a partition of
$C^*(b^+C^+)^\infty$. Moreover, it follows from Item~(1) of
Lemmas~\ref{lem:partcp} and~\ref{lem:partbc} and the second item in
the definition of $\Sat_\infty$ that $\Kb''$ satisfies the first item
in Proposition~\ref{prop:pumping-i}: for all $K'' \in \Kb''$,
$\unclos{\beta(K'')} \in \Sat_\infty(\Sat(\Ss_+))$. Finally, that the
second item in Proposition~\ref{prop:pumping-i} holds (\emph{i.e.}, that any
language in $\Kb''$ can be defined by a \fo formula of rank at most
$|B| \cdot 2^{|\unclos{\Ss_+}|^2}+1$) comes from the following fact
(which generalizes Fact~\ref{fct:foconcat} to \iwords).

\begin{fact} \label{fct:foconcati}
Set $k \geq 0$. Let $L_1$ be a language and $L_2$ be an \ilang, each
defined by a first-order formula of rank at most $k$. Then
$L_1L_2$ can be defined by a first-order formula of rank at most
$k+1$.
\end{fact}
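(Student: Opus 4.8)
The plan is to mimic the proof of Fact~\ref{fct:foconcat}, adapting the cut-point argument to the case where the second factor is an \ilang. Since $L_1 \subseteq A^+$ and $L_2 \subseteq A^\infty$, we have $L_1L_2 \subseteq A^\infty$, and an \iword $w$ lies in $L_1L_2$ exactly when it can be split at some position $x$ into the finite prefix consisting of the positions $\le x$, which must lie in $L_1$, and the infinite suffix consisting of the positions $>x$, which must lie in $L_2$. Placing the cut at the \emph{last} position of the prefix (rather than ``between'' two positions) is convenient: the prefix then automatically contains $x$ and is therefore nonempty, matching the convention that languages are subsets of $A^+$, so no extra nonemptiness guard is needed; and the suffix is automatically infinite.

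First I would recall the standard relativization construction. Given a first-order formula $\theta$ and a fresh variable $x$, one builds $\theta^{\le x}$ by replacing inductively each subformula $\exists y\,\sigma$ by $\exists y\,(y \le x \wedge \sigma^{\le x})$ and each $\forall y\,\sigma$ by the dual, leaving atomic formulas unchanged; $\theta^{>x}$ is defined symmetrically with the guard $y > x$. (Here $y \le x$ abbreviates the quantifier-free formula $y < x \vee y = x$.) Since these guards are quantifier-free, an easy induction gives $\mathrm{rk}(\theta^{\le x}) \le \mathrm{rk}(\theta)$ and $\mathrm{rk}(\theta^{>x}) \le \mathrm{rk}(\theta)$, and, for every $w \in A^\infty$ and every position $x$ of $w$: $w \models \theta^{\le x}$ iff $\theta$ holds in the finite word obtained by restricting $w$ to its positions $\le x$, and $w \models \theta^{>x}$ iff $\theta$ holds in the \iword obtained by restricting $w$ to its positions $>x$.

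Then, taking first-order sentences $\varphi_1,\varphi_2$ of quantifier rank at most $k$ with $L_1 = \{u \in A^+ \mid u \models \varphi_1\}$ and $L_2 = \{v \in A^\infty \mid v \models \varphi_2\}$, I would set
\[
  \psi \;=\; \exists x\,\bigl(\varphi_1^{\le x} \wedge \varphi_2^{>x}\bigr).
\]
By the rank bound for relativization, $\mathrm{rk}(\varphi_1^{\le x} \wedge \varphi_2^{>x}) \le k$, hence $\mathrm{rk}(\psi) \le k+1$. For correctness: if $w \models \psi$, a witness $x$ splits $w$ into a nonempty finite prefix satisfying $\varphi_1$ (so in $L_1$) and an \iword suffix satisfying $\varphi_2$ (so in $L_2$), whence $w \in L_1L_2$. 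Conversely, if $w = uv$ with $u \in L_1$ and $v \in L_2$, then taking $x$ to be the last position of the copy of $u$ in $w$ makes $\varphi_1^{\le x} \wedge \varphi_2^{>x}$ true at $x$, so $w \models \psi$. Thus $L_1L_2 = \{w \in A^\infty \mid w \models \psi\}$ is defined by a formula of rank at most $k+1$.

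I do not expect a genuine obstacle. The only points needing care are the routine induction showing that relativization preserves quantifier rank, and the observation --- used to keep the bound at $k+1$ rather than $k+2$ --- that the ``cut at the last prefix position'' convention makes the nonemptiness of the prefix automatic (since $L_1 \subseteq A^+$) and the infiniteness of the suffix automatic, so that no additional quantifier is introduced.
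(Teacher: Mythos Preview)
Your proposal is correct and follows exactly the approach the paper intends: the paper does not give an explicit proof of Fact~\ref{fct:foconcati} but simply says it ``generalizes Fact~\ref{fct:foconcat} to \iwords,'' and your argument is precisely that generalization---the same existential cut-point with relativized subformulas, with the observation that placing the cut at the last prefix position keeps the rank at $k+1$. There is nothing to add.
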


\section{Conclusion}

We have given simple and self-contained proofs that one can decide in
\textsc{Exptime} whether two regular languages of finite or infinite words are
separable by first-order logic. Further, we have obtained an upper bound on the
quantifier rank of an expected separator. We have also described a procedure to
compute, given as input a morphism $\alpha$ into a finite semigroup, a finite
set of \fo-formulas whose associated languages form a partition of $A^+$, and
such that any two \fo-separable languages recognized by $\alpha$ can be
separated by a disjunction of some of these formulas. These formulas are computed
inductively along the correctness proof of our algorithm.

There are some open questions left in this line of research. First, we do not
know if the bounds are tight. We conjecture that the problem is
\textsc{Exptime}-complete starting from semigroups. A related question is the
complexity, starting from NFAs. Our results imply a 2-\textsc{Exptime} upper
bound (for DFAs, checking first-order definability
is~\textsc{Pspace}-complete~\cite{DFA-sf-PSPACE}). Moreover, we do not know
whether the bounds on the quantifier depth and the size of the expected
separator are tight. Finally, it is likely that these techniques can be
extended to other settings without much difficulty, as for finite or infinite
Mazurkiewicz traces. A much more interesting and challenging problem is to
look at separation for tree languages, where, for first-order logic, even
getting a decidable characterization is open despite many recent attempts.

\bibliographystyle{abbrv}

\end{document}